\title{Approximating the Earth Mover's Distance between sets of geometric objects\thanks{Supported by the Netherlands Organisation for Scientific Research (NWO) under project no. 612.001.651.}}
\author{Marc van Kreveld \and
Frank Staals \and
Amir Vaxman \and Jordi Vermeulen}
\date{}
\newtheorem{theorem}{Theorem}
\newtheorem{lemma}[theorem]{Lemma}
\newtheorem{corollary}[theorem]{Corollary}
\DeclareMathOperator*{\argmin}{arg\,min}
\newcommand{\dd}{\mathop{}\,\mathrm{d}}         
\newcommand{\demd}{d_e}                         
\newcommand{\eps}{\varepsilon}
\newcommand{\norm}[1]{\left|#1\right|}          
\DeclareMathOperator{\polylog}{polylog}
\newcommand{\R}{\mathbb{R}}                     
\newcommand{\tp}{\eta}                          
\newcommand{\tpf}{H}                            
\let\originalleft\left
\let\originalright\right
\renewcommand{\left}{\mathopen{}\mathclose\bgroup\originalleft}
\renewcommand{\right}{\aftergroup\egroup\originalright}
\begin{document}

\maketitle

\begin{abstract}
    Given two distributions \(P\) and \(S\) of equal total mass, the Earth Mover's Distance measures the cost of transforming one distribution into the other, where the cost of moving a unit of mass is equal to the distance over which it is moved.

    We give approximation algorithms for the Earth Mover's Distance between various sets of geometric objects.
    We give a \((1 + \eps)\)-approximation when \(P\) is a set of weighted points and \(S\) is a set of line segments, triangles or \(d\)-dimensional simplices.
    When \(P\) and \(S\) are both sets of line segments, sets of triangles or sets of simplices, we give a \((1 + \eps)\)-approximation with a small additive term.
    All algorithms run in time polynomial in the size of \(P\) and \(S\), and actually calculate the transport plan (that is, a specification of how to move the mass), rather than just the cost.
    To our knowledge, these are the first combinatorial algorithms with a provable approximation ratio for the Earth Mover's Distance when the objects are continuous rather than discrete points.
\end{abstract}

\section{Introduction}
The earth mover's distance (EMD) is a metric that is widely used in fields such as image retrieval~\cite{rubner2000}, shape matching~\cite{grauman2004,memoli2009,su2015} and mesh reconstruction~\cite{degoes2011}.
It models two sets \(P\) and \(S\) as distributions of mass, and takes their distance \(\demd(P,S)\) to be the minimum cost of transforming one distribution into the other, where cost is measured by the amount of mass moved multiplied by the distance over which it is moved.
More formally,
\begin{linenomath*}
\begin{equation*}
    \demd(P,S) = \inf_{\tp \in \tpf}{\int_{P}{\int_{S}{d(p,s) \cdot \tp(p,s)\dd p \dd s}}}
\end{equation*}
\end{linenomath*}

\noindent
where \(\tpf\) is the set of all mappings of mass between \(P\) and \(S\) and \(d(\cdot, \cdot)\) is any metric.
In the case where \(P\) and \(S\) are finite sets of (weighted) points, we can rewrite this as
\begin{linenomath*}
\begin{equation*}
    \demd(P,S) = \min_{\tp \in \tpf}{\sum_{p \in P}{\sum_{s \in S}{d(p,s)} \cdot \tp(p,s)}}
\end{equation*}
\end{linenomath*}

\noindent
For unweighted point sets, the solution can be obtained by solving an assignment problem; for weighted point sets, this is an instance of a minimum cost flow problem.

Recently, much attention has been devoted to computing the earth mover's distance when both \(P\) and \(S\) are sets of points~\cite{agarwal2017,fox2022,khesin2021,sharathkumar2012}.
In this paper we expand on this by letting \(P\) and \(S\) be sets of points, line segments, triangles or \(d\)-dimensional simplices in \(\mathbb{R}^d\).
We describe a unified framework for calculating the EMD between points and segments, points and triangles, points and simplices, segments and segments, triangles and triangles and simplices and simplices.
Our approach provides polynomial-time algorithms that give a \((1 + \eps)\)-approximation to the earth mover's distance between \(P\) and \(S\), for some arbitrarily small \(\eps > 0\).
Moreover, our algorithms produce an assignment of mass that realises this cost.
For triangles and simplices, the running time also depends on the largest edge length (note that we normalise the total area/volume of each set to one, so we cannot improve the running time by scaling the input).
When neither set consists of discrete points, there is a small extra additive term in our approximation.
For all our algorithms, our approach is to subdivide the elements of the input into sufficiently small pieces, and then approximate each piece by a point.
The approximate optimal transport plan can then be obtained by solving a transport problem on these points.
Our results are summarised in \cref{tab:results}.
Note that all our algorithms give the solution with high probability; this is simply a consequence of using Fox and Lu's algorithm~\cite{fox2022}~to solve the optimal transport problem on points.
Substituting a deterministic algorithm here would make our results deterministic as well.

To our knowledge, these are the first combinatorial algorithms with a provable approximation ratio for the earth mover's distance when the objects are continuous rather than discrete points.
We give algorithms for moving mass from points to segments (\cref{sec:point-segment}), points to triangles (\cref{sec:point-triangle}), points to simplices (\cref{sec:point-simplex}), segments to segments (\cref{sec:segment-segment}), triangles to triangles (\cref{sec:triangle-triangle}) and simplices to simplices (\cref{sec:simplex-simplex}).

\begin{table}[!b]
    \centering
    \begin{tabular}{lll}
        \toprule
        Objects                 & Running time & Additive term\\
        \midrule
        Points to segments      & \(O\left(\frac{nm}{\eps^c}\polylog{\frac{nm}{\eps}}\right)\) & - \\
        Points to triangles     & \(O\left(\frac{nm}{\eps^c}\polylog{\frac{nm\Delta}{\eps}}\right)\) & - \\
        Points to simplices     & \(O\left(6^dd^2d^dm + \frac{105^dd^2d^{d/2}nm}{\eps^{O(d)}} \log^{O(d)}\left(\frac{dnm\Delta}{\eps^d}\right)\right)\) & - \\
        Segments to segments    & \(O\left(\frac{nm}{\eps^c}\polylog{\frac{nm}{\eps}}\right)\) & \(O\left(\frac{\eps}{nm}\right)\) \\
        Triangles to triangles  & \(O\left(\frac{nm\Delta(n + m)}{\eps^c}\polylog{\frac{nm\Delta}{\eps}}\right)\) & \(O\left(\frac{\eps}{\sqrt{nm}}\right)\) \\
        Simplices to simplices  & \(O\left(\frac{\sqrt{d}(nm)^{1/d}\Delta^d(n+m)}{\eps^{O(d)}}\log^{O(d)}\left(\frac{d(nm)^{1/d}\Delta^d}{\eps}\right)\right)\) & \(O\left(\frac{\sqrt{d}\eps}{(nm)^{1/d}}\right)\) \\
        \bottomrule \\
    \end{tabular}
    \caption{A summary of our results for different choices of sets \(P\) and \(S\) of sizes \(n\) and \(m\).
    \(d\) is the dimension, \(\Delta\) is the largest diameter of any element of the sets.}
    \label{tab:results}
\end{table}

\section{Related work}\label{sec:emd-related-work}
The general problem of optimally moving a distribution of mass was first described by Monge in 1781~\cite{monge1781}, and was reformulated by Kantorovich in 1942~\cite{kantorovich1942}.
It is known as the earth mover's distance due to the analogy of moving piles of dirt around; it is also known as the 1-Wasserstein distance, and is a special case of the more general optimal transport problem.
For a full treatment of the problem's history and connections to other areas of mathematics, the reader is referred to Villani's book~\cite{villani2008}.

The earth mover's distance has been studied in many geometric contexts.
Agarwal et al.~\cite{agarwal2017}~give both exact and approximation algorithms for the case where both sets are points under some \(L_p\) metric.
When both sets are weighted points, Khesin at al.~\cite{khesin2021}~give two algorithms running in \(O(n \eps^{-O(d)} \log(\Lambda)^{O(d)} \log n \log(1/\rho))\) and \(O(n \eps^{-O(d)} \log{(U}^{O(d)} \log(n/\rho)^2)\) time that compute a \((1 + \eps)\)-approximation with probability at least \(1 - \rho\), where \(d\) is the dimension, \(\Lambda\) the aspect ratio of the input, and \(U\) the total mass.
However, their algorithm assumes that the point weights are integers, whereas our weights can be arbitrary real numbers, as they correspond to lengths and areas.
This result was improved by Fox and Lu~\cite{fox2022}. They used a similar method to obtain, with high probability, a \((1 + \eps)\)-approximation in \(O(n \eps^{-O(d)}\log^{O(d)} n)\) time.
The EMD was also studied when the input sets may be transformed: Cabello et al.~\cite{cabello2008} present algorithms that, given two weighted point sets of \(n\) and \(m\) points in \(\R^2\), compute a \((1 + \eps)\)-approximation of a translation that minimises the EMD, and a \((2 + \eps)\)-approximation of a rigid motion that minimises the EMD.
These algorithms run in \(O((n^2/\eps^4)\log^2 n)\) and \(O((n^3m/\eps^4)\log^2 n)\) time, respectively.

For continuous distributions, rather than discrete point sets, many numerical algorithms are known (see e.g.\ De Goes et al.~\cite{degoes2012}, Lavenant et al.~\cite{lavenant2018}, M\'erigot~\cite{merigot2011,merigot2018} and Solomon et al.~\cite{solomon2015}).
For the case where one set consists of weighted points and the other is a bounded set \(C \subset \mathbb{R}^d\), Gei\ss{} et al.~\cite{geiss2013} give a geometric proof that there exists an additively weighted Voronoi diagram such that transporting mass from each point \(p\) to the part of \(C\) contained in its Voronoi cell is optimal.
The weights of this Voronoi diagram can be determined numerically.

De Goes et al.~\cite{degoes2011} discuss a problem similar to our own, but in the context of the reconstruction and simplification of 2D shapes.
Given a set of points, they want to reconstruct a simplicial complex of a given number of vertices that closely represents the shape of the point set.
They start with computing the Delaunay triangulation of the point set, then iteratively collapse the edge that minimises the increase in the EMD between the point set and the triangulation.
They use a variant of the EMD in which the cost is proportional to the square of the distance (2-Wasserstein distance).
This allows them to calculate this variant of the EMD between a given set of points and a given edge of the triangulation exactly, as the squared distance can be decomposed into a normal and a tangential component.
However, they determine the assignment of points to edges heuristically.
In this work, we show how to obtain a \((1 + \eps)\)-approximation to the true optimal solution.

\section{Preliminaries}\label{sec:emd-preliminaries}
We are given a set of points \(P = \{p_1, \dots, p_n\}\) in the plane with weights \(\norm{p_i}\) and a set of geometric objects \(S = \{s_1, \dots, s_m\}\), with lengths, areas or volumes \(\norm{s_j}\).
It is given that \(\sum{\norm{p_i}} = \sum{\norm{s_i}}\).
We assume the mass associated with an object \(s_i\) is distributed uniformly over the object, and that all objects have the same mass density.
For convenience, and without loss of generality, we scale the input such that the total mass in either set is one.
We want to compute a ``transport plan'' of mass from \(P\) to \(S\) that minimises the cost according to the earth mover's distance.
We define for each pair \((p_i, s_j) \in P \times S\) a function \(\tp_{i,j}(x,y)\), that describes the density of mass being moved from \(p_i\) to the point \((x,y) \in s_j\).
All these functions together describe the function \(\tp\) used in the definition of \(\demd(A,B)\).
Such a set of functions needs to satisfy the following conditions to be a valid transport plan:
\begin{linenomath*}
\begin{align*}
    \forall i, j:           & \quad 0 \leq \tp_{i,j}(x,y) \leq 1                                \\
    \forall i:              & \quad \sum_{j=1}^{m}\int_{s_j}{\tp_{i,j}(x,y)\dd t} = \norm{p_i}  \\
    \forall j,(x,y)\in s_j: & \quad \sum_{i=1}^{n}{\tp_{i,j}(x,y)} = 1
\end{align*}
\end{linenomath*}

\noindent
We can then define the cost $\norm{\tp}$ of a given transport plan \(\tp\) as
\begin{linenomath*}
\begin{equation*}
    \norm{\tp} = \sum_{i=1}^{n}{\sum_{j=1}^{m}{\int_{s_j}{\tp_{i, j}(x,y) \cdot d(p_i, (x,y))\dd t}}}
\end{equation*}
\end{linenomath*}

\noindent
where \(d(\cdot,\cdot)\) is any metric.
Our problem is to find a transport plan \(\tp^*\) with minimal cost.

In the following section, we give an exact algorithm to calculate an optimal transport plan between a set of weighted points and line segments when \(d(\cdot,\cdot)\) is the \(L_1\) metric.
However, the approach we use does not seem to generalise to Euclidean distances, objects with areas, or even two sets of segments.
This motivated us to look towards approximation algorithms for more general versions of the EMD problem.
In the rest of this paper, we describe approximation algorithms and only consider the case where \(d(\cdot,\cdot)\) is the \(L_2\) metric.

\section{Points to segments under the \texorpdfstring{\(L_1\)}{L1} metric}
When \(S\) is a set of line segments and distances are measured by the \(L_1\) metric, we can solve the problem exactly by a convex quadratic program.
We first subdivide all segments on the \(x\)- and \(y\)-coordinates of the points; call the set of subdivided segments \(Q = \{q_1, \dots, q_k\}\).
Note that the horizontal and vertical strip induced by each segment is now empty of points, while the axis-aligned quadrants starting at each corner of its bounding box may contain points; see \cref{fig:l1-exact-regions} for an illustration.
Let \(Q_1\) be the set of segments in \(Q\) with slope between \(-1\) and \(1\), and let \(Q_2\) be \(Q \setminus Q_1\).

We can now label the quadrants of points for each segment \(q_j\): let \(X_{j,1}\) and \(X_{j,2}\) be the quadrants to the left of \(q_j\), with \(X_{j,1}\) being the quadrant starting at the leftmost endpoint of \(q_j\), and \(X_{j,2}\) being the other.
Similarly, let \(X_{j,3}\) be the quadrant starting at the rightmost endpoint of \(q_j\), and let \(X_{j,4}\) be the other quadrant on the right.
In case of a horizontal or vertical segment, \(X_{j,2}\) and \(X_{j,4}\) are simply merged into \(X_{j,1}\) and \(X_{j,3}\), and it does not matter if \(X_{j,1}\) is the top or bottom quadrant.

\begin{figure}
    \centering
    \includegraphics{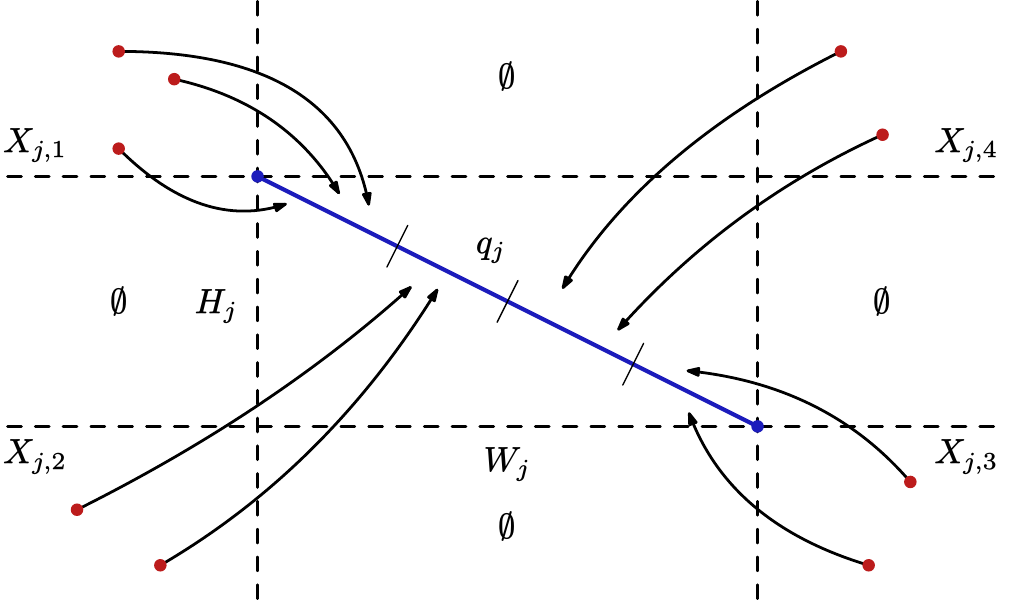}
    \caption{The regions of points for one subsegment \(q_j\), and the parts of the segment they assign their mass to.}
    \label{fig:l1-exact-regions}
\end{figure}

For all points in \(P\), the \(L_1\) distance to any point on the segment \(q_j\) is the same as the distance via one of the corners of the axis-aligned bounding box of \(q_j\).
Therefore, for each quadrant, we can separately consider the cost to reach the bounding box of \(q_j\) with a certain amount of mass, and the cost to spread that mass out over the segment.
Furthermore, the order in which a segment receives mass from the different quadrants in an optimal solution is fixed depending on its slope, see \cref{fig:l1-exact-regions}.
A simple swapping argument shows that the cost of an assignment not following this order can be decreased by making it follow the order.

Let \(u_{i,j}\) be the variable representing the amount of mass moved from \(p_i\) to \(q_j\), let \(d_{i,j}\) be the precomputed distance from \(p_i\) to the bounding box of \(q_j\), let \(W_j\) and \(H_j\) be the width and height of the bounding box of \(q_j\), and let \(w_j\) and \(h_j\) be constants such that \(w_j \cdot \ell\) is the absolute difference in \(x\)-coordinate when moving a distance of \(\ell\) along segment \(q_j\), and \(h_j \cdot \ell\) is the absolute difference in \(y\)-coordinate.
Writing \(\sum_{p_i \in X_{j,1}}{u_{i,j}}\) as \(x_{j,1}\) for convenience, for a given segment \(q_j \in Q_1\) we can write the cost \(c_{j,k}\) for moving the mass from the corner of the region \(X_{j,k}\) to the segment as follows:

\begin{linenomath*}
\begin{align*}
    c_{j,1} &= \tfrac{1}{2}x_{j,1}^2(w_j + h_j) \\
    c_{j,2} &= x_{j,2}(w_jx_{j,1} + w_j\tfrac{1}{2}x_{j,2} + (H_j - h_jx_{j,1} - h_j\tfrac{1}{2}x_{j,2})) \\
            &= x_{j,2}((w_j - h_j)(x_{j,1} + \tfrac{1}{2}x_{j,2}) + H_j) \\
    c_{j,3} &= \tfrac{1}{2}x_{j,3}^2(w_j + h_j) \\
    c_{j,4} &= x_{j,4}(w_jx_{j,3} + w_j\tfrac{1}{2}x_{j,4} + (H_j - h_jx_{j,3} - h_j\tfrac{1}{2}x_{j,4})) \\
            &= x_{j,4}((w_j - h_j)(x_{j,3} + \tfrac{1}{2}x_{j,4}) + H_j) \\
\end{align*}
\end{linenomath*}

\begin{figure}
    \centering
    \includegraphics{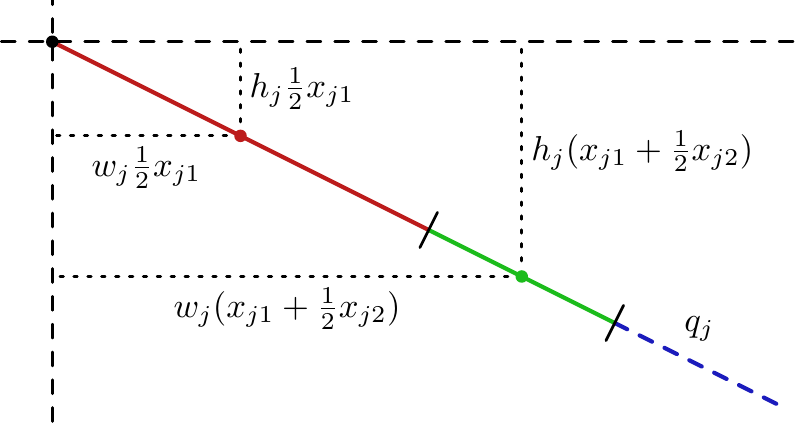}
    \caption{The calculations of the distances to the midpoints of the regions to which the mass from \(X_{j,1}\) and \(X_{j,2}\) will be assigned for a segment with slope between \(-1\) and \(1\).}
    \label{fig:l1-costs}
\end{figure}

Here we use the fact that under the \(L_1\) distance, the cost of sending mass to some connected region of a segment is the same as the cost of sending everything to the midpoint of the connected region; see \cref{fig:l1-costs} for an illustration of the calculations of the distances to these midpoints.
Note that we omit the cost of sending the mass from the points to the corners of the bounding box; this will be accounted for later.
Further note that \(w_j - h_j\) is always positive here.
Symmetrically, the costs \(c_k'(q_j)\) for a given segment \(q_j \in Q_2\) are as follows:

\begin{linenomath*}
\begin{align*}
    c'_{j,1} &= \tfrac{1}{2}x_{j,1}^2(w_j + h_j) \\
    c'_{j,2} &= x_{j,2}((h_j - w_j)(x_{j,3} + \tfrac{1}{2}x_{j,2}) + W_j) \\
    c'_{j,3} &= \tfrac{1}{2}x_{j,3}^2(w_j + h_j) \\
    c'_{j,4} &= x_{j,4}((h_j - w_j)(x_{j,1} + \tfrac{1}{2}x_{j,4}) + W_j)
\end{align*}
\end{linenomath*}

Note that \(h_j - w_j\) is always positive here.
We can now formalise our problem as follows:

\begin{linenomath*}
\begin{equation*}
    \tp^* = \argmin_{\mathbf{u}} \left(\sum_{j}{\sum_{i}{d_{i,j} \cdot u_{i,j}}}\right) + \sum_{k}{\left(\sum_{q_j \in Q_1}{c_{j,k}} + \sum_{q_j \in Q_2}{c'_{j,k}}\right)}
\end{equation*}
\end{linenomath*}
\begin{linenomath*}
\begin{equation*}
    \begin{alignedat}{3}
        & \text{subject to} \quad && u_{i,j} \geq 0                          && \qquad\qquad\quad\forall i, j \\
        &                         && \sum_{j}{u_{i,j}} = \norm{p_i}          && \qquad\qquad\quad\forall i    \\
        &                         && \sum_{i}{u_{i,j}} = \norm{q_j}          && \qquad\qquad\quad\forall j    \\
    \end{alignedat}
\end{equation*}
\end{linenomath*}

Since all \(d_{i,j}\), \(c_{j,k}\) and \(c'_{j,k}\) are non-negative, this is a sum of convex quadratic functions, giving a quadratic program with a convex objective function.

\begin{theorem}\label{thm:exact-algorithm}
    Let \(P\) be a set of \(n\) weighted points and \(S\) be a set of \(m\) line segments with equal total weight.
    It is possible to construct an exact optimal transport plan between \(P\) and \(S\) under the \(L_1\) metric by solving a convex quadratic program.
\end{theorem}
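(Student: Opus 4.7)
The plan is to verify that the construction developed in the preceding discussion correctly models the $L_1$ earth mover's distance and that the resulting program is convex. Most of the geometric reasoning is already in place; what remains is to justify each modelling step carefully and to confirm convexity of the objective.

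First, I would confirm the structural properties of the subdivision. Subdividing every segment at the $x$- and $y$-coordinates of the points in $P$ guarantees that each sub-segment $q_j \in Q$ has empty horizontal and vertical strips with respect to $P$. Consequently every point $p_i$ lies in one of the four axis-aligned quadrants $X_{j,1},\dots,X_{j,4}$ based at the corners of the bounding box of $q_j$, and for every $(x,y)\in q_j$ the $L_1$ distance $d(p_i,(x,y))$ splits cleanly into $d_{i,j}$ plus the $L_1$ distance from the corner of that quadrant to $(x,y)$. This legitimates the separation of the linear ``point-to-corner'' cost $\sum_{i,j} d_{i,j}u_{i,j}$ from the ``corner-to-segment'' cost, which depends only on the aggregated masses $x_{j,k}=\sum_{p_i \in X_{j,k}} u_{i,j}$.

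Next, I would justify the fixed ordering in which the four quadrants populate $q_j$ via a standard swapping argument: if two infinitesimal slabs of mass from different quadrants were placed on $q_j$ in the reverse of the canonical order shown in \cref{fig:l1-exact-regions}, exchanging them would strictly decrease the $L_1$ cost, with the sign of the slope of $q_j$ determining which ordering ($Q_1$-type or $Q_2$-type) is optimal. Once this order is fixed, the $L_1$ cost of sending $x_{j,k}$ units of mass from the corner to its assigned sub-interval of $q_j$ reduces to $x_{j,k}$ times the $L_1$ distance from that corner to the midpoint of the sub-interval, which yields the closed-form expressions $c_{j,k}$ and $c'_{j,k}$.

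The main obstacle will be verifying convexity of the resulting quadratic objective, because individual cross-terms such as $(w_j-h_j)x_{j,1}x_{j,2}$ are indefinite on their own. I would group the costs by sub-segment and by side of the bounding box; for $q_j \in Q_1$, a direct $2\times 2$ computation shows that the Hessian of $c_{j,1}+c_{j,2}$ with respect to $(x_{j,1},x_{j,2})$ has non-negative trace and non-negative determinant, the latter holding precisely because the slope of $q_j$ lies in $[-1,1]$ and hence $w_j \ge h_j \ge 0$. Symmetric computations handle $c_{j,3}+c_{j,4}$ and the primed costs for $q_j \in Q_2$. Since the Hessian blocks for different sub-segments act on disjoint variable groups and the remaining terms of the objective are linear, the total Hessian is block-diagonal with positive semidefinite blocks. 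Combined with the affine equality and non-negativity constraints, this yields a convex quadratic program; any polynomial-time convex QP solver then returns the global optimum $\mathbf{u}^*$, from which the exact optimal transport plan $\tp^*$ follows.
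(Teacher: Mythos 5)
Your proposal is correct and follows the paper's construction step for step: the same subdivision at the point coordinates, the same corner decomposition of the $L_1$ distance into a linear point-to-corner part and a quadratic corner-to-segment part, the same swapping argument fixing the order in which the quadrants populate each subsegment, and the same quadratic program. The one place where you genuinely diverge is the convexity verification, and your version is the stronger one: the paper asserts convexity from the non-negativity of $d_{i,j}$, $c_{j,k}$ and $c'_{j,k}$ alone, which does not by itself suffice, since the cross term $(w_j - h_j)x_{j,1}x_{j,2}$ makes $c_{j,2}$ indefinite in isolation. Your grouping of $c_{j,1}+c_{j,2}$ gives a Hessian with diagonal entries $w_j+h_j$ and $w_j-h_j$ and off-diagonal entries $w_j-h_j$, whose trace $2w_j$ and determinant $2h_j(w_j-h_j)$ are non-negative precisely because $w_j \ge h_j \ge 0$ for $q_j \in Q_1$; since each such positive semidefinite block is composed with a linear map of the $u_{i,j}$ and distinct blocks act on disjoint groups of variables, the objective is indeed convex. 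One small correction: for $q_j \in Q_2$ the correct pairing is $c'_{j,2}$ with $c'_{j,3}$ and $c'_{j,4}$ with $c'_{j,1}$, since $c'_{j,2}$ couples $x_{j,2}$ to $x_{j,3}$ rather than to $x_{j,1}$; the computation is otherwise symmetric with the roles of $w_j$ and $h_j$ exchanged.
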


When all the weights in our objective function and constraints are integers, a convex quadratic program can be solved in weakly polynomial time, see e.g.~\cite{goldfarb1990qp,kozlov1980,monteiro1989qp}.
In our case, some of the weights may be real numbers.
In particular, \(w_j\) and \(h_j\) may be square roots of rational numbers.
It is not clear if such a program can be solved in polynomial time.

As square roots appear in many geometric settings, we are typically happy to assume a model of computation in which we can perform elementary operations on arbitrary real numbers in constant time.
However, even in such a model of computation, the typical methods (cited above) for solving convex quadratic programs in polynomial time may fail.
These methods generally rely on approximately solving a series of quadratic programs with increasing precision, and then argue that when the precision is high enough, the approximate solution can be rounded to the globally optimal solution.
The argument that such rounding works eventually relies on the input being integral.

This problem can be addressed in several ways.
First, we can employ different methods for solving the quadratic program, such as the simplex algorithm.
This method takes exponential time in the worst case, but has been shown to be polynomial in practice through smoothed analysis~\cite{dadush2020smoothed}.
Second, we can forego an exact algorithm and obtain a \((1 + \eps)\)-approximation by simply rounding the square roots in our program with enough precision.
Given the value of \(\eps\), it suffices to simply approximate the values such that the ratio of the rounded value to the original is at most \((1 + \eps)\).
Third, we could apply the \(L_1\) metric not only to distances, but also to the length of each segment.
If we define the length of a segment to be equal to the \(L_1\) distance between its endpoints, the equations for \(c_{j,k}\) and \(c'_{j,k}\) simplify significantly, and the square roots disappear.
Our entire program can then be made to have integer coefficients by simply requiring all points in \(P\) and endpoints of segments of \(S\) to lie on the integer lattice.
Note that this solution also applies if we restrict ourselves to classes of segments for which no square roots show up in our program, such as segments that are axis-aligned.

It may seem that our exact algorithm only runs in provably polynomial time in quite restricted cases.
However, it is worth noting here that it is not clear that an exact algorithm for the \(L_2\) case even exists, let alone one that runs in polynomial time.

\section{Points to segments}\label{sec:point-segment}
We now describe a polynomial-time algorithm that finds a transport plan with a cost that is at most \(1 + \eps\) times the cost of the optimal transport plan when one set consists of points with total weight one and the other of line segments with total length one.
The main idea is to reduce our instance to a transport problem on two weighted sets of points.
Our strategy is as follows: we subdivide each segment such that for each subsegment \(s'\) the ratio of the distance to the closest and furthest point on \(s'\) for every \(p_i \in P\) is at most \(1 + \delta\) for some appropriate choice of \(\delta \in O(\eps)\).
We then approximate a minimum cost flow problem on a bipartite graph between \(P\) and the subdivided segments, where the cost of any edge is equal to the shortest distance between a point and a subsegment.
Finally, we use the solution to this flow problem to build a discrete transport plan.
For an appropriate choice of \(\delta\), this gives a \((1 + \eps)\)-approximation.

The naive approach to subdividing the segments would be to make all the pieces some equal, appropriately small length.
However, we can reduce the number of subsegments required by subdividing the segments as follows\footnote{This reduces the total number of subsegments required from \(O(nm/\eps^2)\) to \(O(\frac{nm}{\eps}\log{\frac{1}{\eps}})\).}.
We repeatedly perform the following procedure for each subsegment.
If there exists a point in \(P\) such that the entire subsegment lies within distance \(\delta / nm\) of that point, do nothing.
Otherwise, if there is a point in \(P\) for which the ratio of the longest and shortest distance between that point and the current subsegment is more than \(1 + \delta\), cut the subsegment in half.
Call the resulting set of subsegments \(Q\); see \cref{fig:segment-subdivision-example} for an example.

\begin{figure}
    \centering
    \includegraphics{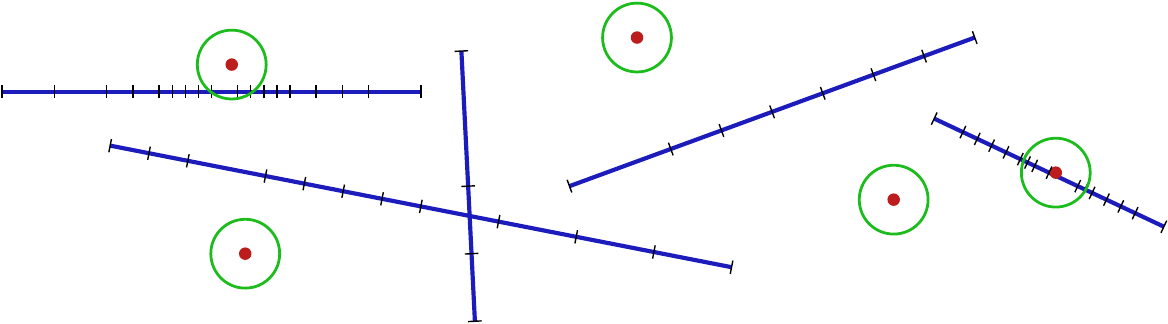}
    \caption{An example subdivision of a set of segments.
    Small perpendicular line segments delimit the generated subsegments.
    A green circle denotes the distance of \(\delta / nm\) from each point.
    Note that \(\eps\) is set to a very large value here for the clarity of the resulting image.}
    \label{fig:segment-subdivision-example}
\end{figure}

We now define a complete bipartite graph \(G = (P \cup Q, P \times Q)\), with edges between each point-subsegment pair (note that this graph is used for analysis only; our algorithm does not construct it).
The cost of each edge will simply be the shortest distance between the point and segment it connects.
A solution to a flow problem in \(G\) can be transformed into a transport plan by assigning a piece of subsegment to a point with length equal to the amount of flow along the corresponding edge.
We will show that the EMD between \(P\) and \(S\) is approximated by the cost of any transport plan derived from a minimum cost flow in \(G\).

First note the following general lower bound on the cost of an optimal solution:

\begin{lemma}\label{lem:lower-bound-flow}
    The earth mover's distance \(\norm{\tp^*}\) between \(P\) and \(Q\) is bounded from below by the cost \(\norm{\mathcal{W}}\) of a minimum cost flow \(\mathcal{W}\) in \(G\).
\end{lemma}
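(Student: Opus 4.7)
The plan is to exhibit a cost-preserving (in one direction) reduction from transport plans on $P \cup Q$ to flows in $G$. Concretely, starting from any valid transport plan $\eta$ from $P$ to $Q$, I will construct a flow $f$ in $G$ whose cost is no larger than $\|\eta\|$, and then apply this to $\eta = \eta^*$ to conclude $\|\mathcal{W}\| \leq \|\eta^*\|$.

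Given $\eta$, for each edge $(p_i, q_j)$ of $G$ define
\begin{equation*}
    f(p_i, q_j) = \int_{q_j} \eta_{i,j}(x,y) \dd t.
\end{equation*}
The first step is to verify that $f$ is a valid flow. Summing $f(p_i, q_j)$ over $j$ gives $\sum_j \int_{q_j} \eta_{i,j} \dd t = \norm{p_i}$ by the mass-conservation constraint at $p_i$; summing over $i$ gives $\int_{q_j} \sum_i \eta_{i,j}(x,y) \dd t = \int_{q_j} 1 \dd t = \norm{q_j}$ by the mass-conservation constraint on $q_j$. Non-negativity is immediate from $\eta_{i,j} \geq 0$. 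Hence $f$ respects the supplies $\norm{p_i}$ and demands $\norm{q_j}$, so it is a feasible flow in $G$.

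The second step is to compare costs. Let $\hat d(p_i, q_j) = \min_{(x,y) \in q_j} d(p_i, (x,y))$ denote the edge weight in $G$. For every point $(x,y) \in q_j$ we have $d(p_i, (x,y)) \geq \hat d(p_i, q_j)$, and therefore
\begin{equation*}
    \int_{q_j} \eta_{i,j}(x,y) \cdot d(p_i,(x,y)) \dd t \;\geq\; \hat d(p_i, q_j) \cdot \int_{q_j} \eta_{i,j}(x,y) \dd t \;=\; \hat d(p_i, q_j) \cdot f(p_i, q_j).
\end{equation*}
Summing over all pairs $(i,j)$ gives $\|\eta\| \geq \sum_{i,j} \hat d(p_i, q_j) \cdot f(p_i, q_j)$, which is exactly the cost of the flow $f$ in $G$. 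Taking $\eta = \eta^*$ and observing that the minimum cost flow $\mathcal{W}$ has cost no larger than that of any feasible flow, we obtain $\|\mathcal{W}\| \leq \text{cost}(f) \leq \|\eta^*\|$, as desired.

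I do not expect a genuine obstacle here: the statement is essentially definitional, since $\hat d(p_i, q_j)$ is by construction the cheapest way to get one unit of mass from $p_i$ to anywhere on $q_j$, so collapsing the continuous distribution within each $q_j$ to the single closest-point cost can only decrease the objective. The only mild subtlety is confirming that $f$ is a bona fide feasible flow, which reduces to integrating the two mass-conservation constraints defining a transport plan.
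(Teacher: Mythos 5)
Your proof is correct and follows essentially the same route as the paper: the paper also converts the optimal transport plan into a feasible flow by redirecting each point's mass to the closest point on each subsegment, which is exactly your replacement of $d(p_i,(x,y))$ by the edge weight $\min_{(x,y)\in q_j} d(p_i,(x,y))$. Your version merely makes the flow construction and feasibility check more explicit.
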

\begin{proof}
    Consider any transport plan \(\tp^*\) that minimises the earth mover's distance.
    If, for each point \(p_i\) that moves mass to some segment \(q_j\), we modify the transport plan such that the mass is moved only to the point on \(q_j\) closest to \(p_i\), we obtain a plan \(\lambda\) with cost \(\norm{\lambda} \leq \norm{\tp^*}\).
    Such a plan is a solution to a flow problem in \(G\), as it moves all available mass.
    It follows that the cost \(\norm{\mathcal{W}}\) of a minimum cost flow \(\mathcal{W}\) in \(G\) satisfies \(\norm{\mathcal{W}} \leq \norm{\tp^*}\).
    \hfill \qed
\end{proof}

We also note the following lower bound on the value of \(\norm{\mathcal{W}}\):

\begin{lemma}\label{lem:lower-bound-subdivision-point-segment}
    \(\norm{\mathcal{W}} \geq \dfrac{\delta - 2\delta^2 - 2\delta^3}{nm}\).
\end{lemma}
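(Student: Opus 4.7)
The plan is to lower-bound $\norm{\mathcal{W}}$ via a case analysis on which termination condition of the subdivision procedure each subsegment of $Q$ satisfies. I would partition $Q$ into a set $Q_a$ of subsegments that terminated under condition (a)---those entirely within Euclidean distance $\delta/(nm)$ of some point in $P$---and a set $Q_b = Q \setminus Q_a$. Each $q_j \in Q_b$ must satisfy condition (b): for every $p_i \in P$, the ratio of the farthest to closest distance from $p_i$ to $q_j$ is at most $1 + \delta$, since otherwise the procedure would have cut $q_j$ once more. Let $d_{\min}(p_i, q_j)$ and $d_{\max}(p_i, q_j)$ denote these two distances.

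Next, I would bound the total length of $Q_a$. For each $p_i$, its disk of radius $\delta/(nm)$ intersects each of the $m$ input segments in a chord of length at most $2\delta/(nm)$, so the total length of the input segments covered by the union of these $n$ disks is at most $2\delta$. Every subsegment in $Q_a$ lies inside some such disk, giving $\sum_{q_j \in Q_a} \norm{q_j} \leq 2\delta$, and hence $\sum_{q_j \in Q_b} \norm{q_j} \geq 1 - 2\delta$ by conservation of mass.

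Then I would lower-bound $d_{\min}(p_i, q_j)$ for every $q_j \in Q_b$ and every $p_i$. Since (a) fails on $q_j$, we have $d_{\max}(p_i, q_j) > \delta/(nm)$ for every $p_i$; combined with (b) this forces
\[
    d_{\min}(p_i, q_j) \;\geq\; \frac{d_{\max}(p_i, q_j)}{1+\delta} \;>\; \frac{\delta}{(1+\delta)\,nm}.
\]
Any feasible flow delivers $\norm{q_j}$ units of mass to $q_j$ at per-unit cost strictly greater than $\delta/((1+\delta)nm)$, independent of which point supplies the mass. Summing the contributions of the subsegments in $Q_b$,
\[
    \norm{\mathcal{W}} \;\geq\; (1-2\delta)\,\frac{\delta}{(1+\delta)\,nm},
\]
and a short algebraic simplification rewrites this in the polynomial form $(\delta - 2\delta^2 - 2\delta^3)/(nm)$ stated in the lemma.

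The main obstacle I anticipate is this last algebraic step: the constants in the stated polynomial are sharper than what a naive expansion of $1/(1+\delta)$ yields, so one either has to use a tailored polynomial bound on $1/(1+\delta)$ that is valid for sufficiently small $\delta$, or slightly tighten the geometric length bound on $Q_a$ (for instance by accounting for the fact that subsegments inside two overlapping disks are counted only once). The geometric ingredients---the chord bound and the ratio-based distance bound---are straightforward once the classification of $Q$ by termination condition is in place.
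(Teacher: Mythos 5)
Your overall strategy---discard a short portion of $Q$ that is close to $P$, then charge every remaining unit of mass a minimum transport distance---matches the paper's, but the final step fails, and you have correctly sensed where. From ``condition (a) fails on $q_j$'' you only obtain $d_{\max}(p_i,q_j) > \delta/(nm)$, and condition (b) then gives $d_{\min}(p_i,q_j) > \delta/((1+\delta)nm)$, so what you actually prove is
\[
\norm{\mathcal{W}} \;\geq\; \frac{(1-2\delta)\,\delta}{(1+\delta)\,nm}.
\]
No algebraic simplification turns this into $(\delta-2\delta^2-2\delta^3)/(nm)$: since $(1-2\delta)/(1+\delta) = 1-3\delta+O(\delta^2)$ while $1-2\delta-2\delta^2 = 1-2\delta+O(\delta^2)$, your bound is strictly \emph{smaller} than the lemma's for all sufficiently small $\delta$ (at $\delta=0.1$ it is about $0.727\,\delta/(nm)$ versus $0.78\,\delta/(nm)$), which is precisely the regime that matters. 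As written, the proposal establishes a weaker inequality, not the lemma.

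The fix is to partition $Q$ not by termination condition but by whether $d_{\min}(q_j,P) \geq \delta/(nm)$, which is what the paper does. Any subsegment with $d_{\min}(q_j,p) < \delta/(nm)$ for some $p$ has \emph{all} of its points within distance $(1+\delta)\delta/(nm)$ of that $p$: for condition-(b) subsegments this is the ratio bound applied to the near point, and for condition-(a) subsegments the radius $\delta/(nm)$ of the covering point already suffices. Hence all such subsegments lie in chords of length at most $2(1+\delta)\delta/(nm)$ per point--segment pair, for total length at most $2\delta+2\delta^2$. This is the ``tightened'' bookkeeping you were looking for, except the excluded length gets slightly \emph{larger} ($2\delta+2\delta^2$ instead of your $2\delta$) in exchange for the clean per-unit distance $\delta/(nm)$ on the remainder; the trade yields $(1-2\delta-2\delta^2)\cdot\delta/(nm)$, which is exactly the stated bound. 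Your weaker bound would likely still carry the downstream approximation argument with adjusted constants, but it does not prove the lemma as stated.
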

\begin{proof}
    For a given point-segment pair \((p, s) \in P \times S\), consider the segments in \(Q\) derived from \(s\) that have a point within distance \(\delta / nm\) of \(p\).
    By construction, such a segment has its furthest point at distance at most \((1 + \delta) \cdot \delta / nm = \delta / nm + \delta^2 / nm\) to \(p\).
    Therefore, the total length of these segments is at most \(2(\delta / nm + \delta^2 / nm)\) for a given \(p\) and \(s\).
    Over all point-segment pairs, this gives a total length of at most \(2\delta + 2\delta^2\).
    This means the total length of segments in \(Q\) with distance to the closest point in \(P\) at least \(\delta / nm\) is at least \(1 - 2\delta - 2\delta^2\).
    The cost of a minimum flow in \(\mathcal{W}\) is therefore at least \((1 - 2\delta - 2\delta^2) \cdot \delta / nm = (\delta - 2\delta^2 - 2\delta^3) / nm\).
    \hfill \qed
\end{proof}

We calculate a transport plan \(\tp\) between \(P\) and \(Q\) as follows.
First, we approximate each segment \(q \in Q\) by a point somewhere on that segment with weight equal to the length of \(q\); call this set of points \(T\).
We obtain \(\tp\) by calculating an optimal transport plan \(\nu\) between \(P\) and \(T\), and then spreading the mass sent to each point \(t \in T\) evenly over the segment in \(Q\) that point was derived from.
We now bound the cost of \(\tp\) in terms of \(\norm{\mathcal{W}}\):

\begin{lemma}\label{lem:bound-discretised-point-segment}
    \(\norm{\mathcal{W}} \leq \norm{\tp} \leq (1 + \delta)^2 \norm{\mathcal{W}} + \dfrac{4\delta^2}{nm} + \dfrac{2\delta^3}{nm}\).
\end{lemma}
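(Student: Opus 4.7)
For the lower bound, I would observe that $\tp$ aggregates into a feasible flow in $G$: setting $f_{ij}$ to the total mass $\tp$ transports from $p_i$ into $q_j$ yields a valid flow (all supply and demand constraints are inherited from $\tp$), and because $d(p_i,x) \geq d(p_i,q_j)$ for every $x \in q_j$, the true cost satisfies $\norm{\tp} \geq \sum_{ij} f_{ij}\, d(p_i,q_j) \geq \norm{\mathcal{W}}$ by minimality of $\mathcal{W}$.

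The upper bound is the main work, and I would chain two comparisons through the intermediate cost $\norm{\nu}$. Throughout, I partition the subsegments of $Q$ into \emph{small} ones, which stopped being subdivided because they lie within distance $\delta/nm$ of some point of $P$ and therefore have diameter at most $2\delta/nm$, and \emph{regular} ones, for which every $p_i$ has farthest-to-closest distance ratio at most $1+\delta$ on the subsegment. The proof of \cref{lem:lower-bound-subdivision-point-segment} already establishes that the total length of small subsegments is at most $2\delta + 2\delta^2$, which will quantify the additive slack.

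The first comparison bounds $\norm{\nu}$ against $\norm{\mathcal{W}}$. Since the flow values of $\mathcal{W}$, reinterpreted as a transport plan from $P$ to $T$, are feasible (each $t_j$ carries weight $\norm{q_j}$ by construction), optimality of $\nu$ gives $\norm{\nu} \leq \sum_{ij} \mathcal{W}_{ij}\, d(p_i,t_j)$. Splitting the sum by type, $d(p_i,t_j) \leq (1+\delta) d(p_i,q_j)$ on regular subsegments yields the multiplicative blow-up, while $d(p_i,t_j) \leq d(p_i,q_j) + 2\delta/nm$ on small subsegments yields an additive term paid only on mass reaching small subsegments, and hence controlled by the length bound. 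The second comparison bounds $\norm{\tp}$ against $\norm{\nu}$ in exactly the same spirit: for any $x$ in the uniformly spread piece on $q_j$, either $d(p_i,x) \leq (1+\delta) d(p_i,q_j) \leq (1+\delta) d(p_i,t_j)$ on a regular subsegment, or $d(p_i,x) \leq d(p_i,t_j) + 2\delta/nm$ on a small one, using the diameter bound.

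Multiplying the two estimates produces the $(1+\delta)^2 \norm{\mathcal{W}}$ factor. The main obstacle will be arranging the arithmetic so that the additive terms compose into the tight constants $\tfrac{4\delta^2}{nm} + \tfrac{2\delta^3}{nm}$ rather than acquiring a spurious $(1+\delta)$ factor from the naive composition of the two steps. I expect this to be handled by merging the two comparisons into a single uniform bound of the form $d(p_i,x) \leq (1+\delta) d(p_i,q_j) + 2\delta/nm$ valid on both subsegment types, integrating directly against the mass distribution of $\tp$, and paying the additive slack only on the small-subsegment mass, together with the length bound from \cref{lem:lower-bound-subdivision-point-segment}.
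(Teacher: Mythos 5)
Your route is essentially the paper's: the lower bound because every distance used in \(\tp\) is at least the closest-point distance used in \(\mathcal{W}\), and the upper bound via the two-step chain \(\norm{\nu} \leq (1+\delta)\norm{\mathcal{W}} + (\text{additive})\) followed by \(\norm{\tp} \leq (1+\delta)\norm{\nu} + (\text{additive})\), with the multiplicative loss on ratio-bounded subsegments and a separate charge for the subsegments that stopped subdividing because they lie within \(\delta/nm\) of a point. The only divergence is the bookkeeping of that charge. The paper treats the small-subsegment mass (at most \(2\delta\)) as travelling an absolute distance of at most \(\delta/nm\) in each step, contributing \(2\delta^2/nm\) per step and composing to exactly \(4\delta^2/nm + 2\delta^3/nm\); you instead pay an additive correction of \(2\delta/nm\) (the diameter) per unit of small-subsegment mass on top of its base cost, which gives \(4\delta^2/nm\) per step and hence \(8\delta^2/nm + 4\delta^3/nm\) after composing. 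Your version is arguably the more defensible one (a small subsegment may receive its mass from a point other than the one certifying its smallness, in which case neither the ratio bound nor the ``distance at most \(\delta/nm\)'' charge applies, but the diameter correction always does), but it does not reproduce the stated constants. Your proposed repair --- merging the two comparisons into a single bound \(d(p_i,x) \leq (1+\delta)d(p_i,q_j) + 2\delta/nm\) applied directly against \(\mathcal{W}\) --- does not bound \(\norm{\tp}\) as defined, because \(\tp\) is obtained by spreading out the \emph{optimal} plan \(\nu\) between \(P\) and \(T\), not by spreading out \(\mathcal{W}\); the optimality of \(\nu\) relative to the reinterpreted flow is precisely the link between the two steps and cannot be bypassed. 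None of this matters downstream: \cref{thm:point-segment-approximation} only needs an additive term that is \(O(\delta^2/nm)\), so your constants would merely yield a slightly larger constant than \(17\).
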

\begin{proof}
    We first bound the cost of \(\nu\).
    In \(\mathcal{W}\), we measured all the distances to the closest point on each subsegment.
    Imagine that we picked all the points in \(T\) to be the furthest point on the subsegment.
    For the subsegments with furthest distance to a point in \(P\) of at least \(\delta / nm\), the ratio of these distances is at most \(1 + \delta\) by construction.
    We can therefore bound all the parts of \(\nu\) where the furthest distance to a point in \(P\) is at least \(\delta / nm\) by \((1 + \delta)\norm{\mathcal{W}}\).
    The total mass being moved over distance at most \(\delta / nm\) in \(\nu\) is at most \(2\delta\), giving a cost of at most \(2\delta^2 / nm\).
    The total cost of \(\nu\) is therefore at most \((1 + \delta)\norm{\mathcal{W}} + 2\delta^2 / nm\).

    Now consider the extra cost incurred when transforming \(\nu\) into \(\tp\) by spreading the mass out evenly over all the segments.
    We can use the same argument as before: for the parts of \(\nu\) with distance to a point in \(P\) of at least \(\delta / nm\), the cost increases by a factor of at most \(1 + \delta\), and the total cost of the part within distance \(\delta / nm\) is at most \(2\delta^2 / nm\).
    We can therefore bound the cost of \(\tp\) by \((1 + \delta)\norm{\nu} + 2\delta^2 / nm\).

    We now obtain the upper bound stated in the lemma by plugging the bound on \(\nu\) into the bound on \(\tp\).
    The lower bound follows directly from the fact that none of the distances in \(\tp\) are smaller than the distances between the same objects in \(\mathcal{W}\).
    \hfill \qed
\end{proof}

We now show that \(\norm{\tp}\) approximates \(\norm{\tp^*}\).

\begin{theorem}\label{thm:point-segment-approximation}
    \(\norm{\tp}\) is a \((1 + 17\delta)\)-approximation to the earth mover's distance \(\norm{\tp^*}\) between \(P\) and \(S\) for \(0 < \delta \leq \frac{1}{4}\).
\end{theorem}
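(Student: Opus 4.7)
The plan is to combine the three preceding lemmas and then chase constants, using the assumption $\delta \leq 1/4$ to simplify the arithmetic. First, I would chain \cref{lem:lower-bound-flow} and \cref{lem:bound-discretised-point-segment} to write
\begin{linenomath*}
\begin{equation*}
    \norm{\tp} \;\leq\; (1+\delta)^2\, \norm{\mathcal{W}} + \frac{4\delta^2 + 2\delta^3}{nm}
              \;\leq\; (1+\delta)^2\, \norm{\tp^*} + \frac{4\delta^2 + 2\delta^3}{nm},
\end{equation*}
\end{linenomath*}
so the only thing left is to absorb the additive error term into a multiplicative error on $\norm{\tp^*}$.

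Second, I would use the lower bound on $\norm{\mathcal{W}}$ from \cref{lem:lower-bound-subdivision-point-segment}, again combined with \cref{lem:lower-bound-flow}, to lower bound $\norm{\tp^*}$ by $(\delta - 2\delta^2 - 2\delta^3)/(nm)$. Rearranging gives
\begin{linenomath*}
\begin{equation*}
    \frac{1}{nm} \;\leq\; \frac{\norm{\tp^*}}{\delta - 2\delta^2 - 2\delta^3},
\end{equation*}
\end{linenomath*}
and so the additive term satisfies
\begin{linenomath*}
\begin{equation*}
    \frac{4\delta^2 + 2\delta^3}{nm}
    \;\leq\; \frac{\delta\,(4 + 2\delta)}{1 - 2\delta - 2\delta^2}\, \norm{\tp^*}.
\end{equation*}
\end{linenomath*}
For $\delta \leq 1/4$, the numerator of the coefficient of $\norm{\tp^*}$ is at most $4.5\delta$ and the denominator is at least $3/8$, so the coefficient is at most $12\delta$.

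Third, putting the two bounds together gives
\begin{linenomath*}
\begin{equation*}
    \norm{\tp} \;\leq\; \bigl((1+\delta)^2 + 12\delta\bigr) \norm{\tp^*}
              \;=\; \bigl(1 + 14\delta + \delta^2\bigr) \norm{\tp^*}.
\end{equation*}
\end{linenomath*}
Since $\delta \leq 1/4$ implies $\delta^2 \leq \delta/4$, the right-hand side is at most $(1 + 17\delta)\norm{\tp^*}$, which is the desired approximation factor.

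The proof is almost entirely bookkeeping: there is no real geometric obstacle, since all the structural work was already done in \cref{lem:lower-bound-flow,lem:lower-bound-subdivision-point-segment,lem:bound-discretised-point-segment}. The only mild subtlety is that the additive error $O(\delta^2/nm)$ would, without a lower bound on $\norm{\tp^*}$, not obviously be absorbable into a multiplicative factor; this is exactly why \cref{lem:lower-bound-subdivision-point-segment} was formulated in the way it was. The constant $17$ has some slack built in, so the verification of $\delta \leq 1/4$ does not need to be tight.
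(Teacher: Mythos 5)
Your proposal is correct and follows essentially the same route as the paper's proof: both chain \cref{lem:lower-bound-flow}, \cref{lem:lower-bound-subdivision-point-segment} and \cref{lem:bound-discretised-point-segment}, and absorb the additive \(O(\delta^2/nm)\) term into a multiplicative factor using the lower bound \(\norm{\mathcal{W}} \geq (\delta - 2\delta^2 - 2\delta^3)/nm\). The only difference is bookkeeping --- the paper bounds the ratio \(\norm{\tp}/\norm{\mathcal{W}}\) directly and invokes \(\norm{\mathcal{W}} \leq \norm{\tp^*}\) at the end, whereas you substitute \(\norm{\tp^*}\) earlier --- and your constants check out.
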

\begin{proof}
    By \cref{lem:bound-discretised-point-segment}, we know that
    \begin{linenomath*}
    \begin{equation*}
        \norm{\tp} \leq (1 + \delta)^2\norm{\mathcal{W}} + \frac{4\delta^2}{nm} + \frac{2\delta^3}{nm}
    \end{equation*}
    \end{linenomath*}

    \(\norm{\mathcal{W}}\) is also a lower bound on \(\norm{\tp}\); the ratio between the upper and lower bound is
    \begin{linenomath*}
    \begin{equation*}
        \frac{(1 + \delta)^2\norm{\mathcal{W}} + \frac{4\delta^2}{nm} + \frac{2\delta^3}{nm}}{\norm{\mathcal{W}}}
    \end{equation*}
    \end{linenomath*}

    This ratio is the largest for small values of \(\norm{\mathcal{W}}\), so we plug in the lower bound from \cref{lem:lower-bound-subdivision-point-segment}:
    \begin{linenomath*}
    \begin{align*}
                & \quad \frac{(1 + \delta)^2 \cdot \frac{\delta - 2\delta^2 - 2\delta^3}{nm} + \frac{4\delta^2}{nm} + \frac{2\delta^3}{nm}}{\frac{\delta - 2\delta^2 - 2\delta^3}{nm}} \\
        =       & \quad \frac{1 + 4\delta - 3\delta^2 - 6\delta^3 - 2\delta^4}{1 - 2\delta - 2\delta^2} \\
        =       & \quad 1 + \delta + \frac{5\delta + \delta^2 - 4\delta^3 - 2\delta^4}{1 - 2\delta - 2\delta^2} \\
        \leq    & \quad 1 + \delta + \frac{6\delta}{1 - 2\delta - 2\delta^2} \\
        \leq    & \quad 1 + 17\delta \tag{assuming \(\delta \leq \frac{1}{4}\)}
    \end{align*}
    \end{linenomath*}

    As \(\norm{\mathcal{W}}\) is also a lower bound for \(\norm{\tp^*}\) (\cref{lem:lower-bound-flow}), and \(\tp\) can obviously not have lower cost than the optimal transport plan, this gives a \((1 + 17\delta)\)-approximation.
    \hfill \qed
\end{proof}

Setting \(\delta = \eps / 17\) gives a \((1 + \eps)\)-approximation.
Note that the bound on \(\delta\) is not restrictive: for any constant \(\eps\) that would require a larger value of \(\delta\), we can simply use the value \(1 / 4\) at the cost of a constant factor in the running time of our algorithm.

\begin{figure}
    \centering
    \includegraphics{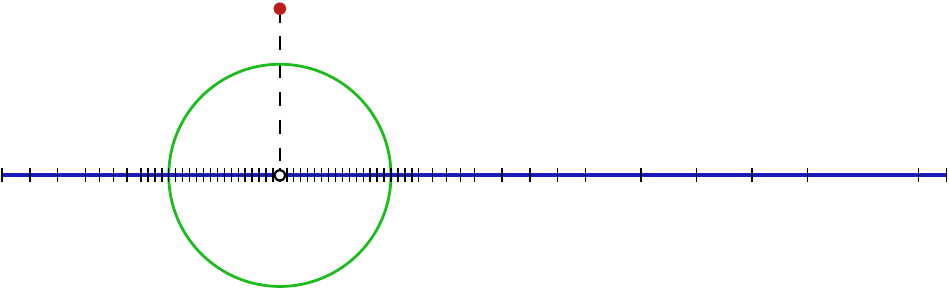}
    \caption{An example of a single \(R_{i,j}\).
    Small perpendicular segments delimit the generated subsegments.
    The green circle denotes the distance of \(\delta / nm\) from the projected point.}
    \label{fig:segment-alternative-count-example}
\end{figure}

\subsection{Running time analysis}\label{sec:point-segment-analysis}
We now analyse the number of subsegments in \(Q\).
We will count the number of subsegments in a different subdivision of \(S\), and then show that \(Q\) has at most a constant factor more subsegments.
The alternative subdivision of each \(s_j\) will be as follows: project each \(p_i\) onto the supporting line of \(s_j\), call this point \(p_{i,j}\).
We construct the one-dimensional Voronoi diagram of all \(p_{i,j}\) along the supporting line of \(s_j\); let \(s_{i,j}\) be the part of \(s_j\) inside the Voronoi cell of \(p_{i,j}\).
From each \(p_{i,j}\), we subdivide \(s_{i,j}\) into both directions.
Up to a distance of \(\delta / nm\), we make subsegments of size \(\delta^2 / nm\).
Moving outward, we double the size of the subsegments whenever their ratio of distances to \(p_{i,j}\) would still be below \(1 + \delta\).
Let \(R_{i,j}\) be the resulting subdivision; see \cref{fig:segment-alternative-count-example} for an example.

\begin{lemma}\label{lem:point-segment-alternative-count}
    \(R = \bigcup R_{i,j}\) has \(O\left(\dfrac{nm}{\delta}\log\dfrac{1}{\delta}\right)\) subsegments.
\end{lemma}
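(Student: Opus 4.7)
The plan is to split each $R_{i,j}$ into a \emph{close} part, the subsegments at distance at most $\delta/nm$ from $p_{i,j}$, and a \emph{far} part, and count the two contributions separately.

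For the close part, every subsegment has fixed length $\delta^2/nm$, so each $R_{i,j}$ contributes at most $2/\delta$ close subsegments (two directions from $p_{i,j}$). Summed over the $nm$ pairs, the close parts contribute $O(nm/\delta)$ subsegments.

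For the far part, I would group the subsegments by distance into \emph{geometric levels}: level $k \ge 0$ consists of subsegments whose closer endpoint lies in $[2^{k}\delta/nm, 2^{k+1}\delta/nm)$. The doubling rule together with the constraint $(r+s)/r \le 1+\delta$ becomes tight at $r = 2^{k}\delta/nm$ with $s = 2^{k}\delta^{2}/nm$, so every level-$k$ subsegment has length exactly $2^{k}\delta^{2}/nm$, and each pair contributes at most $2/\delta$ subsegments to level $k$ (two directions). This yields the \emph{per-pair} bound $N_k \le 2nm/\delta$, where $N_k$ is the total number of level-$k$ subsegments across all pairs. The key second estimate is a \emph{global length} bound: since $R$ is a subdivision of $S$, one has $\sum_{i,j}|R_{i,j}| = \sum_j |s_j| = 1$, so the total length used at level $k$ is at most $1$, giving $N_k \le nm/(2^k \delta^2)$. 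The two bounds cross at $k^\star = \log_2(1/\delta)$; for $k < k^\star$ the per-pair bound is tighter, and for $k \ge k^\star$ the length bound is. Summing then gives $\sum_k N_k \le k^\star \cdot O(nm/\delta) + O(nm/(2^{k^\star}\delta^2)) = O((nm/\delta)\log(1/\delta))$, as the geometric tail is dominated by its first term $O(nm/\delta)$.

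The main obstacle is correctly translating the informal doubling rule into this level structure so that both bounds apply level-by-level --- in particular, verifying that ``double whenever possible'' yields exactly size $2^k \delta^2/nm$ throughout $[2^k \delta/nm, 2^{k+1}\delta/nm)$ and not a factor more, and confirming the total-length identity $\sum_{i,j}|s_{i,j}| = 1$ from the Voronoi partition of each $s_j$. Minor care is also required when $p_{i,j}$ lies outside $s_j$, since then one direction of $R_{i,j}$ may be empty, which only improves the count.
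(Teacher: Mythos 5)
Your proposal is correct, and the local analysis coincides with the paper's: both arguments hinge on the fact that the greedy doubling rule, with the constraint $(r+s)/r \le 1+\delta$ becoming tight exactly at $r = 2^{k}\delta/nm$ with $s = 2^{k}\delta^{2}/nm$, produces exactly $1/\delta$ subsegments per direction in each dyadic range $[2^{k}\delta/nm,\,2^{k+1}\delta/nm)$; the paper establishes this as the induction showing $\alpha_i = 1/\delta$ for all $i$, which is precisely the verification you flag as the main obstacle and which goes through as you expect. Where you genuinely diverge is in the aggregation over the $nm$ pairs. The paper derives a per-pair count of $O\bigl(\frac{1}{\delta}\log\frac{\norm{s_{i,j}}}{\beta}\bigr)$ and then appeals to concavity of the logarithm (``the sum over all pairs is largest when all $\norm{s_{i,j}}$ are equal, i.e.\ $1/nm$'') to bound the total. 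Your level-by-level double bound --- the per-pair count $N_k \le 2nm/\delta$ for $k < \log_2(1/\delta)$ and the global length bound $N_k \le nm/(2^{k}\delta^{2})$ for larger $k$, with the geometric tail contributing only $O(nm/\delta)$ --- reaches the same $O\bigl(\frac{nm}{\delta}\log\frac{1}{\delta}\bigr)$ without that Jensen-type step, and the length identity $\sum_{i,j}\norm{s_{i,j}} = \sum_j \norm{s_j} = 1$ you rely on is exactly the Voronoi-partition fact that underlies the paper's equal-lengths claim. Your route arguably makes it more transparent why the logarithm is of $1/\delta$ specifically (it is the crossing point of the two bounds), at the cost of introducing the level bookkeeping; the paper's route gives a cleaner per-pair statement but leaves the extremal step over the $\norm{s_{i,j}}$ implicit. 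Your treatment of the close part and of the case where $p_{i,j}$ lies outside $s_{i,j}$ also matches the paper.
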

\begin{proof}
    We define \(\beta = \frac{\delta}{nm}\) and \(\gamma = \frac{\delta^2}{nm}\).
    In the following, we analyse only the case where \(p_{i,j}\) is on \(s_{i,j}\); if it lies outside, the number of subsegments will be smaller, as the size of the subsegments increases with distance.
    The length covered as we add subsegments on \(s_{i,j}\) can be written as
    \begin{linenomath*}
    \begin{equation*}
        \beta + 2\sum_{i = 0}^{k}{\alpha_i 2^i \gamma}
    \end{equation*}
    \end{linenomath*}

    \noindent
    where \(k\) is the number of times we double the size of the subsegments, and \(\alpha_i\) is the number of subsegments with a size that has been doubled \(i\) times.
    The number of subsegments can then be calculated by finding the values of \(k\) and \(\alpha_i\).
    We start with \(\alpha_0\), which can be found by considering the distance at which the next cell could be double the size:
    \begin{linenomath*}
    \begin{align*}
        \frac{\beta + \alpha_0\gamma + 2\gamma}{\beta + \alpha_0\gamma} & \leq 1 + \delta                                   \\
        \frac{2\gamma}{\beta + \alpha_0\gamma}                          & \leq \delta                                       \\
        \alpha_0                                                        & \geq \frac{2\gamma - \delta\beta}{\delta\gamma} = \frac{2}{\delta} - \frac{\beta}{\gamma} = \frac{1}{\delta} \\
        \alpha_0                                                        & \geq \frac{1}{\delta} \\
    \end{align*}
    \end{linenomath*}

    Per the procedure described above, we double the size of the subsegments as soon as this is allowed.
    This corresponds to taking the values of \(\alpha_i\) as small as possible, so we take \(\alpha_0 = \frac{1}{\delta}\).
    Next, we can show by induction that all \(\alpha_i\) are equal:
    \begin{linenomath*}
    \begin{equation*}
        \textnormal{IH: } \alpha_j = \alpha^* = \frac{1}{\delta} \textnormal{ for } j < i.
    \end{equation*}
    \end{linenomath*}%
    \begin{linenomath*}
    \begin{align*}
        \frac{\beta + 2^{i + 1}\gamma + \sum_{j = 0}^{i}{\alpha_j 2^j \gamma}}{\beta + \sum_{j = 0}^{i}{\alpha_j 2^j \gamma}}   & = 1 + \delta                                                                                           \\
        \frac{2^{i + 1}\gamma}{\beta + \sum_{j = 0}^{i}{\alpha_j 2^j \gamma}}                                                   & = \delta                                                                                               \\
        2^{i + 1}\gamma                                                                                                         & = \delta\beta + \delta\sum_{j = 0}^{i}{\alpha_j 2^j\gamma}                                             \\
& = \delta\beta + \delta\alpha_i 2^i\gamma + \delta\sum_{j = 0}^{i - 1}{\alpha_j 2^j\gamma}              \\
        2^{i + 1}                                                                                                               & = 1 + \delta\alpha_i 2^i + \delta\alpha^*(2^i - 1)                                                     \\
        \alpha_i                                                                                                                & = \frac{2^{i + 1}}{\delta 2^i} - \frac{1}{\delta 2^i} - \frac{\delta\alpha^* (2^i - 1)}{\delta 2^i}    \\
& = \frac{2}{\delta} - \frac{1}{\delta 2^i} - \frac{2^i - 1}{\delta 2^i}                                 \\
& = \frac{2}{\delta} - \frac{1}{\delta}                                                                  \\
& = \frac{1}{\delta}                                                                                     \\
    \end{align*}
    \end{linenomath*}

    Knowing that all \(\alpha_i\) are equal to \(\frac{1}{\delta}\), we can determine the value of \(k\):
    \begin{linenomath*}
    \begin{align*}
        \beta + \sum_{i = 0}^{k}{\frac{1}{\delta}2^i\gamma} & \geq \norm{s_{i,j}}                        \\
        \beta + \frac{1}{\delta}\gamma(2^{k + 1} - 1)       & \geq \norm{s_{i,j}}                        \\
        \beta + \beta(2^{k + 1} - 1)                        & \geq \norm{s_{i,j}}                        \\
        2^{k + 1}                                           & \geq \frac{\norm{s_{i,j}} }{\beta}         \\
        k                                                   & \geq \log\frac{\norm{s_{i,j}} }{\beta} - 1 \\
    \end{align*}
    \end{linenomath*}

    This gives a total number of subsegments of \(O(\frac{1}{\delta}\log\frac{\norm{s_{i,j}}}{\beta})\) for each point-segment pair.
    The sum over all pairs is largest when all \(\norm{s_{i,j}}\) are equal, i.e.
    \(1 / nm\).
    This gives us a total number of subsegments for all pairs of \(O\left(\frac{nm}{\delta}\log\frac{1 / nm}{\beta}\right) = O\left(\frac{nm}{\delta}\log\frac{1}{\delta}\right)\).
    \hfill \qed
\end{proof}

\begin{lemma}\label{lem:point-segment-count}
    The set \(Q\) has \(O\left(\dfrac{nm}{\delta}\log\dfrac{1}{\delta}\right)\) subsegments.
\end{lemma}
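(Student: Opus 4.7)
The plan is to compare $|Q|$ to $|R|$ via a charging argument and conclude $|Q| = O(|R|)$, which together with \cref{lem:point-segment-alternative-count} gives the claimed bound. The idea is to compare, at every location $x$ on a segment $s_j \in S$, the length $\ell_Q(x)$ of the subsegment of $Q$ containing $x$ against the lengths $\ell_{R_{i,j}}(x)$ of the subsegments of the various $R_{i,j}$ containing $x$.

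First, I would fix a segment $s_j$ and observe that, apart from the single exceptional case where $s_j$ itself appears undivided in $Q$, every subsegment $q \in Q \cap s_j$ is a child of some subsegment $q'$ that was halved. The halving of $q'$ was triggered by some $p_i \in P$ whose ratio of longest to shortest distance to $q'$ exceeded $1+\delta$; associate $q$ with this $p_i$ (breaking ties arbitrarily). Note that this also covers the subsegments in case (a) (lying within distance $\delta/nm$ of some point), because such subsegments are either $s_j$ itself or arise from halving their parent.

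Second, I would show that for the associated $p_i$, the subsegment of $R_{i,j}$ containing $q$'s midpoint has length $O(|q|)$. This relies on the construction of $R_{i,j}$: away from the projection of $p_i$, each of its subsegments is the largest possible (up to a factor of $2$ from doubling) that still satisfies the ratio condition for $p_i$; near the projection, its subsegments have length $\gamma = \delta^2/nm$, comparable to the case (a) threshold. Since the parent $q'$ of $q$ has length $2|q|$ and violates the ratio condition for $p_i$, any longer subsegment at the same location violates it too, so $\ell_{R_{i,j}}$ at $q$ is bounded by a constant multiple of $|q|$.

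Third, combining these two steps yields $\ell_Q(x) \geq \tfrac{1}{c}\min_i \ell_{R_{i,j}}(x)$ for some constant $c$ at essentially every $x \in s_j$, and hence
\begin{linenomath*}
\begin{equation*}
    |Q \cap s_j| \;=\; \int_{s_j} \frac{dx}{\ell_Q(x)} \;\leq\; c \sum_{i=1}^{n} \int_{s_j} \frac{dx}{\ell_{R_{i,j}}(x)} \;=\; c \sum_{i=1}^{n} |R_{i,j}|.
\end{equation*}
\end{linenomath*}
Summing over $j$ and absorbing the $O(m)$ exceptional pieces (one per segment that is never halved) into $|R|$, which is at least $nm$, gives $|Q| = O(|R|) = O\bigl(\tfrac{nm}{\delta}\log\tfrac{1}{\delta}\bigr)$.

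The main obstacle is the second step: rigorously establishing that the subsegments of $R_{i,j}$ are, up to a constant factor, as large as the ratio condition allows, which requires careful treatment of the $\gamma$-sized subsegments near the projection of $p_i$ and matching them to the case (a) termination condition of the halving procedure in $Q$.
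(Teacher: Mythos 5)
Your high-level plan is the same as the paper's: reduce the lemma to showing that $Q$ is, at every location on every segment, at most a constant factor finer than $R$, conclude that $Q$ has $O(|R|)$ pieces, and invoke \cref{lem:point-segment-alternative-count}. The paper runs the comparison in the opposite and simpler direction: any $q \in Q$ overlapping a fixed $r \in R$ has $|q| \geq |r|/4$ (otherwise $q$'s parent would already have satisfied the stopping condition, since $r$ does), so each $r$ meets at most $5$ pieces of $Q$. Your contrapositive formulation --- the parent $q'$ violates the ratio condition for some $p_i$, hence the $R$-piece at that location cannot be much longer than $q$ --- rests on the same underlying monotonicity, and your integral identity $|Q \cap s_j| = \int_{s_j} \ell_Q(x)^{-1}\dd x$ is a valid way to turn the pointwise comparison into a count.

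There is, however, a concrete gap in the execution. In the paper's construction $R_{i,j}$ subdivides only $s_{i,j}$, the part of $s_j$ in the Voronoi cell of the projected point $p_{i,j}$; it is \emph{not} a partition of all of $s_j$. So the ``subsegment of $R_{i,j}$ containing $q$'s midpoint'' need not exist for the $p_i$ you charge $q$ to, and the identity $\int_{s_j}\ell_{R_{i,j}}(x)^{-1}\dd x = |R_{i,j}|$ fails. Redefining each $R_{i,j}$ to cover all of $s_j$ does not rescue this cheaply: the per-pair count in \cref{lem:point-segment-alternative-count} then becomes $O(\frac{1}{\delta}\log\frac{|s_j|}{\beta})$, and summing over all $n$ points per segment gives $O(\frac{nm}{\delta}\log\frac{n}{\delta})$ in the worst case, an extra $\log n$ over the claimed bound. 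The repair is to compare $q$ against the unique $R$-piece actually containing $x$, which belongs to the Voronoi-nearest projection $p_{i^\ast,j}$: that piece has length $\Theta(\max(\gamma, \delta\, d(p_{i^\ast,j},x)))$, and since $d(p_{i^\ast,j},x) \leq d(p_{i,j},x) \leq d(p_i,x)$ while the violation by $q'$ forces $|q'| > \delta\, d(p_i, q')$, this length is still $O(|q|)$. The obstacle you flag near the projection resolves the same way: a parent that is halved is not entirely within $\delta/nm$ of any point yet violates the ratio for some point, which forces $|q'| > \delta^2/(2nm) = \gamma/2$, so the $\gamma$-sized cells of $R$ are also $O(|q|)$. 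With these two additions your argument goes through, but it is considerably more involved than the paper's one-directional overlap count.
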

\begin{proof}
    Consider any subsegment \(r \in R\).
    Any subsegment \(q \in Q\) that overlaps with \(r\) has \(\norm{q} \geq \norm{r}/4\): otherwise \(q\) was subdivided unnecessarily.
    As the subsegments in \(Q\) are disjoint, it follows that \(r\) can overlap with at most 5 subsegments in \(Q\).
    As such, \(Q\) contains at most 5 times more subsegments than \(R\), which, by \cref{lem:point-segment-alternative-count}, is \(O\left(\frac{nm}{\delta}\log\frac{1}{\delta}\right)\).
    \hfill \qed
\end{proof}

Putting everything together, we obtain the following result:

\begin{theorem}\label{thm:point-segment-main-result}
    Let \(P\) be a set of \(n\) weighted points and \(S\) be a set of \(m\) line segments with equal total weight, let \(\norm{\tp^*}\) be the cost of an optimal transport plan between them, and let \(\delta\) be any constant \(> 0\).
    Given an algorithm that constructs a \((1 + \delta)\)-approximation between weighted sets of \(k\) points in \(f_\delta(k)\) time, we can construct a transport plan between \(P\) and \(S\) with cost \(\leq (1 + 25\delta)\norm{\tp^*}\) in \(O\left(f_\delta\left(\frac{nm}{\delta}\log\left(\frac{1}{\delta}\right)\right)\right)\) time.
\end{theorem}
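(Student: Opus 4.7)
The plan is to assemble the ingredients developed in the preceding subsections into an algorithm and then track the approximation ratio carefully. First I would run the greedy halving subdivision on each segment in $S$ to produce the set $Q$ of subsegments guaranteed by \cref{lem:point-segment-count}, so that $|Q| = O\!\left(\tfrac{nm}{\delta}\log\tfrac{1}{\delta}\right)$. Then I would collapse each $q \in Q$ to a single representative point (anywhere on $q$) carrying weight equal to $|q|$, yielding a weighted point set $T$ with $|T| = |Q|$. Invoking the assumed $(1+\delta)$-approximation algorithm on the weighted point sets $P$ and $T$ produces, in $f_\delta(n + |T|) = O\!\left(f_\delta\!\left(\tfrac{nm}{\delta}\log\tfrac{1}{\delta}\right)\right)$ time, a transport plan $\nu'$ satisfying $\|\nu'\| \leq (1+\delta)\|\nu\|$, where $\nu$ is an optimal transport plan between $P$ and $T$. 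Finally I would convert $\nu'$ into a plan $\tp$ between $P$ and $Q$ (hence between $P$ and $S$) by spreading, on each representative point, the mass uniformly across the corresponding subsegment.

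The analysis then chains together the bounds already in hand. From \cref{lem:bound-discretised-point-segment} (adapted to use $\nu'$ instead of the exact optimum), I get
\begin{equation*}
    \|\tp\| \;\leq\; (1+\delta)\|\nu'\| + \tfrac{2\delta^2}{nm} \;\leq\; (1+\delta)^2\|\nu\| + \tfrac{2\delta^2}{nm},
\end{equation*}
and combining with $\|\nu\| \leq (1+\delta)\|\mathcal{W}\| + 2\delta^2/nm$ (the inner step of the same lemma) gives
\begin{equation*}
    \|\tp\| \;\leq\; (1+\delta)^3\|\mathcal{W}\| + \tfrac{2\delta^2(1+\delta)^2}{nm} + \tfrac{2\delta^2}{nm}.
\end{equation*}
Dividing by the lower bound $\|\mathcal{W}\| \geq (\delta - 2\delta^2 - 2\delta^3)/nm$ from \cref{lem:lower-bound-subdivision-point-segment} and using \cref{lem:lower-bound-flow} to relate $\|\mathcal{W}\|$ to $\|\tp^*\|$ yields an approximation ratio of the form $(1+\delta)^3 + O(\delta)$, which I would verify is at most $(1+25\delta)$ for $\delta \leq \tfrac{1}{4}$ by essentially the same arithmetic as in the proof of \cref{thm:point-segment-approximation}, with one extra factor of $(1+\delta)$ absorbed into the constant in front of $\delta$.

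For the running time I would argue that building $Q$ itself costs at most $O(n|Q|) = O(n^2 m \delta^{-1}\log\delta^{-1})$ time (each candidate subsegment can be tested against all $n$ points in $O(n)$ time), which is dominated by any reasonable $f_\delta(|Q|+n)$ arising from a known optimal-transport solver on points, so the stated bound $O(f_\delta(\tfrac{nm}{\delta}\log\tfrac{1}{\delta}))$ is achieved. The main obstacle is not the high-level structure of the argument, which is mostly assembly, but rather pinning down the constant $25$: the combination of three $(1+\delta)$ factors and the additive terms $2\delta^2/nm$ must be pushed through the ratio analysis of \cref{thm:point-segment-approximation} without blowing up, and it is here that the bound $\delta \leq \tfrac{1}{4}$ is used to control $(1+\delta)^2$ and the denominator $1-2\delta-2\delta^2$ that appears after substituting the lower bound on $\|\mathcal{W}\|$.
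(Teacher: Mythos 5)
Your proposal is correct and takes essentially the same route as the paper: it inserts the extra \((1+\delta)\) factor from the approximate point-to-point solver into the chain of bounds from \cref{lem:bound-discretised-point-segment}, obtaining \(\norm{\tp} \leq (1+\delta)^3\norm{\mathcal{W}} + \frac{4\delta^2 + 4\delta^3 + 2\delta^4}{nm}\) (your additive term \(\frac{2\delta^2(1+\delta)^2}{nm} + \frac{2\delta^2}{nm}\) expands to exactly this), and then pushes it through the ratio analysis of \cref{thm:point-segment-approximation} with the lower bounds from \cref{lem:lower-bound-subdivision-point-segment,lem:lower-bound-flow} to reach \(1 + 25\delta\) for \(\delta \leq \frac{1}{4}\). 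This is precisely the paper's argument.
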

\begin{proof}
    In \cref{thm:point-segment-approximation}, we prove that an optimal transport plan \(\nu\) between \(P\) and \(T\) approximates \(\norm{\tp^*}\).
    However, we may be able to compute a \((1 + \delta)\)-approximation to \(\nu\) faster than we are able to compute it exactly.
    It remains to be shown that this approximation also suffices.

    Plugging in a \((1 + \delta)\)-approximation to \(\norm{\nu}\), rather than the exact value, we obtain the ratio
    \begin{linenomath*}
    \begin{equation*}
        \frac{(1 + \delta)^3\norm{\mathcal{W}} + \frac{4\delta^2}{nm} + \frac{4\delta^3}{nm} + \frac{2\delta^4}{nm}}{\norm{\mathcal{W}}}
    \end{equation*}
    \end{linenomath*}

    Following the same strategy as in the proof of \cref{thm:point-segment-approximation}, we derive the approximation ratio as follows:
    \begin{linenomath*}
    \begin{align*}
                & \quad \frac{(1 + \delta)^3 \cdot (\delta - 2\delta^2 - 2\delta^3) + 4\delta^2 + 4\delta^3 + 2\delta^4}{\delta - 2\delta^2 - 2\delta^3}    \\
        =       & \quad \frac{(1 + 3\delta + 3\delta^2 + \delta^3)(1 - 2\delta - 2\delta^2) + 4\delta + 4\delta^2 + 2\delta^3}{1 - 2\delta - 2\delta^2}     \\
        =       & \quad \frac{1 + 5\delta - \delta^2 - 9\delta^3 - 8\delta^4 - 2\delta^5}{1 - 2\delta - 2\delta^2}                                          \\
        =       & \quad 1 + \frac{7\delta + \delta^2 - 9\delta^3 - 8\delta^4 - 2\delta^5}{1 - 2\delta - 2\delta^2}                                          \\
        =       & \quad 1 + \delta + \frac{6\delta + 3\delta^2 - 7\delta^3 - 8\delta^4 - 2\delta^5}{1 - 2\delta - 2\delta^2}                                \\
        \leq    & \quad 1 + \delta + \frac{9\delta}{1 - 2\delta - 2\delta^2}                                                                                \\
        \leq    & \quad 1 + 25\delta                                                                                                                        \tag{assuming \(\delta \leq \frac{1}{4}\)}
    \end{align*}
    \end{linenomath*}

    As such, using an approximation of \(\nu\) still gives us an approximation of \(\tp^*\), albeit with a somewhat worse dependency on \(\delta\).
    \hfill \qed
\end{proof}

To our knowledge, the current fastest algorithm to calculate a \((1 + \delta)\)-approximation to \(\nu\) is that by Fox and Lu~\cite{fox2022}, which runs in \(O(N\delta^{-O(1)}\polylog{N})\) time, where \(N\) is the size of the input.
Setting \(\delta = \eps / 25\), this gives the following corollary to the previous theorem:

\begin{corollary}\label{cor:point-segment-main-result}
    For any constant \(\eps > 0\), a transport plan between \(P\) and \(S\) with cost \(\leq (1 + \eps)\norm{\tp^*}\) can be constructed in \(O\left(\frac{nm}{\eps^c} \polylog\left(\frac{nm}{\eps}\right)\right)\) time with high probability.
\end{corollary}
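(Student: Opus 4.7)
The plan is to instantiate Theorem~\ref{thm:point-segment-main-result} with a suitable choice of $\delta$ and then plug in the running time of Fox and Lu's algorithm~\cite{fox2022} for the black-box weighted point-to-point EMD subroutine referenced by that theorem.

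First, I would set $\delta = \eps/25$. By Theorem~\ref{thm:point-segment-main-result}, this yields a transport plan with cost at most $(1 + 25\delta)\norm{\tp^*} = (1 + \eps)\norm{\tp^*}$, which is precisely the approximation guarantee claimed by the corollary. This choice is valid as long as $\eps \leq 25/4$; for larger $\eps$ we can simply use $\delta = 1/4$, which only affects the running time by a constant factor.

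Second, I would substitute Fox and Lu's running time $f_\delta(k) = O(k\,\delta^{-O(1)}\polylog k)$ into the bound $O(f_\delta((nm/\delta)\log(1/\delta)))$ given by Theorem~\ref{thm:point-segment-main-result}. After absorbing the $\log(1/\delta)$ factor into the polylog term and collapsing $1/\delta$ together with $\delta^{-O(1)}$ into a single $\delta^{-O(1)}$ factor, substituting $\delta = \eps/25$ yields the claimed $O\!\left(\tfrac{nm}{\eps^c}\polylog(nm/\eps)\right)$ bound for some constant $c$. The ``with high probability'' qualifier is inherited directly from Fox and Lu, which is the only randomised component in the pipeline.

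There is essentially no obstacle in this argument: all the nontrivial work was done upstream, namely the approximation-ratio analysis in Theorem~\ref{thm:point-segment-approximation} and Theorem~\ref{thm:point-segment-main-result}, and the subsegment count bound in Lemma~\ref{lem:point-segment-count}. The corollary is therefore just a routine instantiation of the general black-box result with the current best known point-to-point EMD approximation algorithm.
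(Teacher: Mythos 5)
Your proposal is correct and matches the paper's own derivation: the paper likewise sets $\delta = \eps/25$ and substitutes Fox and Lu's $O(N\delta^{-O(1)}\polylog N)$ running time into the bound of Theorem~\ref{thm:point-segment-main-result}, absorbing the $1/\delta$ and $\log(1/\delta)$ factors exactly as you describe. Nothing is missing.
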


\section{Points to triangles}\label{sec:point-triangle}
\begin{figure}
    \centering
    \includegraphics{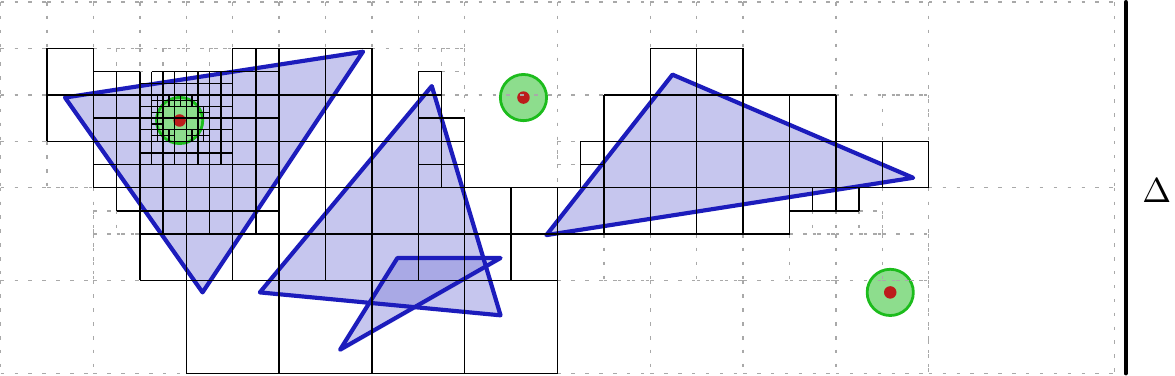}
    \caption{An example subdivision of a set of triangles.
    Each cell records the total area of triangles it intersects.
    Cells that are part of \(Q\) are shown in black; empty cells are shown in grey dashed lines.
    A green disk denotes the distance of \(\delta / \sqrt{nm}\) from each point.
    Note that \(\eps\) is set to a very large value here for the clarity of the resulting image.}
    \label{fig:triangle-subdivision-example}
\end{figure}

We consider the case where \(P\) is a set of weighted points with total weight one and \(S\) is a set of \(m\) triangles with total area one.
We denote the longest edge of any triangle by \(\Delta\).
Our strategy is similar to before: we subdivide the triangles such that for each subregion, the ratio between its shortest and longest distance to each point is at most \(1 + \delta\) for some appropriate choice of \(\delta \in O(\eps)\).
We then show that a solution based on an optimal transport plan between \(P\) and some points inside the subregions approximates an optimal solution.

We first overlay a uniform grid onto our triangles with grid cells of size \(\Delta \times \Delta\).
We can identify the cells of this grid that contain a triangle in \(O(m \log m)\) time using point-location in a compressed quadtree where the smallest cell size is \(\Delta \times \Delta\)~\cite{vanderhoog2018}.
As each triangle can intersect at most four cells, the total size of this set of cells is \(O(m)\).
We now recursively subdivide each cell as follows: if there is a point in \(P\) such that the whole cell is within distance \(\delta / \sqrt{nm}\) of it, we stop; otherwise, if for any point the ratio of distances to the furthest and closest point in this cell is more than \(1 + \delta\), we subdivide this cell into four cells of one quarter the area.
If the ratio holds for all points, we stop.
Call the resulting set of cells \(Q\); see \cref{fig:triangle-subdivision-example} for an example.

During each subdivision, we keep track of the total area of triangles contained inside that cell.
We can then once again build a complete bipartite graph \(G = (P \cup Q, P \times Q)\), with the capacity of each vertex set to the weight of the corresponding point or the total area of triangles contained in the corresponding cell, and the weight of each edge equal to the shortest distance between the point and the cell it connects.
The cost of a minimum cost flow \(\mathcal{W}\) is now once again a lower bound to the EMD, exactly as in \cref{lem:lower-bound-flow}.
In an analogous way to \cref{lem:lower-bound-subdivision-point-segment}, we obtain a lower bound on the cost of \(\mathcal{W}\):

\begin{lemma}\label{lem:lower-bound-subdivision-point-triangle}
    \(\norm{\mathcal{W}} \geq \dfrac{\delta}{\sqrt{nm}} - \dfrac{\pi\delta^3 + 2\pi\delta^4 + \pi\delta^5}{\sqrt{nm}}\).
\end{lemma}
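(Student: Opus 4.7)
The plan is to adapt the argument from \cref{lem:lower-bound-subdivision-point-segment} to two dimensions, replacing lengths with areas and the threshold \(\delta/nm\) with \(\delta/\sqrt{nm}\) throughout. The key observation is that any cell of \(Q\) whose closest point to some \(p \in P\) lies within \(\delta/\sqrt{nm}\) must, by the subdivision rule, be contained in a disk of radius \((1+\delta)\delta/\sqrt{nm}\) centred at \(p\): either the cell's subdivision terminated because the ratio test was satisfied, so its farthest point from \(p\) is at distance at most \((1+\delta)\cdot\delta/\sqrt{nm}\), or it terminated because the whole cell already lies within \(\delta/\sqrt{nm}\) of some point of \(P\), in which case the cell is so small that it still fits inside such a disk.

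First, I would fix a point-triangle pair \((p, s) \in P \times S\) and let \(A_{p,s}\) denote the area of \(s\) inside cells of \(Q\) that come within distance \(\delta/\sqrt{nm}\) of \(p\). By the observation above, the region of the plane that contributes to \(A_{p,s}\) is contained in the disk of radius \((1+\delta)\delta/\sqrt{nm}\) around \(p\), so
\[ A_{p,s} \;\leq\; \pi\left(\frac{(1+\delta)\delta}{\sqrt{nm}}\right)^{\!2} \;=\; \frac{\pi(1+\delta)^2\delta^2}{nm}. \]
Summing over all \(nm\) pairs, the total ``close'' area across all triangles is at most \(\pi(1+\delta)^2\delta^2 = \pi\delta^2 + 2\pi\delta^3 + \pi\delta^4\). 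Since \(S\) has total area one, the area whose closest point in \(P\) lies at distance at least \(\delta/\sqrt{nm}\) is at least \(1 - \pi\delta^2 - 2\pi\delta^3 - \pi\delta^4\).

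Every unit of this ``far'' mass must travel at least \(\delta/\sqrt{nm}\) in any feasible flow on \(G\), so
\[ \norm{\mathcal{W}} \;\geq\; \left(1 - \pi\delta^2 - 2\pi\delta^3 - \pi\delta^4\right)\frac{\delta}{\sqrt{nm}} \;=\; \frac{\delta}{\sqrt{nm}} - \frac{\pi\delta^3 + 2\pi\delta^4 + \pi\delta^5}{\sqrt{nm}}, \]
which is the claimed bound.

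The main obstacle is the bookkeeping in the first step: a cell may have terminated its subdivision because of proximity to some point \(p' \neq p\), in which case the bound on its farthest distance from \(p\) is not automatic, and one has to verify that the diameter of such a cell (which is itself bounded in terms of \(\delta/\sqrt{nm}\)) together with the fact that the cell meets the \(\delta/\sqrt{nm}\)-disk around \(p\) still places it inside the \((1+\delta)\delta/\sqrt{nm}\) disk, possibly after adjusting the hidden constants. Once this corner case is dispatched, the rest of the proof is essentially the expansion \((1+\delta)^2 = 1 + 2\delta + \delta^2\) followed by multiplication by \(\delta/\sqrt{nm}\), exactly paralleling the one-dimensional calculation.
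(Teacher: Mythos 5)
Your proposal is correct and follows essentially the same argument as the paper: bound the area of ``close'' cells per point--triangle pair by the area of a disk of radius \((1+\delta)\delta/\sqrt{nm}\), sum over all \(nm\) pairs, and charge the remaining area a transport distance of at least \(\delta/\sqrt{nm}\). Regarding the corner case you flag (a cell whose subdivision stopped because it lies wholly within \(\delta/\sqrt{nm}\) of some \emph{other} point \(p'\)): a diameter argument alone does not place it in the \((1+\delta)\delta/\sqrt{nm}\) disk around \(p\), but since such a cell is entirely contained in the \(\delta/\sqrt{nm}\) disk around \(p'\), its area is already accounted for by the pair \((p',s)\) in the sum over all pairs, so the stated constants survive.
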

\begin{proof}
    For a given point-triangle pair \((p, s) \in P \times S\), consider the cells in \(Q\) intersecting \(s\) that have a point within distance \(\delta / \sqrt{nm}\) of \(p\).
    By construction, such a cell has its furthest point at distance at most \((1 + \delta) \cdot \delta / \sqrt{nm} = \delta / \sqrt{nm} + \delta^2 / \sqrt{nm}\).
    Therefore, the total area of these cells is at most \(\pi(\delta / \sqrt{nm} + \delta^2 / \sqrt{nm})^2 = \pi(\delta^2 + 2\delta^3 + \delta^4) / nm\).
    Over all point-triangle pairs, this gives a total area of at most \(\pi(\delta^2 + 2\delta^3 + \delta^4)\).
    This leaves \(1 - \pi(\delta^2 + 2\delta^3 + \delta^4)\) with distance at least \(\delta / \sqrt{nm}\) in \(\mathcal{W}\).
    The cost is therefore at least \((1 - \pi(\delta^2 + 2\delta^3 + \delta^2)) \cdot \delta / \sqrt{nm} = \delta / \sqrt{nm} - \pi(\delta^3 + 2\delta^4 + \delta^5) / \sqrt{nm}\).
    \hfill \qed
\end{proof}

We now once again approximate \(\norm{\mathcal{W}}\) by reducing the flow problem to a transportation problem between two sets of weighted points.
Again, we pick any point in each cell \(q \in Q\) and give it a weight equal to the area of triangles contained in \(q\); call this set of points \(T\).

\begin{lemma}\label{lem:point-triangle-bounds}
    \(\norm{\mathcal{W}} \leq \norm{\tp} \leq (1 + \delta)^2\norm{\mathcal{W}} + \frac{2\pi\delta^3}{\sqrt{nm}} + \frac{\pi\delta^4}{\sqrt{nm}}\)
\end{lemma}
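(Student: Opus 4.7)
The plan is to mirror the two-stage argument used in \cref{lem:bound-discretised-point-segment}. I would first bound the cost $\norm{\nu}$ of the optimal discrete transport between $P$ and the representative set $T$ in terms of $\norm{\mathcal{W}}$, and then bound $\norm{\tp}$ (obtained by spreading $\nu$ evenly over each cell) in terms of $\norm{\nu}$. Chaining the two inequalities produces the claimed upper bound, and the lower bound is immediate from the fact that $\mathcal{W}$ measures distances to the closest point of each cell whereas $\tp$ realises actual point-to-point distances that can only be larger.

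For the first stage, I would place each $t \in T$ at the farthest point of its cell from the point $p \in P$ that $\nu$ feeds the cell from, making this the worst possible choice. I would then split $\nu$ into a \emph{far} part, where the flow between $p$ and a cell $q$ has closest distance at least $\delta/\sqrt{nm}$, and a \emph{close} part, where some point of $q$ lies within distance $\delta/\sqrt{nm}$ of $p$. On the far part, the subdivision rule forces the ratio of farthest to closest distance to be at most $1+\delta$, so that portion of $\nu$ costs at most $(1+\delta)\norm{\mathcal{W}}$. On the close part, I would reuse the area estimate from \cref{lem:lower-bound-subdivision-point-triangle}: every close cell fits inside the disk of radius $(1+\delta)\delta/\sqrt{nm}$ around $p$, so the close area per point-triangle pair is at most $\pi(1+\delta)^2\delta^2/nm$, and summing over all $nm$ pairs bounds the total close mass. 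Multiplying by the maximum close distance $(1+\delta)\delta/\sqrt{nm}$ yields a contribution of $\pi\delta^3/\sqrt{nm}$ at leading order, absorbing the lower-order $\delta^4$ and $\delta^5$ slack into the stated error. This gives $\norm{\nu} \leq (1+\delta)\norm{\mathcal{W}} + \pi\delta^3/\sqrt{nm}$.

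For the second stage, passing from $\nu$ to $\tp$ just spreads the mass delivered to each $t$ uniformly over its cell. On far cells the new distance is at most $1+\delta$ times the one used in $\nu$, and on close cells it grows by at most an additive $(1+\delta)\delta/\sqrt{nm}$. Applying the same far/close decomposition yields $\norm{\tp} \leq (1+\delta)\norm{\nu} + \pi\delta^3/\sqrt{nm}$. Substituting the bound on $\norm{\nu}$ and expanding gives the $(1+\delta)^2\norm{\mathcal{W}} + 2\pi\delta^3/\sqrt{nm} + \pi\delta^4/\sqrt{nm}$ bound in the lemma exactly.

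The main obstacle I anticipate is the close/far bookkeeping: a cell may satisfy the stopping condition because it lies near one point of $P$, yet $\nu$ could route its mass to an entirely different, distant point. The split therefore has to be defined with respect to the destination of each infinitesimal unit of flow rather than with respect to the point that halted the subdivision, and one has to check that the area estimate behind the close-mass bound still applies per point-cell pair rather than per cell. Once this detail is handled, the remaining manipulation is just folding in lower-order terms in $\delta$, analogous to the segment case.
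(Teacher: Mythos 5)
Your proposal follows essentially the same two-stage argument as the paper's proof: first bound \(\norm{\nu} \leq (1+\delta)\norm{\mathcal{W}} + \pi\delta^3/\sqrt{nm}\) via the far/close split with the disk-area estimate, then bound \(\norm{\tp} \leq (1+\delta)\norm{\nu} + \pi\delta^3/\sqrt{nm}\) the same way, chain the two, and get the lower bound from the fact that \(\mathcal{W}\) uses closest-point distances. The bookkeeping subtlety you flag (a cell stopped because it is near one point of \(P\) yet fed by a different, distant point) is genuine but is equally glossed over in the paper, which simply takes the close mass to be \(\pi\delta^2\) and the close distance to be \(\delta/\sqrt{nm}\); your \((1+\delta)\)-inflated close-region terms are, if anything, a more careful version of the same accounting.
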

\begin{proof}
    Let \(\nu\) be an optimal transport plan between \(P\) and \(T\), and let \(\norm{\nu}\) be its cost.
    We can upper bound \(\norm{\nu}\) by measuring all distances to the furthest point in each cell.
    We constructed \(Q\) such that the ratio of the closest and furthest distance between any point-cell pair is \(1 + \delta\) when the furthest distance is at least \(\delta / \sqrt{nm}\).
    We can therefore bound all parts of \(\nu\) where the distance is at least \(\delta / \sqrt{nm}\) by \((1 + \delta)\norm{\mathcal{W}}\).
    The total mass being moved over a distance at most \(\delta / \sqrt{nm}\) in \(\nu\) is at most \(\pi\delta^2\), giving a cost of \(\pi\delta^3 / \sqrt{nm}\).
    The total cost when measuring to the furthest point is therefore \((1 + \delta)\norm{\mathcal{W}} + \pi\delta^3 / \sqrt{nm}\).

    We now turn \(\nu\) into a transport plan \(\tp\) between \(P\) and \(Q\) by spreading the mass sent to each point \(t \in T\) out evenly over the parts of the triangles in the cell in \(Q\) that \(t\) was derived from.
    By construction, for cells with a distance of at least \(\delta / \sqrt{nm}\), this increases the cost by at most a factor \(1 + \delta\).
    We can therefore bound the cost of this part of \(\tp\) by \((1 + \delta)\norm{\nu}\).
    The remaining part has a total mass of at most \(\pi\delta^2\), giving a cost of \(\pi\delta^3 / \sqrt{nm}\).
    The total cost of \(\tp\) is then bound by \((1 + \delta)\norm{nu} + \pi\delta^3 / \sqrt{nm}\).

    Plugging in the bound on \(\norm{\nu}\) obtained above, we obtain an upper bound of \((1 + \delta)^2\norm{\mathcal{W}} + 2\pi\delta^3 / \sqrt{nm} + \pi\delta^4 / \sqrt{nm}\).
    The lower bound follows directly from the fact that none of the distance in \(\tp\) are smaller than the distances between the same objects in \(\mathcal{W}\).
    \hfill \qed
\end{proof}

Putting this all together, we can show that \(\norm{\tp}\) approximates \(\norm{\tp^*}\).

\begin{theorem}\label{thm:point-triangle-approximation}
    \(\norm{\tp}\) is a \((1 + 9\delta)\)-approximation to the earth mover's distance \(\norm{\tp^*}\) between \(P\) and \(S\) for \(0 < \delta \leq \frac{1}{2\pi}\).
\end{theorem}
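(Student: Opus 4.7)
The plan is to mirror the proof of \cref{thm:point-segment-approximation} closely, using the two-sided bound on $\norm{\tp}$ from \cref{lem:point-triangle-bounds} together with the lower bound from \cref{lem:lower-bound-subdivision-point-triangle}. First I would observe that since $\norm{\mathcal{W}}\leq \norm{\tp^*}$ (by the same argument as in \cref{lem:lower-bound-flow}, which carries over verbatim to the triangle setting because we use shortest point-to-cell distances as edge weights in $G$), and since $\tp$ cannot have cost strictly smaller than $\tp^*$, it suffices to show that the upper bound on $\norm{\tp}$ from \cref{lem:point-triangle-bounds} is at most $(1+9\delta)\norm{\mathcal{W}}$.

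Thus I would form the ratio
\begin{equation*}
\frac{(1+\delta)^2\norm{\mathcal{W}} + \frac{2\pi\delta^3}{\sqrt{nm}} + \frac{\pi\delta^4}{\sqrt{nm}}}{\norm{\mathcal{W}}}
\end{equation*}
and note that it is a decreasing function of $\norm{\mathcal{W}}$ (since the additive tail divided by $\norm{\mathcal{W}}$ dominates the behaviour), so it is maximised by plugging in the lower bound $\norm{\mathcal{W}}\geq \tfrac{\delta}{\sqrt{nm}} - \tfrac{\pi\delta^3 + 2\pi\delta^4 + \pi\delta^5}{\sqrt{nm}}$ from \cref{lem:lower-bound-subdivision-point-triangle}. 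The $1/\sqrt{nm}$ factor cancels cleanly between numerator and denominator.

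Next I would carry out the resulting rational-function bound in $\delta$. Expanding $(1+\delta)^2 = 1 + 2\delta + \delta^2$ and multiplying out, the numerator becomes a polynomial of the form $\delta + (\text{higher-order terms})$ while the denominator becomes $\delta(1 - \pi\delta^2 - 2\pi\delta^3 - \pi\delta^4)$. Factoring out $\delta$ from both, I would simplify to an expression of the form
\begin{equation*}
1 + \delta + \frac{a\delta + (\text{higher-order terms in }\delta)}{1 - \pi\delta^2 - 2\pi\delta^3 - \pi\delta^4}
\end{equation*}
for some small explicit constant $a$, and then crudely upper-bound the higher-order terms in the numerator by a multiple of $\delta$ and lower-bound the denominator by a constant (using $\delta\leq \tfrac{1}{2\pi}$, which makes $\pi\delta^2\leq \tfrac{1}{4\pi}$ and the denominator clearly bounded below by, say, $\tfrac{1}{2}$). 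This gives the final bound $1+9\delta$.

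The main obstacle is purely bookkeeping: tracking the polynomial in $\delta$ carefully enough that the constant $9$ really works when $\delta \leq \tfrac{1}{2\pi}$, rather than a larger constant. The bound $\delta \leq \tfrac{1}{2\pi}$ is exactly what is needed to ensure that $\pi\delta^2 + 2\pi\delta^3 + \pi\delta^4 < 1$ so that the denominator in \cref{lem:lower-bound-subdivision-point-triangle} remains positive (guaranteeing the lower bound on $\norm{\mathcal{W}}$ is non-trivial), and to give enough slack in the crude upper bounds on the higher-order terms. No new geometric insight is required beyond what is already in \cref{lem:point-triangle-bounds,lem:lower-bound-subdivision-point-triangle}; everything reduces to an elementary inequality in a single variable.
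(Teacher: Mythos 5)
Your proposal follows the paper's proof essentially verbatim: form the ratio of the upper bound from \cref{lem:point-triangle-bounds} to $\norm{\mathcal{W}}$, observe it is maximised when $\norm{\mathcal{W}}$ is smallest, substitute the lower bound from \cref{lem:lower-bound-subdivision-point-triangle}, and reduce to a single-variable inequality in $\delta$ that yields $1+9\delta$ under $\delta\leq\frac{1}{2\pi}$, concluding via \cref{lem:lower-bound-flow}. The only (harmless) difference is that your denominator $1-\pi\delta^2-2\pi\delta^3-\pi\delta^4$ is the algebraically exact one, whereas the paper bounds it more crudely; either way the constant $9$ goes through.
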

\begin{proof}
    By \cref{lem:point-triangle-bounds} have that
    \begin{linenomath*}
    \begin{equation*}
        \norm{\tp} \leq (1 + \delta)^2\norm{\mathcal{W}} + \frac{2\pi\delta^3}{\sqrt{nm}} + \frac{\pi\delta^4}{\sqrt{nm}}
    \end{equation*}
    \end{linenomath*}

    \(\norm{\mathcal{W}}\) is also a lower bound on \(\norm{\tp}\); the ratio between the upper and lower bound is
    \begin{linenomath*}
    \begin{equation*}
        \frac{(1 + \delta)^2\norm{\mathcal{W}} + \frac{2\pi\delta^3}{\sqrt{nm}} + \frac{\pi\delta^4}{\sqrt{nm}}}{\norm{\mathcal{W}}}
    \end{equation*}
    \end{linenomath*}

    This ratio is largest for small values of \(\norm{\mathcal{W}}\), so we plug in the lower bound from \cref{lem:lower-bound-subdivision-point-triangle}:
    \begin{linenomath*}
    \begin{align*}
                & \quad \frac{(1 + \delta)^2\norm{\mathcal{W}} + \frac{2\pi\delta^3}{\sqrt{nm}} + \frac{\pi\delta^4}{\sqrt{nm}}}{\norm{\mathcal{W}}}                                                                                                              \\
        \leq    & \quad \frac{(1 + \delta)^2\left(\frac{\delta}{\sqrt{nm}} - \frac{\pi\delta^3 + 2\pi\delta^4 + \pi\delta^5}{\sqrt{nm}}\right) + \frac{2\pi\delta^3}{\sqrt{nm}} + \frac{\pi\delta^4}{\sqrt{nm}}}{\frac{\delta}{\sqrt{nm}} - \frac{\pi\delta^3 + 2\pi\delta^4 + \pi\delta^5}{\sqrt{nm}}}  \\
        \leq    & \quad \frac{(1 + 2\delta + \delta^2)(1 - \pi\delta - 2\pi\delta^2 - \pi\delta^3) + 2\pi\delta^2 + \pi\delta^3}{1 - \pi\delta - 2\pi\delta^2 - \pi\delta^3}                                                                                \\
        =       & \quad \frac{1 + 2\delta + \delta^2 - 2\pi\delta^2 - 5\pi\delta^3 - 4\pi\delta^4 - \pi\delta^5}{1 - \pi\delta - 2\pi\delta^2 - \pi\delta^3}                                                                                                \\
        =       & \quad 1 + \frac{2\delta + \delta^2 - \pi\delta - 4\pi\delta^3 - 4\pi\delta^4 - \pi\delta^5}{1 - \pi\delta - 2\pi\delta^2 - \pi\delta^3}                                                                                                   \\
        =       & \quad 1 + \delta + \frac{\delta + \delta^2 - \pi\delta - 2\pi\delta^3 - 3\pi\delta^4 - \pi\delta^5}{1 - \pi\delta - 2\pi\delta^2 - \pi\delta^3}                                                                                           \\
        <       & \quad 1 + \delta + \frac{\delta^2}{1 - \pi\delta - 2\pi\delta^2 - \pi\delta^3}                                                                                                                                                            \\
        \leq    & \quad 1 + \delta + \frac{\delta^2}{1 - \frac{1}{2} - \frac{1}{\pi} - \frac{1}{2\pi^2}}                                                                                                                                                    \tag{assuming \(\delta \leq \frac{1}{2\pi}\)} \\
        <       & \quad 1 + \delta + 8\delta^2                                                                                                                                                                                                              \\
        <       & \quad 1 + 9\delta                                                                                                                                                                                                                         \\
    \end{align*}
    \end{linenomath*}

    As \(\norm{\mathcal{W}}\) is also a lower bound for \(\norm{\tp}\) (\cref{lem:lower-bound-flow}), and \(\tp\) can obviously not have lower cost than the optimal transport plan, this gives a \((1 + 9\delta)\)-approximation.
    \hfill \qed
\end{proof}

Setting \(\delta = \eps / 9\) gives a \((1 + \eps)\)-approximation.

\subsection{Running time analysis}\label{sec:point-triangle-analysis}
Our analysis will be the same as in \cref{sec:point-segment-analysis}; we just need to determine the size of \(Q\).
We will once again make an alternative subdivision of each \(s_j \in S\), count the number of cells in that subdivision, and then argue that \(\norm{Q}\) differs by at most a constant factor.
Our alternative subdivision is a direct adaptation of the one used in \cref{sec:point-segment-analysis} to two dimensions: for each point \(p_i\) and triangle \(s_j\), we fill a square with side length \(2\delta / \sqrt{nm}\) centred on \(p_i\) with cells of size \(\delta^2 / \sqrt{nm}\).
From there, we add rings of cells of side length \(\delta^2 / \sqrt{nm}\) around the square, until the next full ring could have cells double the size without violating the ratio of \(1 + \delta\) between the shortest and longest distance to \(p_i\) for any cell in the ring.
We repeat this process until we have covered a square of size \(\Delta \times \Delta\).
Let \(R_{i,j}\) be the resulting set of cells; see \cref{fig:point-quadtree-cells} for an example.
The proof is similar to \cref{lem:point-segment-alternative-count}.

\begin{figure}
    \centering
    \begin{subfigure}{0.6\textwidth}
        \raggedright
        \hspace*{10mm}
        \includegraphics{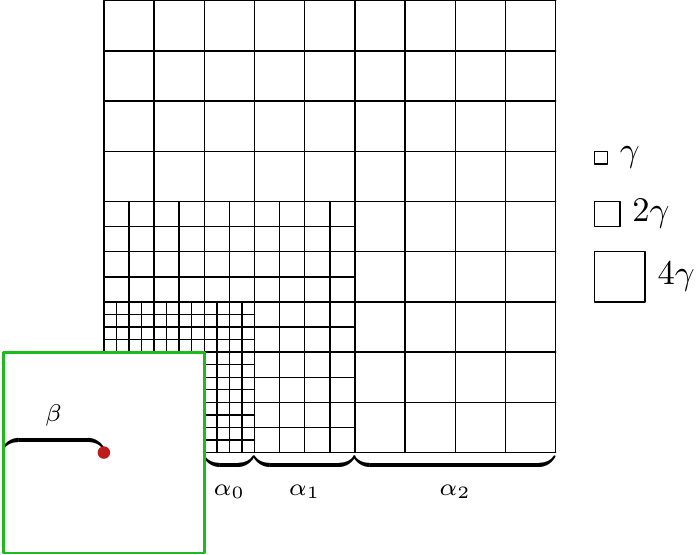}
    \end{subfigure}%
    \hfill
    \begin{subfigure}{0.35\textwidth}
        \raggedleft
        \includegraphics{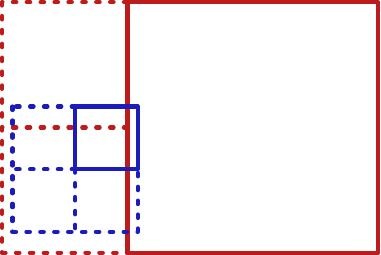}
        \hspace*{10mm}
    \end{subfigure}
    \caption{On the left, part of one quadrant of the construction of \(R_{i,j}\).
    There are \(\alpha_i\) layers of cells of size \(2^i\gamma\) before the size is doubled.
    On the right, an illustration of the argument that a cell of \(Q\) (blue) has at least one quarter the edge length of a cell of \(R\) that it intersects (red).}
    \label{fig:point-quadtree-cells}
\end{figure}

\begin{lemma}\label{lem:point-triangle-alternative-count}
    \(R = \bigcup R_{i,j}\) has \(O\left(\dfrac{nm}{\delta^2}\log\dfrac{nm\Delta}{\delta}\right)\) cells.
\end{lemma}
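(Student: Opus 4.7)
The plan is to follow the template of \cref{lem:point-segment-alternative-count}, lifted to two dimensions by replacing ``nested layers of subsegments'' with ``concentric rings of square cells''. As before, it is convenient to abbreviate $\beta = \delta/\sqrt{nm}$ and $\gamma = \delta^2/\sqrt{nm}$, so that $\gamma = \delta \beta$. The central square of side $2\beta$ is tiled by $(2\beta/\gamma)^2 = 4/\delta^2$ cells of side $\gamma$; everything else is organised in ``levels'', where level $i$ consists of $\alpha_i$ consecutive rings of cells of side $2^i\gamma$.

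First I would carry out, for $\alpha_i$, the direct analogue of the induction in \cref{lem:point-segment-alternative-count}. The ratio constraint now says that for each cell in the ring, the farthest point from $p_i$ (reached at an outer corner) divided by the closest point (reached at the inner edge) is at most $1 + \delta$. Since the diagonal of a cell of side $\ell$ is $\ell\sqrt2$, the resulting recurrence differs from the 1D one only by constants, and an identical induction yields $\alpha_i = \Theta(1/\delta)$. Plugging this into $L_i = 2\beta + 2\sum_{k=0}^{i}\alpha_k 2^k\gamma$ together with $\gamma/\delta = \beta$ collapses the geometric sum to $L_i = \Theta(2^i\beta)$.

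Next I would count the cells \emph{per level}. Level $i$ occupies the annular region between concentric squares of sides $L_{i-1}$ and $L_i$, which has area $L_i^2 - L_{i-1}^2 = \Theta(2^{2i}\beta^2)$, and every cell in this level has area $(2^i\gamma)^2 = 2^{2i}\delta^2\beta^2$. Dividing gives $\Theta(1/\delta^2)$ cells per level, matching the count of the central square. The number of levels $k$ needed for $L_k \geq \Delta$ is $k = O(\log(\Delta/\beta)) = O(\log(\sqrt{nm}\,\Delta/\delta))$, so $R_{i,j}$ contains $O\bigl(\tfrac{1}{\delta^2}\log\tfrac{\sqrt{nm}\,\Delta}{\delta}\bigr)$ cells. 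Summing over the $nm$ point--triangle pairs and absorbing the $\sqrt{\,\cdot\,}$ inside the logarithm gives the claimed $O\bigl(\tfrac{nm}{\delta^2}\log\tfrac{nm\Delta}{\delta}\bigr)$.

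The main obstacle is the $\alpha_i$ induction: unlike in 1D, the ratio inside a square cell is governed by the diagonal, so one must be a little careful in the recurrence to confirm that the extra $\sqrt{2}$ is absorbed into the constants. Once the induction yields $\alpha_i = \Theta(1/\delta)$ and therefore $L_i = \Theta(2^i\beta)$, everything else is a routine area/log computation. Notably, this asymptotic count is insensitive to whether $p_i$ falls inside the $\Delta \times \Delta$ bounding cell of $s_j$ or just outside it, since the relevant annuli intersect the bounding cell in at most the same number of level-$i$ cells.
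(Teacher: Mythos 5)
Your proposal follows essentially the same route as the paper: the same $\beta,\gamma$ normalisation, the same decomposition into a central square of $\Theta(1/\delta^2)$ cells plus geometrically growing levels with $\alpha_i=\Theta(1/\delta)$ rings each, and the same $k=O(\log(\sqrt{nm}\,\Delta/\delta))$ bound on the number of levels. The only (immaterial) differences are that you count cells per level by an area ratio where the paper uses an explicit per-ring formula, and that you absorb the diagonal factor $\sqrt2$ into the constants of the $\alpha_i$ induction where the paper instead keeps the axis-aligned $\alpha_i=1/\delta$ and shows one extra halving of each cell restores the $1+\delta$ ratio off-axis.
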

\begin{proof}
    We define \(\beta = \frac{\delta}{\sqrt{nm}}\) and \(\gamma = \frac{\delta^2}{\sqrt{nm}}\).
    In the following, we only analyse the case where \(p_{i,j}\) is inside \(s_{i,j}\); if it lies outside, the number of cells will be smaller, as the size of the cells increases with distance.
    We also analyse the number of cells in one quadrant only; the total number is simply four times as many.
    See \cref{fig:point-quadtree-cells} for an illustration of a quadrant.
    The number of cells created as we add rings of cells on \(s_{i,j}\) can then be written as
    \begin{linenomath*}
    \begin{equation*}
        \frac{\beta^2}{\gamma^2} + \sum_{i = 0}^{k}{2\alpha_i \cdot \frac{\beta + \sum_{j = 0}^{i - 1}{\alpha_j2^j\gamma}}{2^i\gamma} + \alpha_i^2}
    \end{equation*}
    \end{linenomath*}

    \noindent
    where \(k\) is the number of times we double the size of the cells, and \(\alpha_i\) is the number of rings containing cells of a size that has been doubled \(i\) times.
    The number of cells can then be calculated by finding the values of \(k\) and \(\alpha_i\).
    We take the values of \(\alpha_i\) to be the same as in \cref{lem:point-segment-alternative-count} (i.e.
    \(1 / \delta\)): along a horizontal or vertical line through \(p_i\) these values give the exactly correct distance ratios, and cells not on this line can be made to have the correct ratio through one extra subdivision.

    Let \((x, y)\) be the vector from \(p_i\) to the closest point on the cell.
    Assume w.l.o.g. that \(0 \leq y \leq x\); the other cases are symmetrical.
    By construction of our subdivision, we know that the cell has size at most \(\delta x\).
    We will now show that by dividing the cell one extra time (i.e. to a size of \(\delta x / 2\)), the furthest point will have the desired ratio irrespective of the value of \(y\).
    \begin{linenomath*}
    \begin{align*}
        \frac{\sqrt{(x + \frac{\delta x}{2})^2 + (y + \frac{\delta x }{2})^2}}{\sqrt{x^2 + y^2}}    & \leq 1 + \delta                                                       \\
        \frac{(x + \frac{\delta x}{2})^2 + (y + \frac{\delta x }{2})^2}{x^2 + y^2}                  & \leq (1 + \delta)^2                                                   \\
        \frac{x^2 + y^2 + \delta x^2 + \delta xy + \frac{\delta^2 x^2}{2}}{x^2 + y^2}               & \leq 1 + 2\delta + \delta^2                                           \\
        \frac{\delta x^2 + \delta xy + \frac{\delta^2 x^2}{2}}{x^2 + y^2}                           & \leq 2\delta + \delta^2                                               \\
        \delta xy                                                                                   & \leq \delta x^2 + \frac{\delta^2 x^2}{2} + 2\delta y^2 + \delta^2 y^2 \\
    \end{align*}
    \end{linenomath*}

    \noindent
    As \(\delta xy \leq \delta x^2\), and the other terms on the right-hand side are positive, the inequality holds.
    As such, the construction described can be turned into one where all cells have the desired ratio with one extra subdivision.

    Plugging the values of \(\alpha_i\), \(\beta\), \(\gamma\) into our initial formula, we can obtain the number of cells as a function of \(k\):
    \begin{linenomath*}
    \begin{align*}
            & \quad \left(\frac{\frac{\delta}{\sqrt{nm}}}{\frac{\delta^2}{\sqrt{nm}}}\right)^2 + \sum_{i = 0}^{k}{\frac{2}{\delta} \cdot \frac{\frac{\delta}{\sqrt{nm}} + \sum_{j = 0}^{i - 1}{\frac{1}{\delta}2^j\frac{\delta^2}{\sqrt{nm}}}}{2^i\frac{\delta^2}{\sqrt{nm}}} + \frac{1}{\delta^2}}  \\
        =   & \quad \frac{1}{\delta^2} + \frac{k}{\delta^2} + \frac{2}{\delta}\sum_{i = 0}^{k}{\frac{\frac{\delta}{\sqrt{nm}} + \frac{\delta}{\sqrt{nm}}\sum_{j = 0}^{i - 1}{2^j}}{2^i\frac{\delta^2}{\sqrt{nm}}}} \\
        =   & \quad \frac{1}{\delta^2} + \frac{k}{\delta^2} + \frac{2}{\delta}\sum_{i = 0}^{k}{\frac{\frac{\delta}{\sqrt{nm}} + \frac{\delta}{\sqrt{nm}}(2^i - 1)}{2^i\frac{\delta^2}{\sqrt{nm}}}} \\
        =   & \quad \frac{1}{\delta^2} + \frac{k}{\delta^2} + \frac{2}{\delta}\sum_{i = 0}^{k}{\frac{2^i\frac{\delta}{\sqrt{nm}}}{2^i\frac{\delta^2}{\sqrt{nm}}}} \\
        =   & \quad \frac{1}{\delta^2} + \frac{k}{\delta^2} + \frac{2}{\delta}\sum_{i = 0}^{k}{\frac{1}{\delta}} \\
        =   & \quad \frac{1}{\delta^2} + \frac{k}{\delta^2} + \frac{2k}{\delta^2} \\
        \in & \quad O\left(\frac{k}{\delta^2}\right)
    \end{align*}
    \end{linenomath*}

    We can directly calculate the value of \(k\) by considering the number of doublings needed to cover a horizontal line segment of length \(\Delta\) starting at \(p_i\):
    \begin{linenomath*}
    \begin{align*}
        \frac{\delta}{\sqrt{nm}} + \sum_{i = 0}^{k}{\frac{1}{\delta} \cdot 2^i \cdot \frac{\delta^2}{\sqrt{nm}}}    & = \Delta \\
        \frac{\delta}{\sqrt{nm}} + \frac{\delta}{\sqrt{nm}}\sum_{i = 0}^{k}{2^i}                                    & = \Delta \\
        1 + \sum_{i = 0}^{k}{2^i}                                                                                   & = \frac{\sqrt{nm}\Delta}{\delta} \\
        2^{k + 1}                                                                                                   & = \frac{\sqrt{nm}\Delta}{\delta} \\
        k                                                                                                           & \in O\left(\log\frac{nm\Delta}{\delta}\right)  \\
    \end{align*}
    \end{linenomath*}

    This gives a total number of cells of \(O\left(\frac{1}{\delta^2}\log\frac{nm\Delta}{\delta}\right)\) per point-triangle pair.
    Over all pairs, we obtain a total number of cells of \(O\left(\frac{nm}{\delta^2}\log\frac{nm\Delta}{\delta}\right)\).
    \hfill \qed
\end{proof}

\begin{lemma}\label{lem:point-triangle-count}
    The set \(Q\) has \(O\left(\dfrac{nm}{\delta^2}\log\dfrac{nm\Delta}{\delta}\right)\) cells.
\end{lemma}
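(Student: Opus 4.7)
The plan is to mirror the strategy used in \cref{lem:point-segment-count}, which established the analogous bound in the segment setting. I already have \cref{lem:point-triangle-alternative-count}, which bounds $|R|$ by $O\left(\frac{nm}{\delta^2}\log\frac{nm\Delta}{\delta}\right)$, so it suffices to show that $|Q|$ exceeds $|R|$ only by a constant factor. To do this I will argue that (i) every $Q$-cell $q$ that overlaps an $R$-cell $r$ cannot be too much smaller than $r$, and (ii) each $R$-cell is therefore overlapped by only $O(1)$ many $Q$-cells.

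For step (i), I would use a minimality argument on the quadtree subdivision that produces $Q$. If some $q \in Q$ overlapping $r \in R$ had edge length strictly less than one quarter of the edge length of $r$, then the parent of $q$ in the quadtree (which has edge length less than half that of $r$) would fit inside a square at most the size of $r$ and at roughly the same location. By the construction of $R$, any such region satisfies the ratio condition (furthest-to-closest distance ratio at most $1+\delta$) for every $p_i \in P$, since $r$ itself does. Hence the parent of $q$ would also satisfy the ratio condition and would not have been subdivided, contradicting the existence of $q$. This is exactly the intuition illustrated in the right half of \cref{fig:point-quadtree-cells}. A minor subtlety I would need to address is the case where $q$ was stopped because it lies entirely within distance $\delta/\sqrt{nm}$ of some point; such $q$'s cannot be arbitrarily small either, because they would still have been generated by the same quadtree subdivision rule, so the same parent-argument applies.

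For step (ii), once we know that every $q$ overlapping a fixed $r$ has edge length at least a quarter of $r$'s edge length, only $O(1)$ disjoint $Q$-cells can intersect $r$ (a naive packing bound gives at most a constant, e.g.\ on the order of $25$ accounting for boundary overlap). Combining these two steps with \cref{lem:point-triangle-alternative-count} yields
\begin{equation*}
    |Q| \;=\; O(|R|) \;=\; O\!\left(\frac{nm}{\delta^2}\log\frac{nm\Delta}{\delta}\right),
\end{equation*}
as claimed.

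The main obstacle I anticipate is carefully handling cells that lie near the boundaries of the $\Delta \times \Delta$ grid cells used to initialize the subdivision, and cells stopped early because they are within distance $\delta/\sqrt{nm}$ of a point in $P$. In both cases the quadtree-parent argument has to be checked to confirm that the parent cell would indeed have satisfied the $(1+\delta)$-ratio (or proximity) stopping condition that $R$-cells satisfy. Once these edge cases are verified, the counting is immediate.
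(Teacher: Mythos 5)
Your proposal is correct and follows essentially the same argument as the paper: show each $Q$-cell overlapping an $R$-cell $r$ has edge length at least a quarter of $r$'s (else its quadtree parent would already satisfy the stopping condition and $q$ was subdivided unnecessarily), then use disjointness to get the packing constant of $25$ and conclude $|Q| = O(|R|)$. The paper's proof is just a terser version of this, citing the same figure for the quarter-edge-length claim.
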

\begin{proof}
    Consider any cell \(r \in R\).
    Any cell \(q \in Q\) that overlaps with \(r\) has \(\norm{q} \geq \norm{r}/16\): otherwise \(q\) was subdivided unnecessarily; see \cref{fig:point-quadtree-cells}.
    As the cells in \(Q\) are disjoint, it follows that \(r\) can overlap with at most 25 cells in \(Q\).
    As such, \(Q\) contains at most 25 times more cells than \(R\), which, by \cref{lem:point-triangle-alternative-count}, is \(O\left(\frac{nm}{\delta^2}\log\frac{nm\Delta}{\delta}\right)\).
    \hfill \qed
\end{proof}

This leads to the following result:

\begin{theorem}\label{thm:point-triangle-algorithm}
    Let \(P\) be a set of \(n\) weighted points and \(S\) be a set of \(m\) triangles with equal total weight, let \(\Delta\) be the longest edge length in \(S\) after normalising its total area to one, let \(\norm{\tp^*}\) be the cost of an optimal transport plan between \(P\) and \(S\), and let \(\delta\) be any constant \(> 0\).
    Given an algorithm that constructs a \((1 + \delta)\)-approximation between weighted sets of \(k\) points in \(f_\delta(k)\) time, we can construct a transport plan between \(P\) and \(S\) with cost \(\leq (1 + 9\delta)\norm{\tp^*}\) in \(O\left(f_\delta\left(\frac{nm}{\delta^2} \polylog\left(\frac{nm\Delta}{\delta}\right)\right)\right)\) time.
\end{theorem}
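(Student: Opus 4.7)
The plan is to adapt the argument of Theorem~\ref{thm:point-segment-main-result} to the triangle setting by combining the subdivision approximation guarantee of Theorem~\ref{thm:point-triangle-approximation} with the size bound of Lemma~\ref{lem:point-triangle-count} and a call to the black-box point-to-point transport approximation algorithm.

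First I would execute the construction of Section~\ref{sec:point-triangle}: build the refined cell set $Q$ by overlaying the $\Delta\times\Delta$ uniform grid on $S$ and recursively subdividing any cell that witnesses a distance ratio larger than $1+\delta$ against some point of $P$ (unless the whole cell already lies within distance $\delta/\sqrt{nm}$ of that point). For each cell $q\in Q$, pick an arbitrary witness point carrying weight equal to the total area of triangle pieces contained in $q$, obtaining the weighted point set $T$ with $|T|=|Q|=O\!\left(\frac{nm}{\delta^2}\log\frac{nm\Delta}{\delta}\right)$ by Lemma~\ref{lem:point-triangle-count}. Invoke the given algorithm on $(P,T)$ to obtain a transport plan $\hat\nu$ whose cost is at most $(1+\delta)\norm{\nu}$, where $\nu$ is the exact optimal transport plan between $P$ and $T$. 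Finally, spread the mass delivered to each $t\in T$ uniformly over the triangle pieces inside the corresponding cell to produce a transport plan $\hat\tp$ between $P$ and $S$.

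Next I would reuse the chain of inequalities in the proof of Lemma~\ref{lem:point-triangle-bounds}, substituting $\hat\nu$ for $\nu$. The only change is a single additional factor of $(1+\delta)$ in the cost of the point-to-point plan, which propagates once to yield $\norm{\hat\tp}\le (1+\delta)^3\norm{\mathcal W}+O(\delta^3/\sqrt{nm})$ with explicit constants. Dividing by the lower bound $\norm{\mathcal W}\ge \delta/\sqrt{nm}-O(\delta^3/\sqrt{nm})$ from Lemma~\ref{lem:lower-bound-subdivision-point-triangle} and simplifying exactly as in Theorem~\ref{thm:point-triangle-approximation} produces a ratio of the form $1+c\delta$ for an explicit constant $c$. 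Running the procedure with a rescaled parameter $\delta/c'$ (for an appropriate $c'$ independent of $n$, $m$, and $\Delta$) then yields the stated $(1+9\delta)$-approximation, at the cost of absorbing a constant factor into the size of $Q$. For the running time, the initial grid together with the compressed quadtree structure of~\cite{vanderhoog2018} is built in $O(m\log m)$ time, followed by $O(|Q|)$ time for the recursive refinement; constructing $T$ and converting $\hat\nu$ into $\hat\tp$ are linear in $|Q|$ and in the support of $\hat\nu$ respectively. The dominant cost is the call to the black-box approximator on $|T|$ weighted points, giving the claimed $O\!\left(f_\delta\!\left(\frac{nm}{\delta^2}\polylog\frac{nm\Delta}{\delta}\right)\right)$ bound.

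The main obstacle is the bookkeeping in the ratio computation, exactly as in the proof of Theorem~\ref{thm:point-segment-main-result}: the cube $(1+\delta)^3$ introduces extra linear terms in $\delta$ that must be collected before the rational function can be bounded by $1+(\text{constant})\cdot\delta$, and the resulting constant must be matched against the $9\delta$ in the theorem statement by the explicit rescaling described above. Every other ingredient — the subdivision, the lower bound on $\norm{\mathcal W}$, the spreading step, and the asymptotic size of $Q$ — is already supplied by the earlier lemmas in this section.
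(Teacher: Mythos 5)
Your proposal is correct and follows essentially the same route the paper takes: the paper states this theorem without an explicit proof, intending exactly the combination you describe of the subdivision of \cref{sec:point-triangle}, the size bound of \cref{lem:point-triangle-count}, and the adaptation of \cref{lem:point-triangle-bounds} and \cref{thm:point-triangle-approximation} to an approximate point-to-point plan, mirroring the proof of \cref{thm:point-segment-main-result}. If anything you are slightly more careful than the paper: your explicit rescaling of \(\delta\) to recover the constant \(9\) accounts for the extra factor \((1+\delta)\) introduced by the black-box approximation, which the paper's statement silently carries over from \cref{thm:point-triangle-approximation} (in the segment case the paper instead let the constant grow from \(17\) to \(25\)).
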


We can again calculate a \((1 + \delta)\)-approximation to \(\nu\) in \(O(N\delta^{-O(1)}\polylog{N})\) time using the algorithm by Fox and Lu~\cite{fox2022}, giving the following corollary to the previous theorem:

\begin{corollary}\label{cor:point-triangle-algorithm}
    For any constant \(\eps > 0\) and some constant \(c\), a transport plan between \(P\) and \(S\) with cost \(\leq (1 + \eps)\norm{\tp^*}\) can be constructed in \(O\left(\frac{nm}{\eps^c} \polylog\left(\frac{nm\Delta}{\eps}\right)\right)\) time with high probability.
\end{corollary}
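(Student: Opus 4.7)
The plan is to invoke Theorem~\ref{thm:point-triangle-algorithm} directly, plugging in a concrete subroutine $f_\delta$ and a concrete choice of $\delta$. Since Theorem~\ref{thm:point-triangle-algorithm} already delivers a transport plan of cost $\le (1+9\delta)\norm{\tp^*}$ in time $O\bigl(f_\delta\bigl(\tfrac{nm}{\delta^2}\polylog(\tfrac{nm\Delta}{\delta})\bigr)\bigr)$ whenever $f_\delta$ computes a $(1+\delta)$-approximate optimal transport between two weighted point sets of total size $k$, all that remains is (i) to pick $\delta$ so that the multiplicative error matches $1+\eps$, and (ii) to substitute a concrete time bound for $f_\delta$.

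First, I would set $\delta = \eps/9$. This is a valid choice as long as $\delta \le \tfrac{1}{2\pi}$, the hypothesis under which Theorem~\ref{thm:point-triangle-approximation} (and hence Theorem~\ref{thm:point-triangle-algorithm}) applies; for any larger constant $\eps$ we can simply cap $\delta$ at $1/(2\pi)$, which only changes the hidden constants. With this choice, the approximation guarantee of Theorem~\ref{thm:point-triangle-algorithm} becomes $(1+9\delta)\norm{\tp^*} = (1+\eps)\norm{\tp^*}$, as required.

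Next, I would instantiate $f_\delta$ using the randomised algorithm of Fox and Lu~\cite{fox2022}, which computes a $(1+\delta)$-approximate optimal transport between two weighted point sets of total size $N$ in $O(N\delta^{-O(1)}\polylog N)$ time with high probability. Substituting $N = O\bigl(\tfrac{nm}{\delta^2}\log\tfrac{nm\Delta}{\delta}\bigr)$ from Lemma~\ref{lem:point-triangle-count} into this bound, and absorbing the $1/\delta^2$ factor, the polylogarithmic factors, and the $\delta^{-O(1)}$ dependence into a single factor of $\eps^{-c}$ for some constant $c$, yields the stated running time $O\bigl(\tfrac{nm}{\eps^c}\polylog(\tfrac{nm\Delta}{\eps})\bigr)$. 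The ``with high probability'' qualifier is inherited directly from the Fox--Lu subroutine, since it is the only randomised component of the pipeline.

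There is no real obstacle here: the corollary is essentially a packaging step. The only point worth double-checking is that the polylogarithmic factors arising from $\log(nm\Delta/\delta)$ being raised to a power (through the $\polylog N$ factor in Fox--Lu) still fit under a single $\polylog(nm\Delta/\eps)$ term after replacing $\delta$ by $\eps/9$, which is immediate since $\log(1/\delta) = \Theta(\log(1/\eps))$ and $\polylog$ absorbs any constant power.
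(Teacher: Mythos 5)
Your proposal is correct and matches the paper's own argument: the paper likewise obtains the corollary by instantiating $f_\delta$ in Theorem~\ref{thm:point-triangle-algorithm} with the Fox--Lu $O(N\delta^{-O(1)}\polylog N)$-time subroutine and setting $\delta = \eps/9$, with the high-probability guarantee inherited from that subroutine. The bookkeeping you do (substituting the cell count from Lemma~\ref{lem:point-triangle-count}, absorbing $\delta^{-O(1)}$ and the polylogarithmic factors into $\eps^{-c}\polylog(nm\Delta/\eps)$) is exactly what the paper leaves implicit.
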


\section{Segments to segments}\label{sec:segment-segment}
In the previous section, we considered the case when one of our two input sets consists of points.
We now describe an algorithm to compute the EMD between two sets of line segments.
Here, we cannot directly apply our general approach of subdividing: the optimal transport plan may have a cost arbitrarily close to zero.
As such, if we disregard everything within some radius of one of the sets, there may be nothing left.
We solve this by introducing an additive term into the approximation.
The cost of a plan generated by our algorithm is \((1 + \eps)\norm{\tp^*} + A\), for some value \(A\) depending on \(\eps\).
This allows us to greedily match parts of the input within a small distance of each other, and then solve the remainder with our previous approach.

Let \(P = \{p_1, \ldots, p_n\}\) and \(S = \{s_1, \ldots, s_m\}\) be sets of line segments with equal total length.
Our algorithm is then as follows.
First, we greedily match equal-length pieces of \(P\) and \(S\) that are within distance \(\delta / nm\) of each other, until no such pieces remain; we describe this process in more detail later.
Let \(P'\) and \(S'\) be the remaining parts of \(P\) and \(S\), respectively.
We subdivide \(P'\) and \(S'\) as before: for every \(p \in P'\), if there is an \(s \in S'\) such that the ratio between the closest and furthest distance is more than \(1 + \delta\), cut \(p\) in half; after processing \(P'\), do the same for \(S'\).
Call the resulting sets \(Q\) and \(R\).
We then choose a point on each \(q \in Q\) and \(r \in R\), with a weight equal to the length of the subsegment, and solve an optimal transport problem between these two point sets.
Our final transport plan is then obtained by spreading the mass moved between any two points evenly over the segments they were chosen on.

We first prove that greedily matching parts of the input within distance \(\delta / nm\) increases the cost of an optimal solution by at most an additive term.
The proof for the approximation algorithm then follows the same structure as in the previous sections.
Let \(\tp_M\) be a transport plan between the parts of the input that were greedily matched, in which the longest distance is at most \(\delta / nm\), and let \(\tp^*_G\) be an optimal transport plan for the remainder of the input.

\begin{figure}
    \centering
    \includegraphics{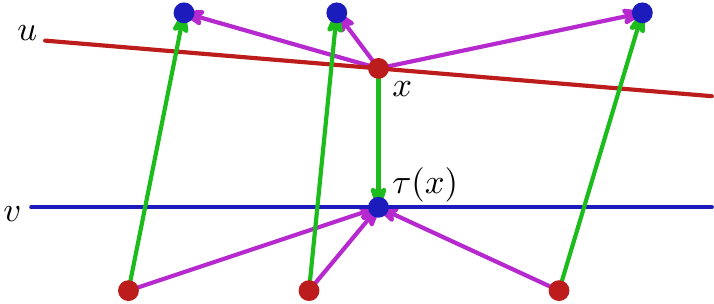}
    \caption{Two points, with their mass assignment in an optimal solution shown in purple.
    We greedily match a point \(x\) on \(u\) to \(\tau(x)\) on \(v\), and obtain the assignment of mass shown in green.}
    \label{fig:greedy-matching-swap}
\end{figure}

\begin{lemma}\label{lem:greedy-subsegment-matching}
    Let \(u\) and \(v\) be two subsegments with length \(l\) of \(P\) and \(S\), respectively.
    If all mass from \(u\) can be transported to \(v\) with distance at most \(\kappa\), then an optimal transport plan between \(P \setminus \{u\}\) and \(S \setminus \{v\}\) has cost at most \(\norm{\tp^*} + l\kappa\).
\end{lemma}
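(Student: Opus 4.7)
The plan is to exhibit a valid transport plan between $P \setminus \{u\}$ and $S \setminus \{v\}$ whose cost exceeds $\norm{\tp^*}$ by at most $l\kappa$; since the optimal cost is no larger than that of any valid plan, the lemma follows. The construction starts from $\tp^*$ itself and reroutes every flow that involves $u$ or $v$, using the hypothesised transport $\tau : u \to v$ satisfying $d(x, \tau(x)) \leq \kappa$ as a ``bridge'', exactly in the spirit of \cref{fig:greedy-matching-swap}.

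I would first decompose the $u$- and $v$-related flows of $\tp^*$ into three classes: (i) mass sent from $u$ directly to $v$, which is simply discarded in the restricted problem; (ii) mass sent from $x \in u$ to some $b \in S \setminus \{v\}$; and (iii) mass sent from some $a \in P \setminus \{u\}$ to $y \in v$. Mass balance in $\tp^*$, applied at $u$ and at $v$, forces the totals in classes (ii) and (iii) to coincide. I would then pair them up via $\tau$: an infinitesimal unit travelling from $x \in u$ to $b$ in class (ii) is matched with an infinitesimal unit travelling from $a$ to $y = \tau(x)$ in class (iii). In the modified plan, both paired flows are deleted and replaced by the single direct flow $a \to b$; everything outside of $u$ and $v$ is left untouched.

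By construction, the new plan has no mass entering $v$ and no mass leaving $u$, so it is a valid transport plan between $P \setminus \{u\}$ and $S \setminus \{v\}$. Applying the triangle inequality to each swap gives
\begin{equation*}
d(a, b) \;\leq\; d(a, y) + d(y, x) + d(x, b) \;=\; d(a, y) + d(x, b) + d(x, \tau(x)),
\end{equation*}
so each unit of rerouted mass increases the total cost by at most $d(x, \tau(x)) \leq \kappa$. The total rerouted mass is at most the length of $u$, namely $l$, so the total cost increase is at most $l\kappa$.

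The main obstacle is making the informal ``pair up and reroute'' step rigorous at the level of measures, since $\tp^*$ is a measure on $P \times S$ rather than a map, and the mass arriving at a single $y \in v$ may come from many sources $a$. I would handle this with the standard gluing construction from optimal transport: first disintegrate $\tp^*|_{P \times v}$ against its $v$-marginal and glue with the coupling induced by $\tau^{-1}$ to obtain a measure on $P \times v \times u$; then glue with $\tp^*|_{u \times S}$ along the $u$ factor to obtain a measure on $P \times v \times u \times S$; finally project onto the $P \times S$ factors to define the rerouted plan. The triangle-inequality bound above then holds after integration against this measure, giving the claimed bound of $\norm{\tp^*} + l\kappa$.
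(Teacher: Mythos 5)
Your proposal is correct and follows essentially the same route as the paper's proof: starting from \(\tp^*\), rerouting the flows incident to \(u\) and \(v\) through the matching \(\tau\), and bounding the extra cost of each unit of rerouted mass by \(\kappa\) via the triangle inequality, for a total of \(l\kappa\). The paper realises the pairing by matching the unit-density outflow of each \(x \in u\) with the unit-density inflow of \(\tau(x)\) (parameterised over \([0,1]\)) rather than by separating out the direct \(u\)-to-\(v\) mass and gluing, but the substance is identical.
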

\begin{proof}
    We will construct a transport plan in which \(u\) and \(v\) are removed, having cost at most \(\norm{\tp^*} + l\kappa\).
    The cost of an optimal solution on the remainder is then not higher.

    Let \(\tp_s(x, t): \mathbb{R}^2 \times [0, 1] \rightarrow \mathbb{R}^2\) be a function describing, for a point \(x \in s\), where its mass comes from or goes to (recall that each point sends or receives mass density one), let \(d_s(x, t): \mathbb{R}^2 \times [0, 1] \rightarrow \mathbb{R}\) be defined as \(d(x, \tp_s(x, t))\), and let \(\tau: \mathbb{R}^2 \rightarrow \mathbb{R}^2\) be a mapping of points on \(u\) to points on \(v\) such that for all \(x \in u\), \(d(x, \tau(x)) \leq \kappa\).
    The cost of the part of \(\tp^*\) involving segments \(u\) and \(v\) can then be written as
    \begin{linenomath*}
    \begin{equation*}
        c^*(u) = \int_{x \in u}{\int{d_u(x, t)\dd t \dd x}}
    \end{equation*}
    \end{linenomath*}%
    \begin{linenomath*}
    \begin{equation*}
        c^*(v) = \int_{x \in u}{\int{d_v(\tau(x), t)\dd t \dd x}}
    \end{equation*}
    \end{linenomath*}
    We modify \(\tp^*\) by removing \(u\) and \(v\), and moving all mass that each point \(\tau(x)\) receives in \(\tp^*\) to where \(x\) moved it in \(\tp^*\); see \cref{fig:greedy-matching-swap}.
    We can distribute this mass in any way we like, as the total incoming and outgoing mass is one by definition.
    This gives a transport plan \(\tp\) with cost
    \begin{linenomath*}
    \begin{align*}
        \norm{\tp} = \norm{\tp^*}   & - \left(\int_{x \in u}{\int{d_u(x, t)\dd t \dd x}}\right) - \left(\int_{x \in u}{\int{d_v(\tau(x), t)\dd t \dd x}}\right) \\
                                    & + \left(\int_{x \in u}{\int{d(\tp_u(x, t), \tp_v(\tau(x), t))\dd t \dd x}}\right)
    \end{align*}
    \end{linenomath*}

    \noindent
    By the triangle inequality, \(d(\tp_u(x, t), \tp_v(\tau(x), t)) \leq d_v(\tau(x), t) + \kappa + d_u(x, t)\).
    It follows that
%
\begin{align}
        & \norm{\tp}    && \leq \norm{\tp^*}    && - \left(\int_{x \in u}{\int{d_u(x, t)\dd t \dd x}}\right) - \left(\int_{x \in u}{\int{d_v(\tau(x), t)\dd t \dd x}}\right) \\
        &               &&                      && + \left(\int_{x \in u}{\int{d_v(\tau(x), t) + \kappa + d_u(x, t)\dd t \dd x}}\right)                                                           \\
        &               && = \norm{\tp^*}       && + l\kappa \tag*{}
\end{align}
\mbox{ }\hfill\qed
\end{proof}

Note that this bound is tight in the worst case: consider horizontal line segments of unit length with their left endpoints having \(x\)-coordinate 0. If we take \(P\) to consist of two such segments at \(y = 0\) (\(p_1\)) and \(y = 2\) (\(p_2\)), and \(S\) to consist of two segments at \(y = 1\) (\(s_1\)) and \(y = 3\) (\(s_2\)), the optimal solution would move mass from \(p_1\) to \(s_1\) and from \(p_2\) to \(s_2\), giving a total cost of \(2\). If we set \(\kappa = 1\), we would greedily match \(p_2\) and \(s_1\), giving a total cost of \(3\), being exactly \(l\kappa\) more than the optimal.

We can now bound the costs of \(\tp^*_G\) and \(\tp_M\).

\begin{lemma}\label{lem:greedy-matching-remainder-upper-bound}
    \(\norm{\tp^*_G} \leq \norm{\tp^*} + \dfrac{\delta}{nm}\).
\end{lemma}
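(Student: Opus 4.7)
The plan is to obtain the bound by iterating \cref{lem:greedy-subsegment-matching}. The greedy matching procedure produces a finite sequence of matched subsegment pairs $(u_1, v_1), (u_2, v_2), \ldots, (u_K, v_K)$, where each pair consists of an equal-length piece $u_i \subseteq P$ and $v_i \subseteq S$ of length $l_i$ such that every point of $u_i$ lies within distance $\delta/nm$ of some corresponding point of $v_i$. After all matches are removed, what remains is $P'$ and $S'$, and $\tp^*_G$ is an optimal transport plan between these remainders.

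First I would apply \cref{lem:greedy-subsegment-matching} to the very first matched pair $(u_1, v_1)$, with $\kappa = \delta/nm$. This gives an optimal transport plan on $P \setminus \{u_1\}$ and $S \setminus \{v_1\}$ of cost at most $\norm{\tp^*} + l_1 \cdot \delta/nm$. I would then apply the same lemma again to the pair $(u_2, v_2)$, now treating the bound just obtained as the cost of an optimal plan on the reduced instance; this yields cost at most $\norm{\tp^*} + (l_1 + l_2) \cdot \delta/nm$ on $P \setminus \{u_1, u_2\}$ and $S \setminus \{v_1, v_2\}$. Iterating $K$ times gives
\begin{equation*}
\norm{\tp^*_G} \leq \norm{\tp^*} + \frac{\delta}{nm}\sum_{i=1}^{K} l_i.
\end{equation*}

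To finish, I would observe that the $u_i$'s are pairwise disjoint subsegments of the segments in $P$ (each greedy match removes a distinct piece), so $\sum_i l_i$ is bounded by the total length of $P$, which is $1$ by our normalisation. Substituting gives $\norm{\tp^*_G} \leq \norm{\tp^*} + \delta/nm$, as required.

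The only subtle step is the inductive application: one must be sure that \cref{lem:greedy-subsegment-matching} can be reapplied after the instance has already been modified. This is immediate because the lemma's hypothesis only asks that the two subsegments being removed are within distance $\kappa$, a property preserved by earlier removals, and its conclusion refers to \emph{an} optimal plan on the resulting reduced instance, which is exactly the setting needed for the next iteration. Hence no obstacle arises beyond bookkeeping, and the total additive overhead telescopes into the claimed $\delta/nm$ bound.
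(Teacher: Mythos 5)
Your proposal is correct and takes essentially the same route as the paper: the paper's proof likewise invokes \cref{lem:greedy-subsegment-matching} to charge each greedily matched pair of length \(l\) an extra cost of at most \(\delta l / nm\) and then bounds the total matched length by one. You simply spell out the iteration/telescoping that the paper leaves implicit.
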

\begin{proof}
    Any subsegments of \(P\) and \(S\) with length \(l\) that are greedily matched increase the cost of an optimal solution in the remaining part by at most \(\delta l / nm\) (\cref{lem:greedy-subsegment-matching}).
    The total length that can be greedily matched is at most one, so the total extra cost is at most \(\delta / nm\).
    \hfill \qed
\end{proof}

\begin{lemma}\label{lem:greedy-matching-upper-bound}
    \(\norm{\tp_M} \leq \dfrac{\delta}{nm}\).
\end{lemma}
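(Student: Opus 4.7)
The plan is a direct length-times-distance accounting. By construction, $\tp_M$ is a transport plan supported on pairs of points $(x, y)$ with $x \in P$ and $y \in S$ satisfying $d(x, y) \le \delta / nm$. Hence every unit of mass transported under $\tp_M$ travels a distance of at most $\delta / nm$, and the cost decomposes as an integral of the distance against the mass density of the plan.

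First, I would observe that the total mass moved by $\tp_M$ is at most the total length of the greedily matched pieces. Since the greedy procedure only matches equal-length pieces of $P$ and $S$, and the total length of $P$ (equivalently $S$) is normalised to one, the total mass moved is at most $1$. Second, since the distance bound $\delta / nm$ holds pointwise along the support of $\tp_M$, I can pull the distance factor out of the cost integral and obtain
\begin{equation*}
    \norm{\tp_M} = \int d(x, y) \dd \tp_M(x, y) \le \frac{\delta}{nm} \int \dd \tp_M(x, y) \le \frac{\delta}{nm} \cdot 1 = \frac{\delta}{nm}.
\end{equation*}

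There is essentially no obstacle here: the bound is a one-line consequence of the two invariants the greedy procedure maintains, namely the pointwise distance bound $\delta/nm$ enforced by the matching criterion and the total mass bound inherited from the normalisation of the input. The only subtlety worth stating carefully is that the greedy procedure produces a valid transport plan (equal lengths on both sides), so that $\tp_M$ is indeed well-defined and the total mass does not exceed one.
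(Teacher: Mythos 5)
Your proposal is correct and matches the paper's argument exactly: the greedy matching only pairs mass over distances at most $\delta/nm$, and the total mass moved is at most one, so the cost is at most $\delta/nm$. The integral formulation is just a more explicit writing of the same two-line argument the paper gives.
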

\begin{proof}
    By construction, the distance over which any mass is transported in \(\tp_M\) is at most \(\delta / nm\).
    The total mass transported is at most one, giving the bound.
    \hfill \qed
\end{proof}

For each segment \(p \in P\), we can straightforwardly compute a maximal subset that can be transported over distance at most \(\delta / nm\).
Consider each segment \(s \in S\): the supporting lines of \(p\) and \(s\) can intersect inside \(p\) or \(s\), outside both, or not at all.
If they don't intersect (i.e.
are parallel), computation of the parts that can be transported within the required distance is trivial.
If they intersect outside both, we can find the points on \(p\) and \(s\) furthest from the intersection point that are within the required distance, then find the largest distance we can move towards the intersection point while staying within the required distance.
If they intersect inside one or both of the segments, we split the segments at the intersection point and handle both sides using the case for intersections outside the segments.

Let \(P'\) and \(S'\) be the parts of \(P\) and \(S\) that remain after the greedy matching, with \(\norm{P'} = \norm{S'} = \ell\).
We subdivide \(P'\) and \(S'\) into \(Q\) and \(R\) as described above.
As before, we define a complete bipartite graph \(G = (Q \cup R, Q \times R)\), where the weight of each edge is equal to the shortest distance between the two subsegments it connects, and the capacity of each vertex is equal to the length of the subsegment it represents.
Let \(\mathcal{W}\) be a minimum cost flow in \(G\); we observe the following lower bound on its cost:

\begin{lemma}\label{lem:lower-bound-subdivision-segment-segment}
    \(\norm{\mathcal{W}} \geq \dfrac{\delta\ell}{nm}\).
\end{lemma}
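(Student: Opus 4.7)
The plan is to show that after the greedy matching has been run to completion, every point of $P'$ lies at Euclidean distance at least $\delta/nm$ from every point of $S'$. Granting this, every edge of the bipartite graph $G$ has weight at least $\delta/nm$, so any valid flow---which must route the entire remaining mass $\ell$ from $Q$ to $R$---incurs cost at least $\delta\ell/(nm)$. In particular $\norm{\mathcal{W}} \geq \delta\ell/(nm)$, as claimed.

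To establish the distance property, I would argue by contradiction. Suppose some $p \in P'$ and $s \in S'$ satisfy $d(p,s) < \delta/nm$. Since $P'$ and $S'$ are finite unions of closed subsegments, we can choose $\rho > 0$ small enough that the length-$\rho$ subsegments $u \subseteq P'$ and $v \subseteq S'$ containing $p$ and $s$ respectively exist (taken one-sided if $p$ or $s$ happens to be an endpoint) and that $d(p,s) + 2\rho < \delta/nm$. Pick any length-preserving affine bijection $\tau: u \to v$. Then for every $x \in u$, the triangle inequality yields
\begin{equation*}
d(x, \tau(x)) \leq d(x, p) + d(p, s) + d(s, \tau(x)) \leq \rho + d(p, s) + \rho < \delta/nm,
\end{equation*}
so all mass from $u$ can be transported to $v$ within distance $\delta/nm$. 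Hence $(u, v)$ is an equal-length matchable pair, contradicting the maximality of the greedy procedure.

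The main subtlety will be formalising what ``no such pieces remain'' means: we must check that the concrete pairwise procedure described earlier (handling parallel segments, intersections outside both segments, and intersections inside one or both) is indeed exhaustive enough to leave no matchable residue between any pair of input segments, and hence between any pair of points of $P'$ and $S'$. Once that is in place, the continuity argument above goes through without further complications, and the remaining content of the lemma is the one-line mass-times-minimum-distance bound on $\norm{\mathcal{W}}$.
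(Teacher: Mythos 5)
Your proposal is correct and takes essentially the same approach as the paper: the paper's proof is just the observation that by construction all distances in $G$ are at least $\delta/nm$ and the total mass routed is $\ell$. Your contradiction/continuity argument merely fills in the ``by construction'' step that the paper leaves implicit, which is a reasonable (if not strictly necessary) elaboration.
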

\begin{proof}
    By construction, the distances in \(G\) are at least \(\delta / nm\).
    As the total mass moved is \(\ell\), we obtain the bound stated in the lemma.
    \hfill \qed
\end{proof}

\cref{lem:lower-bound-flow} also still applies to the part of the input that remains after greedy matching.
We now approximate \(\norm{\mathcal{W}}\) by reducing the flow problem to a transportation problem between two weighted point sets.
We pick any point on each \(q \in Q\) and \(r \in R\), and give them weights equal to \(\norm{q}\) and \(\norm{r}\).
Call these sets of points \(U\) and \(V\).
We can now bound the cost of \(\tp_G\) in terms of \(\tp^*_G\) using the flow problem.

\begin{lemma}\label{lem:segment-segment-bounds}
    \(\norm{\tp^*_G} \leq \norm{\tp_G} \leq (1 + \delta)^2\norm{\tp^*_G}\).
\end{lemma}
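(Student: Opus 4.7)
The plan is to mirror the two-stage argument of Lemmas~\ref{lem:bound-discretised-point-segment} and~\ref{lem:point-triangle-bounds}, but now with the cleaner guarantee that no additive term appears — this is precisely the payoff of the greedy matching step, which removes the ``small distance'' regime in advance, so that every pair $(q,r)\in Q\times R$ already lies at distance bounded away from zero.

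The lower bound $\norm{\tp^*_G}\le \norm{\tp_G}$ is immediate since $\tp_G$ is a valid transport plan between $P'$ and $S'$ and $\tp^*_G$ is optimal among these. For the upper bound, I would first invoke Lemma~\ref{lem:lower-bound-flow} applied to $(P',S')$ and the graph $G$ on $Q\cup R$ to get $\norm{\mathcal{W}}\le \norm{\tp^*_G}$, where $\mathcal{W}$ is a minimum-cost flow in $G$. I then follow the pipeline $\mathcal{W}\rightsquigarrow\nu\rightsquigarrow\tp_G$ and lose a factor $(1+\delta)$ at each arrow.

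For $\mathcal{W}\rightsquigarrow\nu$: use the flow amounts prescribed by $\mathcal{W}$ as a (feasible but not necessarily optimal) transport plan between the point sets $U$ and $V$. By construction of $Q$ and $R$, for every $q\in Q$ and every segment $s\in S'$ the ratio of the largest to the smallest distance is at most $1+\delta$; since each $r\in R$ is contained in some such $s$, the same ratio bound transfers to every pair $(q,r)$. Thus the distance $d(u,v)$ between the chosen representatives is at most $(1+\delta)$ times the minimum distance $d(q,r)$ used by $\mathcal{W}$, giving $\norm{\nu}\le(1+\delta)\norm{\mathcal{W}}$. For $\nu\rightsquigarrow\tp_G$: redistributing the mass that $\nu$ sends from $u$ to $v$ uniformly over $q$ and $r$ again only inflates every pointwise distance by the same $(1+\delta)$ factor, so $\norm{\tp_G}\le(1+\delta)\norm{\nu}$. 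Chaining these inequalities with $\norm{\mathcal{W}}\le\norm{\tp^*_G}$ yields the desired $\norm{\tp_G}\le(1+\delta)^2\norm{\tp^*_G}$.

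The main obstacle, and the place where the greedy preprocessing is essential, is justifying that the uniform ratio bound $d_{\max}(q,r)/d_{\min}(q,r)\le 1+\delta$ holds for \emph{every} pair $(q,r)\in Q\times R$ without a small-distance carve-out of the sort that produced the additive terms in Lemmas~\ref{lem:bound-discretised-point-segment} and~\ref{lem:point-triangle-bounds}. After greedy matching, any remaining portions of $P'$ and $S'$ are separated — otherwise an equal-length pair could still have been matched — so the recursive halving that defines $Q$ and $R$ terminates with subsegments whose pairwise distances are all bounded below, and the ratio bound applies globally. Once this is in hand the two $(1+\delta)$ losses compose into $(1+\delta)^2$ with no additive correction, as claimed.
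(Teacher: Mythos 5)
Your proposal is correct and follows essentially the same route as the paper's proof: lower bound by optimality of \(\tp^*_G\), upper bound via the chain \(\norm{\mathcal{W}} \leq \norm{\tp^*_G}\) together with \(\norm{\nu} \leq (1+\delta)\norm{\mathcal{W}}\) and \(\norm{\tp_G} \leq (1+\delta)\norm{\nu}\), each factor coming from the distance-ratio guarantee of the subdivision. Your added remark on why the greedy matching removes the small-distance regime (and hence the additive term) is consistent with how the paper uses \cref{lem:lower-bound-subdivision-segment-segment}, so no gap remains.
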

\begin{proof}
    Let \(\nu\) be an optimal transport plan between \(U\) and \(V\), and let \(\norm{\nu}\) be its cost.
    We can upper bound \(\norm{\nu}\) by measuring all distances to the furthest points inside the segments.
    By construction of \(Q\) and \(R\), the ratio of longest to shortest distance is at most \(1 + \delta\).
    The cost \(\norm{\nu}\) of \(\nu\) can therefore not be more than \((1 + \delta)\norm{\mathcal{W}}\).

    We can turn \(\tp\) into a valid transport plan \(\tp_G\) between \(Q\) and \(R\) by spreading the mass moved to each point in \(U\) and \(V\) evenly over the segments in \(Q\) and \(R\) that they were derived from.
    Again, by construction, the distances increase by a factor of at most \(1 + \delta\), giving \(\tp_G \leq (1 + \delta)\norm{\nu}\).

    Plugging in the bound on \(\norm{\nu}\) obtained above, we obtain an upper bound of \((1 + \delta)^2\norm{\mathcal{W}}\).
    As \(\norm{\mathcal{W}} \leq \norm{\tp^*_G}\), we obtain that \(\norm{\tp_G} \leq (1 + \delta)^2\norm{\tp^*_G}\).
    The lower bound follows directly from the fact that \(\tp^*_G\) is optimal, and therefore cannot have a cost higher than that of \(\tp_G\).
    \hfill \qed
\end{proof}

We can then show that, for a transport plan \(\tp = \tp_G + \tp_M\), \(\norm{\tp}\) approximates \(\norm{\tp^*}\):

\begin{theorem}\label{thm:segment-segment-approximation}
    \(\norm{\tp} \leq (1 + 3\delta)\norm{\tp^*} + \dfrac{5\delta}{nm}\).
\end{theorem}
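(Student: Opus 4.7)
The plan is to combine the bounds already established in the preceding lemmas by chaining them together. The transport plan $\tp$ is constructed as the sum of two independent pieces: $\tp_M$ handles the greedily-matched portions, and $\tp_G$ handles the remainder via the subdivision/point-reduction approach. Since the costs of disjoint transport plans add, the starting point is
\begin{equation*}
    \norm{\tp} = \norm{\tp_G} + \norm{\tp_M}.
\end{equation*}

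From here I would apply the three key bounds in sequence. First, by \cref{lem:segment-segment-bounds}, $\norm{\tp_G} \leq (1+\delta)^2 \norm{\tp^*_G}$, so the cost of our approximate solution on the remainder is controlled by the optimal cost on the remainder. Second, \cref{lem:greedy-matching-remainder-upper-bound} bounds that optimal remainder cost in terms of the true optimum: $\norm{\tp^*_G} \leq \norm{\tp^*} + \delta/(nm)$. Third, \cref{lem:greedy-matching-upper-bound} gives $\norm{\tp_M} \leq \delta/(nm)$. Substituting yields
\begin{equation*}
    \norm{\tp} \leq (1+\delta)^2 \norm{\tp^*} + (1+\delta)^2 \frac{\delta}{nm} + \frac{\delta}{nm}.
\end{equation*}

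The final step is to simplify the constants. Assuming the standing assumption $\delta \leq 1/4$ (inherited from the earlier point-to-segment analysis, where the same kind of hypothesis was required), we have $(1+\delta)^2 = 1 + 2\delta + \delta^2 \leq 1 + 3\delta$, which handles the multiplicative factor, and $(1+\delta)^2 + 1 \leq 5$, which bundles the additive pieces into the claimed $5\delta/(nm)$. Combining these two simplifications gives the stated inequality $\norm{\tp} \leq (1+3\delta)\norm{\tp^*} + 5\delta/(nm)$.

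There is no real obstacle here: all three intermediate bounds have been proved, and the proof is essentially an arithmetic assembly. The only place requiring mild care is the constant simplification at the end, where one must verify that the chosen upper bounds on $(1+\delta)^2$ are valid in the regime of $\delta$ being considered. Setting $\delta = \eps/3$ (together with rescaling the additive term) then produces the $(1+\eps)$-approximation with the small additive term $O(\eps/(nm))$ advertised in \cref{tab:results}.
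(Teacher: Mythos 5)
Your proposal is correct and follows essentially the same route as the paper: it chains Lemmas \ref{lem:segment-segment-bounds}, \ref{lem:greedy-matching-remainder-upper-bound} and \ref{lem:greedy-matching-upper-bound} through the decomposition \(\norm{\tp} = \norm{\tp_G} + \norm{\tp_M}\) and then simplifies constants using \((1+\delta)^2 \leq 1+3\delta\). The only cosmetic difference is the order of substitution and simplification, which yields the same final bound of \((1+3\delta)\norm{\tp^*} + 5\delta/(nm)\).
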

\begin{proof}
    By \cref{lem:segment-segment-bounds}, we know that \(\norm{\tp_G} \leq (1 + \delta)^2\norm{\tp^*_G}\).
    As \(\delta \leq 1\), \((1 + \delta)^2 \leq 1 + 3\delta\).
    By \cref{lem:greedy-matching-remainder-upper-bound}, we have that \(\norm{\tp^*_G} \leq \norm{\tp^*} + (1 - \ell)\delta / nm\).
    Combining the two results, we get that
    \begin{linenomath*}
    \begin{align*}
        \norm{\tp_G}    & \leq (1 + 3\delta)\norm{\tp^*_G}                                          \\
                        & \leq (1 + 3\delta)\left(\norm{\tp^*} + (1 - \ell)\frac{\delta}{nm}\right) \\
                        & \leq (1 + 3\delta)\norm{\tp^*} + (1 - \ell)\frac{\delta + 3\delta^2}{nm}  \\
                        & \leq (1 + 3\delta)\norm{\tp^*} + \frac{4\delta}{nm}                       \\
    \end{align*}
    \end{linenomath*}

    By \cref{lem:greedy-matching-upper-bound}, \(\norm{\tp_M} \leq \delta / nm\).
    As \(\norm{\tp} = \norm{\tp_G} + \norm{\tp_M}\), we obtain the bound stated in the lemma.
    \hfill \qed
\end{proof}

Setting \(\delta = \eps / 3\) gives a \((1 + \eps)\)-approximation with an additive term of \(5\eps / 3nm\)

\subsection{Running time analysis}\label{sec:segment-segment-analysis}
During the greedy matching, each \(p \in P\) may have been cut into \(m\) pieces, and each \(s \in S\) into \(n\) pieces.
As such, \(P'\) and \(S'\) (the parts remaining after greedy matching) both contain \(O(nm)\) subsegments.
In the worst case, \(P'\) and \(S'\) are close to each other everywhere, causing them to be subdivided into the smallest possible subsegments.
As the minimum distance is \(\delta / nm\), and the ratio of the longest and shortest distance between any two subsegments is \(1 + \delta\), the smallest possible subsegment has size \(\Theta(\frac{\delta^2}{nm})\).
Each subsegment of \(P'\) and \(S'\) may give rise to one extra subsegment in \(Q\) and \(R\), as the length may not be exactly divisible by \(\delta / nm\).
This gives sets \(Q\) and \(R\) a size of \(O(\frac{nm}{\delta^2} + nm) \in O(\frac{nm}{\eps^2})\), leading to the following result:

\begin{theorem}\label{thm:segment-segment-algorithm}
    Let \(P\) and \(S\) be sets of \(n\) and \(m\) line segments in the plane, both having equal total length, let \(\norm{\tp^*}\) be the cost of an optimal transport plan between them, and let \(\delta\) be any constant \(> 0\).
    Given an algorithm that constructs a \((1 + \delta)\)-approximation between weighted sets of \(k\) points in \(f_\delta(k)\) time, we can construct a transport plan between \(P\) and \(S\) with cost \(\leq (1 + c'\delta)\norm{\tp^*} + \frac{5\delta}{nm}\) for some constant \(c'\) in \(O\left(f_\delta\left(\frac{nm}{\delta^2}\right)\right)\) time with high probability.
\end{theorem}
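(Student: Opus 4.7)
The plan is to combine the approximation guarantee already established in \cref{thm:segment-segment-approximation} with a bound on the size of the point-set instance we hand off to the black-box EMD algorithm, mirroring the structure used in \cref{thm:point-segment-main-result} and \cref{thm:point-triangle-algorithm}. The algorithm itself is exactly the one described before the theorem: greedily match equal-length pieces within distance $\delta/nm$, subdivide what remains into $Q$ and $R$ so that every surviving pair has distance ratio at most $1+\delta$, pick one representative point per subsegment with weight equal to its length, invoke the given point-set approximation algorithm on the resulting instance, and finally spread the returned plan uniformly back onto the subsegments.

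For the approximation ratio, \cref{thm:segment-segment-approximation} gives the bound $(1+3\delta)\|\eta^*\| + 5\delta/(nm)$ under the assumption that the transport between the representative point sets $U$ and $V$ is computed optimally. Using only a $(1+\delta)$-approximation of that transport scales $\|\nu\|$ by an additional factor of $1+\delta$, so the chain of inequalities in \cref{lem:segment-segment-bounds} becomes $\|\eta_G\| \le (1+\delta)^3 \|\eta^*_G\|$. Plugging this into the proof of \cref{thm:segment-segment-approximation} and using $\delta \le 1$ (or, if needed, $\delta \le 1/4$, which is harmless as in the previous sections), the cubic expansion absorbs into $(1 + c'\delta)$ for a suitable constant $c'$, and the additive term stays $O(\delta/(nm))$, matching the bound claimed in the theorem.

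For the running time, the main task is to bound $|Q| + |R|$ by $O(nm/\delta^2)$, as stated in the paragraph preceding the theorem. The greedy matching cuts each segment of $P$ at most $m$ times and each segment of $S$ at most $n$ times (one cut per pairing candidate, computed segment by segment by the case analysis for supporting-line intersection already described), so $P'$ and $S'$ together contain $O(nm)$ pieces after greedy matching. During the subsequent subdivision, every remaining subsegment is at distance at least $\delta/nm$ from all others, and the cut rule stops as soon as the longest-to-shortest distance ratio drops to $1+\delta$; this forces the smallest created subsegment to have length $\Theta(\delta^2/nm)$. Since the total remaining length is at most $1$, the total number of subsegments is $O(nm/\delta^2) + O(nm) = O(nm/\delta^2)$. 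All the bookkeeping steps (computing per-pair close regions, the cuts, the representatives, and the final spreading of mass) run in time linear in the number of subsegments produced, so the bottleneck is the call to the point-set approximation algorithm on an instance of size $O(nm/\delta^2)$, giving the stated $O(f_\delta(nm/\delta^2))$ running time; the high-probability qualifier is inherited from Fox and Lu's algorithm used as the black box.

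The step I expect to require the most care is the size bound for $Q$ and $R$: unlike the point-to-segment and point-to-triangle cases, there is no per-pair exponential refinement structure to exploit, so the argument has to rely on the hard lower bound $\delta/(nm)$ on pairwise distances guaranteed by the greedy matching. This is exactly where the additive error is "paid for" by a cleaner running-time bound, and verifying that the greedy matching can actually be carried out in time $O(nm)$ per segment (via the supporting-line case analysis above) is the one place where the proof cannot simply quote an earlier lemma.
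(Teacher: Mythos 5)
Your proposal matches the paper's argument: the paper likewise obtains the cost bound from \cref{thm:segment-segment-approximation} (absorbing the extra \((1+\delta)\) factor from the approximate point-transport into the constant \(c'\), as done explicitly in \cref{thm:point-segment-main-result}), and bounds \(|Q|+|R|\) by noting that greedy matching leaves \(O(nm)\) pieces and that the minimum pairwise distance \(\delta/nm\) forces a minimum subsegment length of \(\Theta(\delta^2/nm)\), giving \(O(nm/\delta^2)\) subsegments in total. The reasoning is correct and takes essentially the same route as the paper.
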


We can again calculate a \((1 + \delta)\)-approximation to \(\nu\) in \(O(N\delta^{-O(1)}\polylog{N})\) time using the algorithm by Fox and Lu~\cite{fox2022}, giving the following corollary to the previous theorem:

\begin{corollary}\label{cor:segment-segment-algorithm}
    For any constant \(\eps > 0\), a transport plan between \(P\) and \(S\) with cost \(\leq (1 + \eps)\norm{\tp^*} + O\left(\frac{\eps}{nm}\right)\) can be constructed in \(O\left(\frac{nm}{\eps^c}\polylog\left(\frac{nm}{\eps}\right)\right)\) time with high probability.
\end{corollary}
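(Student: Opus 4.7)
The plan is to instantiate Theorem~\ref{thm:segment-segment-algorithm} with the point-to-point EMD approximation of Fox and Lu~\cite{fox2022} as the subroutine $f_\delta$. Their algorithm computes, with high probability, a $(1+\delta)$-approximation to the optimal transport plan between two weighted point sets of total size $N$ in time $f_\delta(N) = O(N\delta^{-O(1)}\polylog N)$.

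First I would set the value of $\delta$ appropriately. Theorem~\ref{thm:segment-segment-algorithm} produces a plan of cost at most $(1+c'\delta)\norm{\tp^*} + 5\delta/nm$, so choosing $\delta = \eps/c'$ (which is a constant multiple of $\eps$) yields a plan of cost at most $(1+\eps)\norm{\tp^*} + 5\eps/(c'nm)$, i.e.\ the desired $(1+\eps)$-approximation with additive term $O(\eps/nm)$.

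Next I would compute the running time by substitution. Theorem~\ref{thm:segment-segment-algorithm} calls $f_\delta$ on an input of size $O(nm/\delta^2)$, so the total time becomes
\begin{equation*}
    O\!\left(\frac{nm}{\delta^2}\cdot \delta^{-O(1)}\cdot \polylog\!\left(\frac{nm}{\delta^2}\right)\right)
    \;=\; O\!\left(\frac{nm}{\delta^{c}}\polylog\!\left(\frac{nm}{\delta}\right)\right)
\end{equation*}
for an appropriate constant $c$ that absorbs the $2$ from the $\delta^{-2}$ factor and the hidden constant from Fox and Lu's bound. Substituting $\delta=\Theta(\eps)$ then turns this into $O\!\left(\frac{nm}{\eps^c}\polylog\!\left(\frac{nm}{\eps}\right)\right)$, which matches the bound in the corollary. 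The high-probability qualifier is inherited directly from Fox and Lu's algorithm, since that is the only randomized step in the pipeline.

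There is no real obstacle here: the only minor bookkeeping is to verify that pre- and post-processing steps in Theorem~\ref{thm:segment-segment-algorithm} (greedy matching, subdivision into $Q$ and $R$, picking representative points, and spreading the computed discrete plan back onto subsegments) all run in time dominated by the call to $f_\delta$. The greedy matching and subdivision phases each touch $O(nm)$ segment-segment pairs with $O(1)$ geometric work per pair and at most $O(nm/\delta^2)$ resulting subsegments, so these steps fit within the stated running time.
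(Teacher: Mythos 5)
Your proposal is correct and matches the paper's own (largely implicit) derivation: the corollary is obtained exactly by plugging Fox and Lu's $O(N\delta^{-O(1)}\polylog N)$-time point-to-point approximation into Theorem~\ref{thm:segment-segment-algorithm} and setting $\delta = \Theta(\eps)$. Your additional bookkeeping about the additive term and the preprocessing costs is consistent with what the paper intends.
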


\section{Triangles to triangles}\label{sec:triangle-triangle}
We consider the case where \(P\) and \(S\) are both sets of triangles with total area one and longest edge length \(\Delta\).
The algorithm is completely analogous to the one for transport between sets of segments: we greedily match parts of the input within a certain distance, subdivide the remainder and approximate the optimal transport plan by reduction to a minimum cost flow.
As the setup and proofs are exactly the same as in the previous section (just substitute the integrals over segments with integrals over area), this is omitted.
All we need is an algorithm that can greedily match parts of the input within a given distance.

We do this greedy matching as follows.
We can first remove the parts where \(P\) and \(S\) overlap: they have cost zero.
We then overlay a grid with cells of size \(\delta / (2\sqrt{nm})\) onto our input, and keep only the cells that contain an edge or are adjacent to one that does (the other cells already have the desired clearance from cells containing triangles from the other set).
Inside each cell, we record the total area of triangles from \(P\) and \(S\) that lie inside it separately.
For parts of \(P\) and \(S\) that lie inside the same cell, we match as much as possible, resulting in a grid where each cell only contains parts of \(P\) or \(S\).
We then match as much of each cell as possible to each of its eight neighbours.
The maximum distance over which we have greedily matched weight is \(\sqrt{2}\delta / \sqrt{nm}\), and the remaining parts of \(P\) and \(S\) have a minimum distance of \(\delta / \sqrt{nm}\) to each other.

We can then combine the ideas of the point-to-triangle and segment-to-segment algorithm to approximate the optimal solution.
First, we overlay a uniform grid with cells of size \(\delta \times \delta\) separately for \(P\) and \(S\), and identify the cells that contain a triangle.
We then recursively subdivide each cell as long as there is any part of a triangle in the other set for which our distance ratio of \(1 + \delta\) is violated.
We track the total area of triangles contained in each cell, then approximate each cell by a point with a weight equal to this area.
After each set is approximated by points in this way, we can run our algorithm as before.
\cref{lem:greedy-subsegment-matching} can be straightforwardly modified to give the same result for triangles, except that \(l\) will be an area instead of a length.
This means that we obtain a \((1 + \eps)\)-approximation with an additive term of \(O(\eps / \sqrt{nm})\).

\subsection{Running time analysis}\label{sec:triangle-triangle-analysis}
The number of cells examined during the greedy matching is \(O(\frac{\sqrt{nm}\Delta}{\delta})\) per triangle, so \(O(\frac{\sqrt{nm}\Delta(n + m)}{\delta})\) in total.
The part of the input remaining after greedy matching can be at distance \(\delta / \sqrt{nm}\) from each other everywhere, causing it to be subdivided into cells of size \(\Theta(\frac{\delta^2}{\sqrt{nm}})\) to maintain a distance ratio of \(1 + \delta\).
There may be \(O(\frac{\sqrt{nm}\Delta}{\delta^2})\) cells that intersect the boundaries of a triangle, or \(O(\frac{\sqrt{nm}\Delta(n + m)}{\delta^2})\) in total; the other cells are interior to some triangle, and as the total area is one, there can be at most \(O(\frac{nm}{\delta^4})\) of them.
The total number of cells is therefore at most \(O(\frac{\sqrt{nm}\Delta(n+m)}{\delta^2} + \frac{nm}{\delta^4}) \in O(\frac{\sqrt{nm}\Delta(n+m)}{\delta^4}\).
This gives the following result:

\begin{theorem}\label{thm:triangle-triangle-algorithm}
    Let \(P\) be a set of \(n\) and \(S\) a set of \(m\) triangles in the plane, both having equal total area and longest edge length at most \(\Delta\) after normalising their total areas to one, let \(\norm{\tp^*}\) be the cost of an optimal transport plan, and let \(\delta\) be any constant \(> 0\).
    Given an algorithm that constructs a \((1 + \delta)\)-approximation between weighted sets of \(k\) points in \(f_\delta(k)\) time, we can construct a transport plan between \(P\) and \(S\) with cost \(\leq (1 + c'\delta)\norm{\tp^*} + O(\frac{\delta}{\sqrt{nm}})\) for some constant \(c'\) can be constructed in \(O\left(f_\delta\left(\frac{\sqrt{nm}\Delta(n + m)}{\delta^4}\right)\right)\) time.
\end{theorem}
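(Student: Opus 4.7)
The plan is to adapt the segment-to-segment framework of Section~\ref{sec:segment-segment} to triangles, replacing one-dimensional length with two-dimensional area throughout. First I would greedily match portions of \(P\) and \(S\) that lie close to each other, producing an auxiliary plan \(\tp_M\) of small total cost. Then I would bound the cost of an optimal transport plan \(\tp^*_G\) on the remainder in terms of the original \(\tp^*\) via the area-weighted analogue of \cref{lem:greedy-subsegment-matching} (the swap argument goes through verbatim, with \(l\) now denoting area). Finally I would subdivide the remaining triangles, reduce the residual transport problem to one on weighted points, and invoke Fox and Lu's algorithm. Summing \(\norm{\tp_G} + \norm{\tp_M}\) and combining the analogues of \cref{lem:greedy-matching-remainder-upper-bound,lem:greedy-matching-upper-bound,lem:segment-segment-bounds} yields the claimed \((1+c'\delta)\norm{\tp^*} + O(\delta/\sqrt{nm})\) bound.

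For the greedy matching I would first discard the parts where \(P\) and \(S\) overlap (zero cost), then overlay a uniform grid of cell size \(\delta/(2\sqrt{nm})\) and keep only cells containing a triangle edge or adjacent to such a cell; all other cells already have the required clearance. Inside each kept cell I record the area of \(P\) and \(S\) separately and cancel as much as possible, after which no cell contains mass from both sets. Then I match as much area as possible between each cell and each of its eight neighbours. This moves mass at most \(\sqrt{2}\delta/\sqrt{nm}\) and leaves residual sets \(P', S'\) whose parts have pairwise distance at least \(\delta/\sqrt{nm}\). Bounding the greedy cost by the total matched area then gives \(\norm{\tp_M} = O(\delta/\sqrt{nm})\), while the area-version of \cref{lem:greedy-subsegment-matching} gives \(\norm{\tp^*_G} \leq \norm{\tp^*} + O(\delta/\sqrt{nm})\).

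For the subdivision step I would overlay, separately for \(P'\) and \(S'\), a grid with cells of size \(\Delta \times \Delta\), identify the cells that meet a triangle (by compressed quadtree point location as in \cref{sec:point-triangle}), and recursively quarter any cell for which some piece of a triangle in the other set witnesses a distance ratio exceeding \(1+\delta\). In each final cell I store the total area of triangles it contains and represent that cell by a single weighted point. Running Fox and Lu on the two resulting weighted point sets produces a point-to-point transport \(\nu\); I lift it to a triangle-to-triangle plan \(\tp_G\) by spreading mass uniformly over the triangle pieces in each cell. The analogues of \cref{lem:lower-bound-flow,lem:lower-bound-subdivision-segment-segment,lem:segment-segment-bounds} carry over because the distance-ratio property remains valid in two dimensions, giving \(\norm{\tp_G} \leq (1+\delta)^2 \norm{\tp^*_G}\).

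The main obstacle is the cell-count bound that drives the running time. I split \(Q \cup R\) into boundary cells (those crossed by a triangle edge) and interior cells. For boundary cells, walking along each triangle's perimeter of length \(O(\Delta)\) with cell size \(\Theta(\delta^2/\sqrt{nm})\) gives \(O(\sqrt{nm}\Delta/\delta^2)\) per triangle and hence \(O(\sqrt{nm}\Delta(n+m)/\delta^2)\) in total. For interior cells the total area is at most one and each has area \(\Omega(\delta^4/nm)\), yielding \(O(nm/\delta^4)\) cells. Combining gives \(O(\sqrt{nm}\Delta(n+m)/\delta^4)\) cells overall. Feeding this count into the running time \(f_\delta(\cdot)\) of Fox and Lu and setting \(\delta = \Theta(\eps)\) to absorb the constant \(c'\) yields the theorem.
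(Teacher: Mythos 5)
Your proposal is correct and follows essentially the same route as the paper: the identical greedy-matching procedure on a grid of cell size \(\delta/(2\sqrt{nm})\), the same area-weighted adaptation of \cref{lem:greedy-subsegment-matching} and of the segment-to-segment lemmas, and the same cell count split into \(O(\sqrt{nm}\Delta(n+m)/\delta^2)\) boundary cells and \(O(nm/\delta^4)\) interior cells. In fact you spell out several steps the paper leaves implicit (it defers to \cref{sec:segment-segment} for the bounding argument), and your details match what the paper intends.
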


We can again calculate a \((1 + \delta)\)-approximation to \(\nu\) in \(O(N\delta^{-O(1)}\polylog{N})\) time using the algorithm by Fox and Lu~\cite{fox2022}, giving the following corollary to the previous theorem:

\begin{corollary}\label{cor:triangle-triangle-algorithm}
    For any constant \(\eps > 0\), a transport plan between \(P\) and \(S\) with cost \(\leq (1 + \eps)\norm{\tp^*} + O(\frac{\eps}{\sqrt{nm}})\) can be constructed in \(O\left(\frac{\sqrt{nm}\Delta(n + m)}{\eps^c}\polylog\left(\frac{nm\Delta}{\eps}\right)\right)\) time with high probability.
\end{corollary}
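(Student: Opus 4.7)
The plan is to apply Theorem~\ref{thm:triangle-triangle-algorithm} with the Fox--Lu algorithm~\cite{fox2022} serving as the required $(1+\delta)$-approximation subroutine for transport between weighted point sets. Their algorithm achieves, with high probability, $f_\delta(k) = O(k\delta^{-O(1)}\polylog k)$ on $k$ weighted points, which is exactly the interface demanded by the theorem. Every other part of the pipeline (greedy matching of cells, quadtree-style subdivision, approximation of each cell by a single representative point) is deterministic, so the overall algorithm inherits the high-probability guarantee from Fox and Lu.

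First I would choose $\delta = \eps / c'$, where $c'$ is the constant hidden in the $(1+c'\delta)$ approximation factor provided by the theorem. With this choice the multiplicative bound $(1+c'\delta)$ becomes $(1+\eps)$ and the additive bound $O(\delta/\sqrt{nm})$ becomes $O(\eps/\sqrt{nm})$, matching the two parts of the error guarantee in the corollary. Next I would substitute $k = \frac{\sqrt{nm}\Delta(n+m)}{\delta^4}$ into $f_\delta$; the resulting expression
\begin{equation*}
    O\left(\frac{\sqrt{nm}\Delta(n+m)}{\delta^4}\cdot\delta^{-O(1)}\polylog\left(\frac{\sqrt{nm}\Delta(n+m)}{\delta}\right)\right)
\end{equation*}
collapses to $O\!\left(\frac{\sqrt{nm}\Delta(n+m)}{\eps^{c}}\polylog\!\left(\frac{nm\Delta}{\eps}\right)\right)$ for an appropriate constant $c$, after merging the two $\delta$-dependent factors into a single $\delta^{-O(1)}$, substituting $\delta = \eps/c'$, and observing that $\log(\sqrt{nm}(n+m)) = O(\log(nm))$ so the argument of the polylog can be coarsened to $nm\Delta/\eps$ without changing the order of growth.

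There is no genuine mathematical obstacle: this is a bookkeeping corollary whose content is entirely in the theorem it invokes. The one piece that requires mild care is ensuring that the implicit constants from three different sources — the $c'$ from the theorem's approximation bound, the exponent $4$ from the cell count, and the $O(1)$ in the Fox--Lu exponent — are consistently absorbed into the single constant $c$ of the corollary, and that the polylog's argument is simplified in a way that loses only constant factors inside the logarithm. Once those substitutions are made, both the error guarantee and the running-time bound stated in the corollary fall out directly.
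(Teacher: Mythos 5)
Your proposal is correct and follows essentially the same route as the paper, which likewise obtains the corollary by plugging the Fox--Lu algorithm ($f_\delta(k) = O(k\delta^{-O(1)}\polylog k)$, correct with high probability) into Theorem~\ref{thm:triangle-triangle-algorithm} and rescaling $\delta$ by the constant $c'$. The paper states this in a single sentence; your write-up just makes the bookkeeping explicit.
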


\section{Higher dimensions}\label{sec:higher-dimensions}
In this section we show how our approach can be extended to work in \(d\)-dimensional space.
We discuss the case of transporting mass from points to \(d\)-dimensional simplices, and from one set of simplices to another.

\subsection{Points to simplices}\label{sec:point-simplex}
The approach described here is a direct extension of the one detailed in \cref{sec:point-triangle}.
Let \(P\) be a set of \(n\) weighted points in \(d\) dimensions with total mass one, and let \(S\) be a set of \(m\) \(d\)-dimensional simplices with total volume one and longest edge length \(\Delta\).
We start by overlaying an infinite grid of size \(\Delta\) and identifying the cells intersected by any simplex in \(O(dm\log(m) + 6^dd^2d^dm)\) time using compressed quadtrees~\cite{vanderhoog2018}.
We then repeatedly subdivide each cell until the ratio between the shortest and longest distance is at most \(1 + \delta\) for all points in \(P\), or until it is wholly within \(\delta / (nm)^{1/d}\) of any point in \(P\).
Call the resulting set of cells \(Q\).
We can again show that picking one point in each cell of \(Q\) with weight equal to the total volume of simplices contained in it, and then solving a transport problem between \(P\) and the resulting set of points, approximates the transport problem between \(P\) and \(S\).

As the structure of the proof is very similar to that contained in \cref{sec:point-triangle}, we omit some of the intermediate lemmas here.
We start with the lower bound on the cost of a minimum cost flow \(\mathcal{W}\) in the bipartite graph \(G = (P \cup Q, P \times Q)\):
\begin{lemma}\label{lem:lower-bound-subdivision-point-simplex}
    \(\norm{\mathcal{W}} \geq \dfrac{\delta}{(nm)^{1/d}} - \dfrac{\delta(2(\delta + \delta^2))^d}{(nm)^{1/d}}\).
\end{lemma}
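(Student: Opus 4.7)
The plan is to follow the template of the point-to-triangle argument (\cref{lem:lower-bound-subdivision-point-triangle}), replacing the two-dimensional disk bound by a $d$-dimensional cube bound. The quantity to control is the total volume of cells in $Q$ that are ``close'' to some point of $P$, where by close I mean that the cell contains a point within distance $\delta/(nm)^{1/d}$ of $p$; every other piece of $Q$ contributes mass that must travel at least $\delta/(nm)^{1/d}$ in $\mathcal{W}$, which immediately yields a cost lower bound.

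First I would fix an arbitrary point-simplex pair $(p, s) \in P \times S$ and consider the cells of $Q$ intersecting $s$ that are close to $p$. By the stopping rule used to build $Q$, such a cell is not subdivided further, so its farthest point from $p$ lies within $(1+\delta)\cdot \delta/(nm)^{1/d} = (\delta+\delta^2)/(nm)^{1/d}$ of $p$. Consequently, all these cells are contained in the closed ball of radius $(\delta+\delta^2)/(nm)^{1/d}$ around $p$, and therefore also in the axis-aligned $d$-cube of side $2(\delta+\delta^2)/(nm)^{1/d}$ centred at $p$. Since the cells of $Q$ are pairwise disjoint, their combined volume for this pair is at most the volume of this cube, which is $(2(\delta+\delta^2))^d/(nm)$.

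Summing over all $nm$ point-simplex pairs bounds the total close volume by $(2(\delta+\delta^2))^d$, so at least $1 - (2(\delta+\delta^2))^d$ of the unit of mass must be transported over distance at least $\delta/(nm)^{1/d}$ in $\mathcal{W}$. Multiplying gives $\norm{\mathcal{W}} \geq (1 - (2(\delta+\delta^2))^d) \cdot \delta/(nm)^{1/d}$, which rearranges to the bound in the lemma.

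The only subtlety is the choice of a clean enclosing shape: using the bounding $d$-cube instead of the $d$-ball avoids the dimension-dependent constant $\pi^{d/2}/\Gamma(d/2+1)$ that would otherwise appear in the final expression, while remaining tight enough because the ball is a subset of its bounding cube. Beyond that, the argument is a direct transcription of the planar version.
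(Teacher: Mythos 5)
Your proof is correct and follows essentially the same route as the paper's: bound the far point of each ``close'' cell by $(\delta+\delta^2)/(nm)^{1/d}$, enclose the close cells in a hypercube of side $2(\delta+\delta^2)/(nm)^{1/d}$ to bound their total volume by $(2(\delta+\delta^2))^d/nm$ per pair, sum over the $nm$ pairs, and charge the remaining mass a distance of at least $\delta/(nm)^{1/d}$. Your explicit appeal to the disjointness of the cells of $Q$ and your remark about preferring the bounding cube to the ball match the paper's reasoning (the latter is noted there after the approximation theorem), so no further comment is needed.
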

\begin{proof}
    For a given point-simplex pair \((p, s) \in P \times S\), consider the cells in \(Q\) intersecting \(s\) that have a point within distance \(\delta / (nm)^{1/d}\) of \(p\).
    Such a cell has its furthest point at most at distance \((\delta + \delta^2) / (nm)^{1/d}\).
    The total volume of these cells is then \((2(\delta + \delta^2))^d / nm\) (the volume of a \(d\)-dimensional hypercube with radius \((\delta + \delta^2) / (nm)^{1/d}\), which contains the hypersphere with the same radius).
    Over all point-simplex pairs, this gives a volume of at most \((2(\delta + \delta^2))^d\), leaving \(1 - (2(\delta + \delta^2))^d\) with distance at least \(\delta / (nm)^{1/d}\) in \(\mathcal{W}\).
    The cost is therefore at least \((1 - (2(\delta + \delta^2))^d) \cdot \delta / (nm)^{1/d} = \delta / (nm)^{1/d} - \delta(2(\delta + \delta^2))^d / (nm)^{1/d}\).
    \hfill \qed
\end{proof}
We again approximate \(\norm{\mathcal{W}}\) by reducing the flow problem to a transportation problem between two sets of weighted points.
We do this by picking a point in each cell \(q \in Q\) and giving it a weight equal to the total volume of simplices contained in \(q\).
Call this set of points \(T\):
\begin{lemma}\label{lem:approximate-flow-point-simplex}
    The cost \(\norm{\nu}\) of an optimal transport plan \(\nu\) between \(P\) and \(T\) satisfies \(\norm{\mathcal{W}} \leq \norm{\nu} \leq (1 + \delta)\norm{\mathcal{W}} + \dfrac{\delta(2(\delta + \delta^2))^d}{(nm)^{1/d}}\).
\end{lemma}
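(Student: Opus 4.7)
The plan is to mirror the argument in the proof of \cref{lem:point-triangle-bounds}, adapted to $d$ dimensions and reusing the volume estimates already developed in \cref{lem:lower-bound-subdivision-point-simplex}.

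The lower bound $\norm{\mathcal{W}} \leq \norm{\nu}$ follows directly from the definition of $G$: any transport plan between $P$ and $T$ induces a flow in $G$ of no greater cost, since the edge weights in $G$ are shortest distances between a point $p \in P$ and a cell $q \in Q$, and these are at most the distances from $p$ to the representative point chosen inside $q$.

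For the upper bound, I would take $\mathcal{W}$ and build a candidate transport plan for $(P, T)$ by redirecting, for every cell $q \in Q$, all mass that $\mathcal{W}$ assigns to $q$ to the representative point $t \in T$ derived from $q$. Each unit of mass then travels a distance at most equal to the distance from its source in $P$ to the furthest point of its destination cell, and $\norm{\nu}$ is bounded by the cost of this candidate plan. I would split this cost into two groups according to whether the furthest-point distance from the source to the destination cell is at least $\delta / (nm)^{1/d}$ or strictly smaller. In the ``far'' group the subdivision invariant guarantees that the ratio of furthest to shortest distance is at most $1 + \delta$, so this contribution is at most $(1 + \delta)\norm{\mathcal{W}}$. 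In the ``near'' group, the total mass routed is bounded by exactly the hypercube-containing-hypersphere argument from the preceding lemma: the cells contributing here lie inside a hypersphere of radius $(\delta + \delta^2)/(nm)^{1/d}$ around $p$, which fits in a hypercube of volume $(2(\delta+\delta^2)/(nm)^{1/d})^d$, so summing over the $nm$ point-simplex pairs gives a total ``near'' mass of at most $(2(\delta+\delta^2))^d$. Since each such unit moves a distance of at most $\delta/(nm)^{1/d}$, the additive term is $\delta(2(\delta+\delta^2))^d/(nm)^{1/d}$, matching the statement.

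I expect the argument to be essentially routine, being a direct dimensional analog of the two-dimensional case. The only minor point of care is correctly aggregating the ``near'' bound across all point-simplex pairs so that the per-pair factor $1/nm$ cancels against the $nm$ pairs; but this is the same aggregation already carried out in \cref{lem:lower-bound-subdivision-point-simplex}, so no new geometric or combinatorial obstacle appears.
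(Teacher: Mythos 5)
Your proposal is correct and follows essentially the same route as the paper: the lower bound from the fact that the edge weights in \(G\) are closest-point distances, and the upper bound by charging each cell at its furthest-point distance, splitting into the far part (ratio at most \(1+\delta\), contributing \((1+\delta)\norm{\mathcal{W}}\)) and the near part (mass at most \((2(\delta+\delta^2))^d\) by the hypercube volume bound aggregated over all point--simplex pairs, each unit moving at most \(\delta/(nm)^{1/d}\)). The only cosmetic difference is that you make explicit the candidate plan obtained by rerouting \(\mathcal{W}\) to the representative points, which the paper leaves implicit.
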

\begin{proof}
    The lower bound follows directly from the fact that none of the distances in \(\nu\) are smaller than the distances between the same objects in \(\mathcal{W}\) (recall that the distances in \(\mathcal{W}\) are measured to the closest point on the cell).
    We can upper bound \(\norm{\nu}\) by measuring all distances to the furthest point in each cell.
    By construction, those distances are at most \(1 + \delta\) times the distance to the closest point when the furthest distance is at least \(\delta / (nm)^{1/d}\).
    We can therefore bound all parts of \(\nu\) where the distance is at least \(\delta / (nm)^{1/d}\) by \((1 + \delta)\norm{\mathcal{W}}\).
    The total mass being moved over distance at most \(\delta / (nm)^{1/d}\) in \(\nu\) is at most \((2(\delta + \delta^2))^d\), giving a cost of \(\delta(2(\delta + \delta^2))^d / (nm)^{1/d}\).
    The total cost when measuring to the furthest point is therefore \((1 + \delta)\norm{\mathcal{W}} + \delta(2(\delta + \delta^2))^d / (nm)^{1/d}\).
    \hfill \qed
\end{proof}
As before, we turn \(\nu\) into a valid transport plan \(\tp\) between \(P\) and \(Q\) by spreading the mass moved to each point in \(T\) out evenly over the parts of simplices contained in the cell of \(Q\) the point was derived from.
By the same argument used in \cref{lem:approximate-flow-point-simplex}, we obtain the following bound on the cost of \(\tp\):
\begin{lemma}\label{lem:upper-bound-discretised-point-simplex}
    \(\norm{\tp} \leq (1 + \delta)\norm{\nu} + \dfrac{\delta(2(\delta + \delta^2))^d}{(nm)^{1/d}}\).
\end{lemma}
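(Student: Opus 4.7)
The plan is to mirror the proof of \cref{lem:approximate-flow-point-simplex} almost verbatim, applying the same kind of two-sided distance analysis, but now to the operation of replacing each Dirac mass at $t \in T$ with a uniform distribution over the simplex-parts in the corresponding cell $q \in Q$. Since for every mass transfer from a point $p \in P$ to $t$ the resulting cost in $\tp$ is simply the \emph{average} distance from $p$ to the spread mass inside $q$, this average is bounded above by the distance $d_{\max}(p,q)$ from $p$ to the farthest point of $q$. So the whole argument reduces to comparing $d_{\max}(p,q)$ against the actual transport distance $d(p,t)$ used in $\nu$.

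The first step is to partition, for each pair $(p,q)$ receiving positive mass $\mu_{p,q}$ in $\nu$, into two regimes based on whether $d_{\max}(p,q) \ge \delta/(nm)^{1/d}$ or not. For the ``far'' regime, by the construction of $Q$, the stopping condition forces the ratio $d_{\max}(p,q)/d_{\min}(p,q) \le 1+\delta$, and since $d(p,t) \ge d_{\min}(p,q)$ we immediately obtain
\begin{equation*}
    d_{\max}(p,q) \;\le\; (1+\delta)\,d_{\min}(p,q) \;\le\; (1+\delta)\,d(p,t).
\end{equation*}
Summing $\mu_{p,q}\,d_{\max}(p,q)$ over all far pairs then bounds the corresponding part of $\norm{\tp}$ by $(1+\delta)$ times the corresponding part of $\norm{\nu}$, and hence by $(1+\delta)\norm{\nu}$.

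For the ``near'' regime, one observes that such cells have every point within distance $\delta/(nm)^{1/d}$ of $p$, so they form a subset of the near cells already analysed in \cref{lem:lower-bound-subdivision-point-simplex}. The total volume of these cells across all point--simplex pairs is therefore at most $(2(\delta+\delta^2))^d$, which bounds the total mass $\sum \mu_{p,q}$ landing in the near regime. Since each unit of this mass contributes at most $d_{\max}(p,q) < \delta/(nm)^{1/d}$ to $\norm{\tp}$, the total contribution is at most $\delta(2(\delta+\delta^2))^d/(nm)^{1/d}$. Adding the two regimes yields the claimed inequality.

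The main obstacle is essentially bookkeeping rather than a conceptual difficulty: one needs to make sure that the subdivision's stopping criterion (either ratio $\leq 1+\delta$ or the cell being wholly within $\delta/(nm)^{1/d}$ of some point) is correctly applied to the specific point $p$ under consideration, so that the $(1+\delta)$ ratio really does kick in for all ``far'' cells. If a cell stopped only because it lies within $\delta/(nm)^{1/d}$ of some other point $p' \neq p$, then it already has diameter $O(\delta/(nm)^{1/d})$, so its $d_{\max}(p,q)$ differs from $d_{\min}(p,q)$ by at most this diameter and the ratio argument still goes through (absorbing the extra slack into the near-term, which is already an $O(\delta^d/(nm)^{1/d})$ contribution). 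Apart from that, the argument is identical in structure to \cref{lem:approximate-flow-point-simplex}.
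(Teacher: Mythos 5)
Your proposal is correct and matches the paper's intended argument: the paper states this lemma without a separate proof, explicitly appealing to ``the same argument used in \cref{lem:approximate-flow-point-simplex}'', which is precisely the far/near split you carry out (ratio bound of $1+\delta$ for cells at distance at least $\delta/(nm)^{1/d}$, volume-times-distance bound for the rest). Your closing remark about cells that stopped subdividing only because of a \emph{different} point $p'$ addresses a subtlety the paper glosses over, and your resolution of it is sound.
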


Putting this all together, we can show that \(\norm{\tp}\) approximates the cost of the optimal transport plan \(\norm{\tp^*}\):
\begin{theorem}\label{thm:point-simplex-approximation}
    \(\norm{\tp}\) is a \((1 + 21\delta)\)-approximation to the earth mover's distance \(\norm{\tp^*}\) between \(P\) and \(S\) for \(0 < \delta \leq \frac{1}{5}\).
\end{theorem}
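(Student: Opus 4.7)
The plan is to mirror the structure used in the proof of \cref{thm:point-triangle-approximation}. First I would chain together \cref{lem:approximate-flow-point-simplex} and \cref{lem:upper-bound-discretised-point-simplex} to obtain an upper bound on \(\norm{\tp}\) in terms of \(\norm{\mathcal{W}}\), namely
\begin{equation*}
    \norm{\tp} \leq (1+\delta)^2\norm{\mathcal{W}} + (2+\delta)\cdot \frac{\delta(2(\delta+\delta^2))^d}{(nm)^{1/d}}.
\end{equation*}
Since \(\norm{\mathcal{W}} \leq \norm{\tp^*}\) by \cref{lem:lower-bound-flow}, and \(\norm{\tp^*} \leq \norm{\tp}\) trivially, it suffices to upper-bound the ratio \(\norm{\tp}/\norm{\mathcal{W}}\).

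The ratio is a decreasing function of \(\norm{\mathcal{W}}\), so I would plug in the lower bound from \cref{lem:lower-bound-subdivision-point-simplex}. Writing \(A = (2(\delta+\delta^2))^d\) for brevity, the \(\delta/(nm)^{1/d}\) factors in the numerator and denominator cancel, and the resulting ratio simplifies to
\begin{equation*}
    \frac{(1+\delta)^2(1-A) + (2+\delta)A}{1-A} \;=\; (1+\delta)^2 + \frac{(2+\delta)A}{1-A}.
\end{equation*}
The remaining task is then to bound this expression by \(1 + 21\delta\) whenever \(\delta \leq 1/5\).

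The main obstacle, and the only place where the simplex setting differs nontrivially from the triangle setting, is that \(A\) depends on the dimension \(d\). To obtain a bound that is uniform in \(d\), I would use the observation that under the hypothesis \(\delta \leq 1/5\) we have \(2(\delta+\delta^2) = 2\delta(1+\delta) \leq \tfrac{12}{25} < 1\), so \(A \leq 2(\delta+\delta^2)\) for every \(d \geq 1\), with the worst case at \(d = 1\). This immediately gives \(1 - A \geq 13/25\) and \((2+\delta)A \leq (2+\delta)\cdot 2\delta(1+\delta)\), and the ratio above can then be bounded by a quantity of the form \(1 + c\delta\) with \(c \leq 21\) through routine algebraic manipulation analogous to the chain of inequalities at the end of the proof of \cref{thm:point-triangle-approximation}.

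Finally, combining this with \(\norm{\mathcal{W}} \leq \norm{\tp^*}\) from \cref{lem:lower-bound-flow} yields \(\norm{\tp} \leq (1 + 21\delta)\norm{\tp^*}\), completing the proof. The constant \(21\) is somewhat loose; any constant strictly larger than what the calculation at \(d=1\) forces would work, but \(21\) gives comfortable slack while keeping the algebra clean.
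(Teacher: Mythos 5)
Your proposal is correct and follows essentially the same route as the paper's proof: chain \cref{lem:approximate-flow-point-simplex} and \cref{lem:upper-bound-discretised-point-simplex} to get \(\norm{\tp} \leq (1+\delta)^2\norm{\mathcal{W}} + (2+\delta)\delta(2(\delta+\delta^2))^d/(nm)^{1/d}\), divide by the lower bound from \cref{lem:lower-bound-subdivision-point-simplex}, and control the dimension-dependent term by noting that \(2(\delta+\delta^2) < 1\) so that \((2(\delta+\delta^2))^d\) is bounded by its value at small \(d\) (the paper uses \((2(\delta+\delta^2))^{d-1}\leq 1\) and \(1-(2(\delta+\delta^2))^d\geq 1/2\), which is the same device). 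The only difference is that you defer the final arithmetic, which indeed closes comfortably under \(1+21\delta\) for \(\delta \leq 1/5\).
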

\begin{proof}
    We follow the same structure as \cref{thm:point-triangle-approximation}, obtaining the following ratio, into which we plug the lower bound from \cref{lem:lower-bound-subdivision-point-simplex}:
    \begin{linenomath*}
    \begin{align*}
                & \quad \frac{(1 + \delta)^2\norm{\mathcal{W}} + \frac{2\delta(2(\delta + \delta^2))^d}{(nm)^{1/d}} + \frac{\delta^2(2(\delta + \delta^2))^d}{(nm)^{1/d}}}{\norm{\mathcal{W}}} \\
        \leq    & \quad \frac{(1 + 2\delta + \delta^2)\left(\frac{\delta}{(nm)^{1/d}} - \frac{\delta(2(\delta + \delta^2))^d}{(nm)^{1/d}}\right) + \frac{2\delta(2(\delta + \delta^2))^d}{(nm)^{1/d}} + \frac{\delta^2(2(\delta + \delta^2))^d}{(nm)^{1/d}}}{\frac{\delta}{(nm)^{1/d}} - \frac{\delta(2(\delta + \delta^2))^d}{(nm)^{1/d}}} \\
        =       & \quad \frac{1 + 2\delta + \delta^2 + (2(\delta + \delta^2))^d - \delta(2(\delta + \delta^2))^d - \delta^2(2(\delta + \delta^2))^d}{1 - (2(\delta + \delta^2))^d} \\
        =       & \quad 1 + \delta + \frac{\delta + \delta^2 + 2(2(\delta + \delta^2))^d - \delta^2(2(\delta + \delta^2))^d}{1 - (2(\delta + \delta^2))^d} \\
        \leq    & \quad 1 + \delta + \frac{\delta + \delta^2 + 2(2(\delta + \delta^2))((2(\delta + \delta^2)))^{d-1}}{1 - (2(\delta + \delta^2))^d} \\
        \leq    & \quad 1 + \delta + \frac{\delta + \delta^2 + 8\delta}{1 - \frac{1}{2^d}} \tag{Assuming \((\delta + \delta^2) \leq \frac{1}{4}\)} \\
        \leq    & \quad 1 + \delta + \frac{9\delta + \delta^2}{\frac{1}{2}} \\
        <       & \quad 1 + 21\delta \\
    \end{align*}
    \end{linenomath*}

    As \(\norm{\mathcal{W}}\) is also a lower bound for \(\norm{\tp}\) (\cref{lem:lower-bound-flow}), and \(\tp\) can obviously not have lower cost than the optimal transport plan, this gives a \((1 + 21\delta)\)-approximation.
    \hfill \qed
\end{proof}

Note that the constant in our approximation is slightly worse than the one obtained in \cref{thm:point-triangle-approximation}; this is because we approximate the volume of a hypersphere by the volume of its bounding cube, whereas before we could calculate the area of the disk exactly.

\subsubsection{Running time analysis}\label{sec:point-simplex-analysis}
We construct a structure similar to \cref{sec:point-triangle-analysis}, then argue that the number of cells in our actual subdivision is similar.
Our construction is the direct generalisation of the one described before: we build a layered structure of cells of increasing sizes.
Let \(R_{i,j}\) be the set of cells generated by point \(p_i\) and simplex \(s_j\); we now analyse how many cells are created.

\begin{lemma}\label{lem:point-simplex-alternative-count}
    \(R = \bigcup R_{i,j}\) has \(O\left(\dfrac{dd^{d/2}nm}{\delta^d}\log\dfrac{(nm)^{1/d}\Delta}{\delta}\right)\) cells.
\end{lemma}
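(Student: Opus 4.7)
The plan is to follow the structure of \cref{lem:point-triangle-alternative-count} (the two-dimensional case) and extend the counting to \(d\) dimensions. First, I would set \(\beta = \delta/(nm)^{1/d}\) and \(\gamma = \delta^2/(\sqrt{d}\,(nm)^{1/d})\). The factor \(\sqrt{d}\) in \(\gamma\) reflects the fact that a \(d\)-dimensional cube of side \(s\) has diameter \(s\sqrt{d}\); so to preserve the ratio argument used in the planar case, cells must be shrunk by an extra \(\sqrt{d}\) factor. For a fixed pair \((p_i,s_j)\), I would describe \(R_{i,j}\) as before: an inner hypercube of side \(2\beta\) filled with cells of size \(\gamma\), together with concentric hypercubical shells around \(p_i\) whose cell size doubles as soon as the \((1+\delta)\) ratio continues to hold.

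Next, let \(\alpha_\ell\) be the number of shells of cells of size \(2^\ell\gamma\) at level \(\ell\). Essentially the same algebraic manipulation used in \cref{lem:point-triangle-alternative-count} gives \(\alpha_\ell = \sqrt{d}/\delta\) for every \(\ell\), and the maximum level satisfies \(k = O(\log((nm)^{1/d}\Delta/\delta))\), determined by the requirement to cover a hypercube of side \(\Delta\). The dimension-dependent factor enters in the per-level cell count: a hypercubical shell of radius \(r \approx 2^\ell\beta\) whose thickness is a single cell of side \(2^\ell\gamma\) contains \(O\!\left(d\,(r/(2^\ell\gamma))^{d-1}\right)=O\!\left(d\,(\beta/\gamma)^{d-1}\right)=O\!\left(d\cdot d^{(d-1)/2}/\delta^{d-1}\right)\) cells, the factor \(d\) arising from the \(2d\) facets of the surrounding hypercube. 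Multiplying by \(\alpha_\ell = \sqrt{d}/\delta\) shells per level gives \(O\!\left(d\cdot d^{d/2}/\delta^d\right)\) cells per level per pair. Summing over \(k+1 = O(\log((nm)^{1/d}\Delta/\delta))\) levels and over all \(nm\) point-simplex pairs then yields the stated bound.

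The main obstacle I anticipate is verifying that cells that are not aligned along a coordinate direction through \(p_i\) still satisfy the \((1+\delta)\) ratio. In the 2D proof this was handled by a short algebraic argument showing that one additional halving suffices. I would generalise this to \(d\) dimensions by writing, for a cell whose nearest corner lies at offset \((x_1,\dots,x_d)\) from \(p_i\) with \(\|x\|=r\), the furthest-to-nearest ratio as \(\sqrt{\sum_j(x_j+s)^2}/r\), and bounding the cross terms \(2s\sum_j x_j \le 2s\sqrt{d}\,r\) via Cauchy--Schwarz together with \(ds^2\) from the squared terms. Requiring \(s \le \delta r/\sqrt{d}\) then makes the ratio at most \(1+\delta\), which is exactly what our definition of \(\gamma\) guarantees after the \(\sqrt{d}\) shrinkage; any additional constant overhead from corner cases is absorbed by the \(O(\cdot)\). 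The rest of the bookkeeping (a factor \(2^d\) for orthants around \(p_i\), constant inflation from one extra subdivision, etc.) is dominated by the \(d^{d/2}\) term and disappears into the big-O.
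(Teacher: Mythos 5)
Your proposal is correct and follows essentially the same approach as the paper: the same layered orthant-by-orthant count with $\alpha$ shells per doubling level, $k=O(\log((nm)^{1/d}\Delta/\delta))$ levels, and the same triangle-inequality/Cauchy--Schwarz argument showing that shrinking the cell side by a factor $\sqrt{d}$ restores the $(1+\delta)$ ratio for off-axis cells. The only difference is bookkeeping: you fold the $\sqrt{d}$ into the base cell size $\gamma$ up front (so $\alpha_\ell=\sqrt{d}/\delta$), whereas the paper counts with $\gamma=\delta^2/(nm)^{1/d}$ and $\alpha_i=1/\delta$ and multiplies by the $O(d^{d/2})$ refinement factor at the end; both yield the stated bound.
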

\begin{proof}
    We follow the structure of the proof of \cref{lem:point-triangle-count}, again counting the number of cells in one ``quadrant'', and then multiplying by the number of quadrants (\(2^d\)).
    The number of cells in a quadrant is
    \begin{linenomath*}
    \begin{equation*}
        \left(\frac{\beta}{\gamma}\right)^d + \sum_{i = 0}^{k}{d\alpha_i \cdot \left(\frac{\beta + \sum_{j = 0}^{i - 1}{\alpha_j 2^j \gamma}}{2^i \gamma}\right)^{d - 1} + \alpha_i^d}
    \end{equation*}
    \end{linenomath*}
    where \(\alpha_i\) is the number of layers of cells that have doubled in size \(i\) times, \(\beta\) is the distance inside of which we use the smallest cell size (\(\delta / (nm)^{1/d}\)), \(\gamma\) is the smallest cell size (\(\delta^2 / (nm)^{1/d}\)), and \(k\) is the number of times we need to double the cell size.

    The value of each \(\alpha_i\) is \(1 / \delta\): this value is exact along any axis, and we can show that all cells can be made to have the correct ratio with a given number of extra subdivisions.
    For a given cell \(r \in R_{i,j}\), let \(\mathbf{v}\) be the vector from \(p_i\) to the closest point on \(r\), and let \(\gamma'\) be the edge length of cell \(q\).
    W.l.o.g. assume that \(v_0 = \max v_i\); by construction this gives us that \((v_0 + \gamma') / v_0 \leq 1 + \delta\).
    Let \(\mathbf{u}\) be the vector from the closest point on \(q\) to the furthest point; we want to find a value \(x\) such that \(\frac{\norm{\mathbf{v} + \mathbf{u} / x}}{\norm{\mathbf{v}}} \leq 1 + \delta\).
    Through the triangle inequality, we can upper bound the distance to the furthest point on \(q\) as \(\norm{\mathbf{v}} + \norm{\mathbf{u} / x}\).
    We can now calculate the required value of \(x\):
    \begin{linenomath*}
    \begin{align*}
        \frac{\norm{\mathbf{v}} + \norm{\frac{\mathbf{u}}{x}}}{\norm{\mathbf{v}}}   & \leq 1 + \delta \\
        \frac{\norm{\mathbf{u}}}{x\norm{\mathbf{v}}}                                & \leq \delta \\
        \frac{\gamma'\sqrt{d}}{x\norm{\mathbf{v}}}                                  & \leq \delta \\
        \frac{\gamma'\sqrt{d}}{xv_0}                                                & \leq \delta \\
        \frac{\sqrt{d}}{x}\delta                                                    & \leq \delta \\
        x                                                                           & \geq \sqrt{d}
    \end{align*}
    \end{linenomath*}
    So all cells have the correct ratio if their edge length is reduced by a factor of at least \(\sqrt{d}\), which means each cell needs to be replaced by at most \(O(d^{d/2})\) cells.

    This gives us the following derivation for the number of cells:
    \begin{linenomath*}
    \begin{align*}
            & \quad \frac{1}{\delta^d} + \sum_{i = 0}^{k}{\frac{d}{\delta} \cdot \left(\frac{\frac{\delta}{(nm)^{1/d}} + \sum_{j = 0}^{i - 1}{2^j \frac{\delta}{(nm)^{1/d}}}}{2^i \frac{\delta^2}{(nm)^{1/d}}}\right)^{d-1} + \frac{1}{\delta^d}} \\
        =   & \quad \frac{1}{\delta^d} + \frac{k}{\delta^d} + \frac{d}{\delta}\sum_{i = 0}^{k}{\left(\frac{\frac{\delta}{(nm)^{1/d}} + \frac{\delta}{(nm)^{1/d}}(2^i - 1)}{2^i \frac{\delta^2}{(nm)^{1/d}}}\right)^{d-1}} \\
        =   & \quad \frac{1}{\delta^d} + \frac{k}{\delta^d} + \frac{d}{\delta}\sum_{i = 0}^{k}{\left(\frac{\frac{\delta}{(nm)^{1/d}}}{\frac{\delta^2}{(nm)^{1/d}}}\right)^{d-1}} \\
        =   & \quad \frac{1}{\delta^d} + \frac{k}{\delta^d} + \frac{d}{\delta} \cdot \frac{k}{\delta^{d-1}} \\
        =   & \quad \frac{1}{\delta^d} + \frac{k}{\delta^d} + \frac{kd}{\delta^d} \\
        \in & \quad O\left(\frac{kd}{\delta^d}\right)
    \end{align*}
    \end{linenomath*}

    The value of \(k\) is derived in the same way as before, giving \(k \in O(\log((nm)^{1/d}\Delta / \delta))\).
    This gives \(O(\frac{d}{\delta^d}\log\frac{(nm)^{1/d}\Delta}{\delta})\) cells per point-simplex pair, where each cell needs to be divided into \(O(d^{d/2})\) smaller cells, for a total of \(O(\frac{dd^{d/2}nm}{\delta^d}\log\frac{(nm)^{1/d}\Delta}{\delta})\) cells.
    \hfill \qed
\end{proof}

\begin{lemma}\label{lem:point-simplex-count}
    \(Q\) has \(O\left(\dfrac{5^ddd^{d/2}nm}{\delta^d}\log\dfrac{(nm)^{1/d}\Delta}{\delta}\right)\) cells.
\end{lemma}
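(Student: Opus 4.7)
The plan is to mirror the two-dimensional argument of \cref{lem:point-triangle-count} in a dimension-independent way, with the overlap constant degrading from $5^2 = 25$ to $5^d$. More precisely, I will show that every cell $q \in Q$ that intersects a given $r \in R$ has edge length at least one quarter of the edge length of $r$, and then combine this edge-length comparison with \cref{lem:point-simplex-alternative-count} to obtain the stated size bound on $Q$.

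The key step is the edge-length bound. Assume, for contradiction, that some $q \in Q$ with $q \cap r \neq \emptyset$ has edge length strictly less than that of $r$ divided by four. Then the immediate parent of $q$ in the dyadic subdivision used to produce $Q$ has edge length strictly less than half that of $r$. By the construction of $R$ analysed in the proof of \cref{lem:point-simplex-alternative-count}, any axis-aligned hypercube of edge length at most that of $r$ lying in the same region as $r$ already satisfies the $(1+\delta)$-distance-ratio condition relative to every $p_i \in P$; this is precisely what the extra $\sqrt{d}$-refinement carried out in that proof ensures. The parent of $q$, being strictly smaller still, therefore also satisfies the condition, contradicting the fact that the recursive construction of $Q$ only subdivides a cell when the ratio is violated by some point of $P$.

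Given the edge-length bound, the overlap count is immediate: the cells of $Q$ are pairwise disjoint axis-aligned hypercubes, so along each of the $d$ coordinate axes at most five distinct cells of $Q$, each with edge length at least that of $r$ divided by four, can simultaneously overlap the coordinate strip spanned by $r$. Hence at most $5^d$ cells of $Q$ intersect any given $r \in R$. Multiplying this $5^d$ factor by the bound on the number of cells of $R$ supplied by \cref{lem:point-simplex-alternative-count} yields the stated bound
\[
O\!\left(\frac{5^d d\, d^{d/2} n m}{\delta^d} \log\frac{(nm)^{1/d}\Delta}{\delta}\right)
\]
on the number of cells of $Q$.

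The main obstacle is the edge-length comparison: one must carefully relate the dyadic subdivision used to build $Q$ to the layered doubling structure used to build $R$, and in particular verify that the additional $\sqrt{d}$-refinement applied to the $R$-cells still leaves enough slack for the ``subdivided unnecessarily'' contradiction to go through. Once that is granted, the counting is purely combinatorial and the final bound follows directly from \cref{lem:point-simplex-alternative-count}.
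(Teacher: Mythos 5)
Your proposal is correct and follows essentially the same route as the paper's proof: a cell \(q \in Q\) overlapping \(r \in R\) must have edge length at least a quarter of that of \(r\) (else \(q\) was subdivided unnecessarily), hence at most \(5^d\) cells of \(Q\) meet any \(r\), and the bound follows by multiplying with \cref{lem:point-simplex-alternative-count}. Your expanded justification of the edge-length step is in the same spirit as (and no less rigorous than) the paper's one-line version.
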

\begin{proof}
    Consider any cell \(r \in R\).
    As before, any cell \(q \in Q\) that overlaps with \(r\) has \(\norm{q} \geq \norm{r}/4^d\): otherwise \(q\) was subdivided unnecessarily.
    As the cells in \(Q\) are disjoint, it follows that \(r\) can overlap with at most \(5^d\) cells in \(Q\).
    As such, \(Q\) contains at most \(5^d\) times more cells than \(R\), which, by \cref{lem:point-simplex-alternative-count}, is \(O\left(\frac{dd^{d/2}nm}{\delta^d}\log\frac{(nm)^{1/d}\Delta}{\delta}\right)\).
    \hfill \qed
\end{proof}

Combined with the time required to build the quadtree that we use to find the starting cells of our subdivision, this gives the following result:

\begin{theorem}\label{thm:point-simplex-algorithm}
    Let \(P\) be a set of \(n\) weighted points and \(S\) be a set of \(m\) simplices in \(\mathbb{R}^d\) with equal total weight, let \(\Delta\) be the longest edge length in \(S\) after normalising its total volume to one, let \(\norm{\tp^*}\) be the cost of an optimal transport plan between \(P\) and \(S\), and let \(\delta\) be any constant \(> 0\).
    Given an algorithm that constructs a \((1 + \delta)\)-approximation between weighted sets of \(k\) points in \(f_\delta(k)\) time, we can construct a transport plan between \(P\) and \(S\) with cost \(\leq (1 + 21\delta)\norm{\tp^*}\) in \(O\left(6^dd^2d^dm + f_\delta\left(\frac{5^ddd^{d/2}nm}{\delta^d}\log\frac{(nm)^{1/d}\Delta}{\delta}\right)\right)\) time.
\end{theorem}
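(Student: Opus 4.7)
The plan is to assemble the three-phase algorithm described informally above in this section and then combine the bounds already proved in the supporting lemmas. Correctness (the $(1+21\delta)$-approximation guarantee) is already delivered by \cref{thm:point-simplex-approximation} applied to the transport plan $\tp$ obtained from the reduction to weighted points, so the statement is really a packaging claim: I need to show that the reduction itself can be carried out within the stated time bound.

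First I would spell out the algorithm: (i) use the compressed-quadtree construction of~\cite{vanderhoog2018} to identify the grid cells of side $\Delta$ intersected by some simplex of $S$; (ii) recursively subdivide each such cell until either the ratio of shortest to longest distance to every $p \in P$ is at most $1+\delta$ or the cell lies entirely within $\delta/(nm)^{1/d}$ of some $p$, yielding the set $Q$; (iii) tag each $q \in Q$ with a representative point weighted by the total simplex-volume contained in $q$, call the point set $T$, and invoke the given $(1+\delta)$-approximation on $P \cup T$ in $f_\delta(|Q|)$ time to obtain $\nu$; and (iv) lift $\nu$ to a transport plan $\tp$ between $P$ and $S$ by spreading, within each cell, the mass assigned to the representative uniformly over the simplex fragments in that cell.

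For the running time, the compressed quadtree reports the cells intersecting each $d$-simplex in $O(6^d d^2 d^d)$ time per simplex, contributing the additive term $O(6^d d^2 d^d m)$. The subdivision is driven entirely by the refinement criterion, so its output size dominates its cost, and \cref{lem:point-simplex-count} bounds $|Q|$ by $O\!\left(\tfrac{5^d d\, d^{d/2} nm}{\delta^d}\log\tfrac{(nm)^{1/d}\Delta}{\delta}\right)$. Feeding $|Q|$ into the approximate transport oracle gives the second term in the claimed bound. The lift in step (iv) touches each cell a constant number of times and is subsumed.

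The one subtlety, and the step that warrants the closest bookkeeping, is the interaction between the $(1+\delta)$-approximation used in step (iii) and the approximation ratio derived in \cref{thm:point-simplex-approximation}, which there is stated for an \emph{exact} optimal $\nu$. Plugging $(1+\delta)\norm{\nu}$ in place of $\norm{\nu}$ into \cref{lem:upper-bound-discretised-point-simplex} introduces an extra $(1+\delta)$ factor in the numerator of the ratio computation of \cref{thm:point-simplex-approximation}; redoing that calculation, exactly as is done in the passage from \cref{thm:point-segment-approximation} to \cref{thm:point-segment-main-result}, still yields a bound of the form $(1+c'\delta)\norm{\tp^*}$ for a slightly larger constant $c'$. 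Absorbing $c'$ by rescaling $\delta$ by a constant recovers the stated $(1+21\delta)$ bound without affecting the asymptotic running time, and this is the only place where the proof does anything beyond stitching the earlier results together.
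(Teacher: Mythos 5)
Your proposal is correct and matches the paper's (implicit) argument: the paper gives no explicit proof for this theorem, simply combining the quadtree construction time, the cell count from \cref{lem:point-simplex-count}, and the approximation guarantee of \cref{thm:point-simplex-approximation}, exactly as you do. Your additional bookkeeping for the interaction between the approximate solver and the ratio in \cref{thm:point-simplex-approximation} is in fact more careful than the paper, which handles this issue explicitly only in the point-to-segment case (\cref{thm:point-segment-main-result}) and silently retains the constant $21$ here.
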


Note that if we set \(d = 2\), this is the same running time as the one obtained in \cref{thm:point-triangle-algorithm}.
We can again calculate a \((1 + \delta)\)-approximation to \(\nu\) in \(O(N\delta^{-O(d)}\log^{O(d)}{N})\) time using the algorithm by Fox and Lu~\cite{fox2022}.
Setting \(\delta = \eps/21\), this gives a total running time of
\begin{linenomath*}
\begin{align*}
    & \quad O\left(6^dd^2d^dm + \frac{105^ddd^{d/2}nm}{\eps^{O(d)}} \log^{O(d)}\left(\frac{5^ddnm\Delta}{\eps^d}\right)\right) \\
    \in & \quad O\left(6^dd^2d^dm + \frac{105^dd^2d^{d/2}nm}{\eps^{O(d)}} \log^{O(d)}\left(\frac{dnm\Delta}{\eps^d}\right)\right) \text{.}
\end{align*}
\end{linenomath*}

\begin{corollary}\label{cor:point-simplex-algorithm}
    For any constant \(\eps > 0\), a transport plan between \(P\) and \(S\) with cost \(\leq (1 + \eps)\norm{\tp^*}\) can be constructed in \(O\left(6^dd^2d^dm + \frac{105^dd^2d^{d/2}nm}{\eps^{O(d)}} \log^{O(d)}\left(\frac{dnm\Delta}{\eps^d}\right)\right)\) time with high probability.
\end{corollary}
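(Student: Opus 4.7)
The plan is to derive this as a direct instantiation of Theorem~\ref{thm:point-simplex-algorithm}. That theorem already delivers a $(1 + 21\delta)$-approximation in time $O(6^dd^2d^dm + f_\delta(N))$, where $N = O\left(\frac{5^ddd^{d/2}nm}{\delta^d}\log\frac{(nm)^{1/d}\Delta}{\delta}\right)$ is the size of the weighted point set produced by the subdivision (Lemma~\ref{lem:point-simplex-count}), and $f_\delta(k)$ is the running time of any $(1+\delta)$-approximation algorithm for weighted point-set EMD. So the work left is purely to choose a concrete subroutine, set $\delta$ in terms of $\eps$, and simplify the resulting expression.

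First I would invoke the algorithm of Fox and Lu~\cite{fox2022}, which, with high probability, computes a $(1+\delta)$-approximation to the EMD between two weighted point sets of total size $N$ in $\R^d$ in $O(N\delta^{-O(d)}\log^{O(d)} N)$ time. This gives us a concrete $f_\delta$ to substitute. Next I would choose $\delta = \eps/21$: since the approximation factor from Theorem~\ref{thm:point-simplex-algorithm} is $(1 + 21\delta)$, this yields the desired $(1+\eps)$-approximation, and since $21$ is an absolute constant, the powers of $1/\delta$ and $1/\eps$ differ only by a constant factor absorbed into the $O(d)$ in the exponent. Correctness (the high-probability guarantee on the approximation ratio) is then inherited directly from Fox and Lu's algorithm, which is the only randomised component in the pipeline.

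The final step is simplification of the running time. Substituting $N$ into $f_\delta(N) = O(N\delta^{-O(d)}\log^{O(d)} N)$ gives
\begin{equation*}
    O\!\left(\frac{5^ddd^{d/2}nm}{\delta^d}\log\!\frac{(nm)^{1/d}\Delta}{\delta} \cdot \delta^{-O(d)} \log^{O(d)}\!\left(\frac{5^ddd^{d/2}nm}{\delta^d}\log\!\frac{(nm)^{1/d}\Delta}{\delta}\right)\right).
\end{equation*}
Absorbing the extra $\delta^{-d}$ factor into $\delta^{-O(d)}$, combining $5^d \cdot 21^{O(d)}$ into $105^{O(d)}$, and noting that the logarithm inside $\log^{O(d)}(\cdot)$ simplifies (up to constants inside the $\log$) to $\log(dnm\Delta/\eps^d)$, yields the stated bound $O\!\left(\frac{105^d d^2 d^{d/2} nm}{\eps^{O(d)}} \log^{O(d)}\!\left(\frac{dnm\Delta}{\eps^d}\right)\right)$. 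Finally, I would add back the additive quadtree-construction term $O(6^d d^2 d^d m)$ from Theorem~\ref{thm:point-simplex-algorithm}, which is independent of $\eps$ and $n$.

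The only mildly delicate point, and really the only place anything could go wrong, is keeping track of how the various constants ($5^d$, $d$, $d^{d/2}$, the $21^{O(d)}$ introduced by $\delta \to \eps/21$) combine under the $\eps^{-O(d)}$ and $\log^{O(d)}$ asymptotics; no new geometric or algorithmic idea is required beyond the observation that Fox--Lu applies off the shelf to the point-set instance produced by Theorem~\ref{thm:point-simplex-algorithm}.
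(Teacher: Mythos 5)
Your proposal matches the paper's own derivation exactly: the paper also instantiates \cref{thm:point-simplex-algorithm} with the Fox--Lu algorithm as the subroutine \(f_\delta\), sets \(\delta = \eps/21\), and simplifies the resulting expression (absorbing the \(5^d\) and \(21^{O(d)}\) factors into \(105^d\) and \(\eps^{-O(d)}\), and collapsing the inner logarithm to \(\log(dnm\Delta/\eps^d)\)) to obtain the stated bound. No gaps; this is the intended argument.
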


\subsection{Simplices to simplices}\label{sec:simplex-simplex}
The approach from \cref{sec:segment-segment,sec:triangle-triangle} can also be extended to work on \(d\)-dimensional simplices in \(d\) dimensions.
We take the same approach of overlaying a grid with cells of size \(\delta/(4(nm)^{1/d})\) onto the input and greedily matching the parts of \(P\) and \(S\) that are close together.
The maximum distance over which we greedily match weight is then \(\sqrt{d}\delta/(2(nm)^{1/d})\), and the remaining parts of \(P\) and \(S\) have minimum distance \(\delta/(2(nm)^{1/d})\) to each other.
We then approximate the transport plan between the remaining cells with a minimum cost flow.
The same analysis still works, and we obtain a \((1 + \eps)\)-approximation with an additive term of \(O(\sqrt{d}\eps/(nm)^{1/d})\).

\subsubsection{Running time analysis}\label{sec:simplex-simplex-analysis}
The number of cells examined during the greedy matching is \(O(\frac{(nm)^{1/d}\Delta^d}{\delta})\) per simplex, so \(O(\frac{(nm)^{1/d}\Delta^d(n+m)}{\delta})\) in total (note that we simplify the analysis by simply considering the volume of a \(d\)-dimensional cube of side length \(\Delta\)).
The part of the input remaining after greedy matching can be close to each other everywhere, causing it to be subdivided into cells of size \(\Theta(\frac{\delta^2}{\sqrt{d}(nm)^{1/d}})\).
The total number number of these cells is then \(O(\frac{\sqrt{d}(nm)^{1/d}\Delta^d(n+m)}{\delta^2})\).
This gives the following result:

\begin{theorem}\label{thm:simplex-simplex-algorithm}
    Let \(P\) be a set of \(n\) and \(S\) a set of \(m\) \(d\)-dimensional simplices in \(\mathbb{R}^d\), both having equal total volume and longest edge length at most \(\Delta\) after normalising their total volumes to one, let \(\norm{\tp^*}\) be the cost of an optimal transport plan between them, and let \(\delta\) be any constant \(> 0\).
    Given an algorithm that constructs a \((1 + \delta)\)-approximation between weighted sets of \(k\) points in \(f_\delta(k)\) time, we can construct a transport plan between \(P\) and \(S\) with cost \(\leq (1 + c'\delta)\norm{\tp^*} + O(\frac{\sqrt{d}\delta}{(nm)^{1/d}})\) for some constant \(c'\) can be constructed in \(O\left(f_\delta\left(\frac{\sqrt{d}(nm)^{1/d}\Delta^d(n+m)}{\delta^2}\right)\right)\) time.
\end{theorem}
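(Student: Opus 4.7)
The plan is to lift the two-dimensional triangle-to-triangle construction of Section~\ref{sec:triangle-triangle} verbatim to \(\mathbb{R}^d\), using axis-aligned \(d\)-cubes in place of grid squares. First I would overlay a regular grid of hypercubes of side length \(\delta/(4(nm)^{1/d})\), record the total remaining \(P\)-volume and \(S\)-volume in each cell, and greedily match as much mass as possible both within each cell and between each cell and its \(3^d-1\) lattice neighbours. The diameter of a hypercube is \(\sqrt{d}\delta/(4(nm)^{1/d})\), so the maximum distance over which any mass is moved during the greedy phase is at most \(\sqrt{d}\delta/(2(nm)^{1/d})\), while any mass pair between \(P\) and \(S\) that remains must lie at distance at least \(\delta/(2(nm)^{1/d})\).

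For the unmatched remainder I would follow the points-to-simplices framework of Section~\ref{sec:point-simplex}: overlay a coarser grid of side length \(\Delta\), locate the intersected root cells with the compressed quadtrees of~\cite{vanderhoog2018}, and recursively subdivide each cell into \(2^d\) children so long as there is some opposing mass for which the ratio of closest-to-farthest distance to the cell exceeds \(1+\delta\). Recording the opposite-set volume inside each leaf, I would pick an arbitrary representative point per leaf carrying that volume as weight, feed the resulting two weighted point sets to the black-box \((1+\delta)\)-approximation, and finally spread the mass assigned to each representative uniformly across the corresponding cell in the original simplex decomposition.

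Correctness is inherited from the segment-to-segment argument. The \(d\)-dimensional analogue of Lemma~\ref{lem:greedy-subsegment-matching}, whose proof uses only the triangle inequality and a measure-preserving bijection between two equal-volume sets, bounds the greedy-matching overhead by the product of total matched mass and maximum matching distance, giving the additive term \(O(\sqrt{d}\delta/(nm)^{1/d})\). On the remainder, the bipartite minimum-cost-flow lower bound of Lemma~\ref{lem:lower-bound-flow} combined with the \(1+\delta\) ratio built into the subdivision yields, exactly as in Lemma~\ref{lem:segment-segment-bounds}, a multiplicative factor \((1+\delta)^2 \leq 1 + c'\delta\) with respect to the post-matching optimum, which itself is within the additive error of \(\norm{\tp^*}\). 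The running-time bound follows by summing the \(O((nm)^{1/d}\Delta^d(n+m))\) boundary-intersecting cells seen at the start with the leaves generated by refinement: in the adversarial configuration where \(P\) and \(S\) lie close to each other throughout, the leaves are forced down to edge length \(\Theta(\delta^2/(\sqrt{d}(nm)^{1/d}))\), yielding the \(O(\sqrt{d}(nm)^{1/d}\Delta^d(n+m)/\delta^2)\) input size handed to \(f_\delta(\cdot)\).

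The main obstacle I anticipate is purely bookkeeping: propagating the \(\sqrt{d}\) factors consistently through the diameter of cells, the closest-to-farthest ratio check (where one must reuse the extra \(\sqrt{d}\)-factor subdivision of Lemma~\ref{lem:point-simplex-alternative-count} to make sure the ratio property holds for \emph{every} point of a cell and not merely along coordinate axes), and the boundary-versus-interior cell accounting needed to keep the leaf count at \(O(\sqrt{d}(nm)^{1/d}\Delta^d(n+m)/\delta^2)\) rather than the naive \(\Omega(d^{d/2}nm/\delta^{2d})\) one would get from volume alone. The underlying geometric ingredients -- the swap lemma, the flow lower bound, and the spread-out lemma -- are all dimension-independent and transfer without modification.
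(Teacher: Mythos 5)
Your proposal follows essentially the same route as the paper: the paper's own treatment of this theorem is a brief sketch that lifts the triangle-to-triangle algorithm to \(d\) dimensions with exactly the grid size \(\delta/(4(nm)^{1/d})\), neighbour-based greedy matching, and subsequent refinement that you describe, and it invokes the same swap lemma, flow lower bound, and spread-out argument for correctness. Your cell-size and cell-count accounting (including the \(\sqrt{d}\) factors and the \(O(\sqrt{d}(nm)^{1/d}\Delta^d(n+m)/\delta^2)\) bound handed to \(f_\delta\)) matches the paper's, which is itself no more detailed on the leaf-count step than you are.
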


We can again calculate a \((1 + \delta)\)-approximation to \(\nu\) in \(O(N\delta^{-O(d)}\log^{O(d)}{N})\) time using the algorithm by Fox and Lu~\cite{fox2022}, giving the following corollary to the previous theorem:

\begin{corollary}\label{cor:simplex-simplex-algorithm}
    For any constant \(\eps > 0\), a transport plan between \(P\) and \(S\) with cost \(\leq (1 + \eps)\norm{\tp^*} + O(\frac{\sqrt{d}\eps}{(nm)^{1/d}})\) can be found in \(O\left(\frac{\sqrt{d}(nm)^{1/d}\Delta^d(n+m)}{\eps^{O(d)}}\log^{O(d)}\left(\frac{d(nm)^{1/d}\Delta^d}{\eps}\right)\right)\) time with high probability.
\end{corollary}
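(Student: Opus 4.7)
The plan is to derive the corollary directly from Theorem~\ref{thm:simplex-simplex-algorithm} by instantiating the generic point-set subroutine with a concrete algorithm and then rescaling~$\delta$ so that the final multiplicative factor is $1+\eps$. First I would take the construction of the transport plan from the theorem as a black box: it produces a plan of cost at most $(1+c'\delta)\norm{\tp^*} + O(\sqrt{d}\delta/(nm)^{1/d})$ in $O\!\left(f_\delta\!\left(\frac{\sqrt{d}(nm)^{1/d}\Delta^d(n+m)}{\delta^2}\right)\right)$ time, for any $(1+\delta)$-approximation routine $f_\delta$ on weighted point sets.

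Next I would instantiate $f_\delta$ with the Fox--Lu algorithm~\cite{fox2022}, which computes a $(1+\delta)$-approximate optimal transport between two weighted point sets of total size $N$ in $d$-dimensional Euclidean space in $O\!\left(N\,\delta^{-O(d)}\log^{O(d)} N\right)$ time with high probability. Substituting $N = \frac{\sqrt{d}(nm)^{1/d}\Delta^d(n+m)}{\delta^2}$ into this bound gives a running time of
\begin{equation*}
    O\!\left(\frac{\sqrt{d}(nm)^{1/d}\Delta^d(n+m)}{\delta^{O(d)}}\log^{O(d)}\!\left(\frac{\sqrt{d}(nm)^{1/d}\Delta^d(n+m)}{\delta^2}\right)\right),
\end{equation*}
where the extra $\delta^{-O(1)}$ factor from the $\delta^{-2}$ in $N$ is absorbed into the $\delta^{-O(d)}$ already present. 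The argument of the logarithm can then be simplified by noting that $n+m = O(nm)$ and that $\sqrt{d} \leq d$, so the whole log argument is bounded by a polynomial in $d(nm)^{1/d}\Delta^d/\delta$, which reduces the log factor to $\log^{O(d)}(d(nm)^{1/d}\Delta^d/\delta)$.

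Finally I would set $\delta = \eps/c'$, which is a constant rescaling and therefore only changes the hidden constants in both the $\delta^{-O(d)}$ and $\log^{O(d)}$ terms. This turns the multiplicative factor $(1+c'\delta)$ into $(1+\eps)$ and the additive term $O(\sqrt{d}\delta/(nm)^{1/d})$ into $O(\sqrt{d}\eps/(nm)^{1/d})$, matching the statement of the corollary. The ``high probability'' qualifier is inherited directly from the Fox--Lu algorithm, since that is the only randomised component of the pipeline.

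There is no real obstacle here beyond bookkeeping: the only mildly delicate point is verifying that replacing the exact optimum $\nu$ by a $(1+\delta)$-approximation in the analysis behind Theorem~\ref{thm:simplex-simplex-algorithm} still yields a $(1+O(\delta))$ overall approximation with the same additive term, analogously to the argument given for segments in the proof of Theorem~\ref{thm:point-segment-main-result}. Since that extra factor of $(1+\delta)$ is already folded into the constant $c'$ of Theorem~\ref{thm:simplex-simplex-algorithm}, no further calculation is needed and the corollary follows.
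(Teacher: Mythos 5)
Your proposal is correct and follows essentially the same route as the paper: the paper likewise obtains the corollary by instantiating the subroutine in Theorem~\ref{thm:simplex-simplex-algorithm} with the Fox--Lu algorithm and rescaling \(\delta\) by the constant \(c'\), with the \((n+m)\) and \(\sqrt{d}\) factors inside the logarithm absorbed into the \(\log^{O(d)}\) term exactly as you describe. The delicate point you flag (that an approximate \(\nu\) suffices) is likewise handled in the paper only implicitly via the unspecified constant \(c'\), so no discrepancy arises.
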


\section{Conclusion}
We have provided approximation algorithms to the earth mover's distance between sets of points, line segments, triangles and \(d\)-dimensional simplices.
These are the first combinatorial algorithms with a provable approximation ratio for this problem when the objects are continuous rather than discrete points.

Here we described the case where the total mass is spread uniformly over the available length or area.
However, our approach also works when this is not the case.
If the ratio of densities is bounded, the same running times hold; otherwise, this ratio will show up in the running times the same way that the longest edge length does for cases involving triangles.

We note that for points and line segments (in any dimension), the approximation scheme is free from undesired parameters, whereas for points and triangles (or simplices), the maximum edge length \(\Delta\) appears in the running time, and when neither set is a set of points, an additive term appears in the approximation.
The most interesting open question is whether either of these two artifacts can be avoided.

\bibliographystyle{plain}
\bibliography{bibliography}

\end{document}